\documentclass[sigconf]{acmart}


%
\usepackage{amssymb,amsfonts,amsmath,amsthm,amscd,dsfont,mathrsfs,mathtools}

\usepackage{comment}
\usepackage{graphicx,subcaption}
\usepackage{varwidth}
\usepackage{IEEEtrantools}
\usepackage[shortlabels]{enumitem}

\setcounter{secnumdepth}{3} 
\setcounter{tocdepth}{4}

\newtheorem{thm}{Theorem}[section]
\newtheorem{defn}[thm]{Definition}

\newcommand{\bpf}{\begin{proof}}
\newcommand{\epf}{\end{proof}}

\newcommand{\beqa}{\begin{eqnarray}}

\newcommand{\eeqa}{\end{eqnarray}}

\newcommand{\beqan}{\begin{eqnarray*}}

\newcommand{\eeqan}{\end{eqnarray*}}

\newcommand{\beq}{\begin{equation}}

\newcommand{\eeq}{\end{equation}}

\renewcommand{\C}{{\cal C}}

\renewcommand{\tt}{\mathcal{T}}

\newcommand{\singlespacing}{\let\CS=\@currsize\renewcommand{\baselinestretch}{0.95}\tiny\CS}

\newcommand{\oneandahalfspacing}{\let\CS=\@currsize\renewcommand{\baselinestretch}{1.25}\tiny\CS}

\newcommand{\doublespacing}{\let\CS=\@currsize\renewcommand{\baselinestretch}{1.39}\tiny\CS}





\newcommand{\be}{\begin{equation}}

\newcommand{\ee}{\end{equation}}

\newcommand{\bc}{\begin{center}}

\newcommand{\ec}{\end{center}}

\newcommand{\bfl}{\begin{flushleft}}

\newcommand{\efl}{\end{flushleft}}

\makeatletter
\renewcommand\subsubsection{\@startsection{subsubsection}{3}{\z@}%
                       {-18\p@ \@plus -4\p@ \@minus -4\p@}%
                       {4\p@ \@plus 2\p@ \@minus 2\p@}%
                       {\normalfont\normalsize\bfseries\boldmath
                        \rightskip=\z@ \@plus 8em\pretolerance=10000 }}

\renewcommand{\H}{{\mathcal{H}}}
\renewcommand{\C}{{\mathcal{C}}}

\newcommand{\T}{{\mathcal{T}}}

\makeatother

\renewcommand{\epsilon}{\varepsilon}

\renewcommand{\tt}{\mathcal{T}}
\newcommand{\zz}{\mathcal{Z}}

\newcommand{\I}{{\mathcal I}}
\newcommand{\E}{{\mathcal E}}
\newcommand{\W}{{\mathcal W}}
\newcommand{\J}{{\mathcal J}}

\usepackage{algorithm}
\usepackage[noend]{algpseudocode}
\usepackage{xcolor}
\definecolor{azure}{rgb}{0.54, 0.17, 0.89}

\usepackage{enumitem,kantlipsum}

\usepackage{graphicx}
\usepackage{appendix}
\usepackage{float}
\usepackage{xcolor}
\usepackage{amsthm}
\usepackage{geometry,enumitem}

\settopmatter{printacmref=false} 
\renewcommand\footnotetextcopyrightpermission[1]{}  
\pagestyle{plain}

\begin{document}
\fancyhead{}


\title{Everything is a Race and Nakamoto Always Wins}
%

\author{Amir Dembo}
\affiliation{Stanford University}
\email{amir@math.stanford.edu}

\author{Sreeram Kannan}
\affiliation{University of Washington}
\email{ksreeram@uw.edu}

\author{Ertem Nusret Tas}
\affiliation{Stanford University}
\email{nusret@stanford.edu}

\author{David Tse}
\affiliation{Stanford University}
\email{dntse@stanford.edu}

\author{Pramod Viswanath}
\affiliation{University of Illinois Urbana-Champaign}
\email{pramodv@illinois.edu}

\author{Xuechao Wang}
\affiliation{University of Illinois Urbana-Champaign}
\email{xuechao2@illinois.edu}
 
\author{Ofer Zeitouni}
\affiliation{Weizmann Institute of Science}
\email{ofer.zeitouni@weizmann.ac.il}

\thanks{The authors are listed alphabetically. For correspondence on the paper, please contact DT at dntse@stanford.edu.}





\begin{abstract}
Nakamoto invented the longest chain protocol, and claimed its security by analyzing the private double-spend attack, a race between the adversary and the honest nodes to grow a longer chain. But is it the worst attack? We answer the question in the affirmative for three classes of longest chain protocols, designed for different consensus models: 1) Nakamoto's original Proof-of-Work protocol; 2) Ouroboros and SnowWhite Proof-of-Stake protocols; 3) Chia Proof-of-Space protocol. As a consequence, exact characterization of the maximum tolerable adversary power is obtained for each protocol as a function of the average block time normalized by the network delay. The security analysis of these protocols is performed in a unified manner by a novel method of reducing all attacks to a race between the adversary and the honest nodes.


\end{abstract}

\maketitle
%
%
%

\section{Introduction}

\subsection{Background}

In 2008, Satoshi Nakamoto invented the concept of {\em blockchains} as a technology for maintaining decentralized ledgers \cite{bitcoin}. A core contribution of this work is the {\em longest chain} protocol, a deceptively simple consensus algorithm. Although invented in the context of Bitcoin and its Proof-of-Work (PoW) setting, the longest chain protocol has been adopted in many blockchain projects, as well as extended to other more energy-efficient settings such as Proof-of-Stake (PoS) (eg. \cite{bentov2016snow}, \cite{kiayias2017ouroboros},\cite{david2018ouroboros},\cite{badertscher2018ouroboros},\cite{fan2018scalable}) and Proof-of-Space (PoSpace) (eg. \cite{abusalah2017beyond,cohen2019chia,park2018spacemint}). 

Used to maintain a ledger for a valued asset in a permissionless environment, the most important property of the longest chain protocol is its {\em security}: how much resource does an adversary need to attack the protocol and revert transactions already confirmed? Nakamoto analyzed this property by proposing a specific attack: the private double-spend attack (Figure \ref{fig:tree_partition}(a)). The adversary grows a private chain of blocks in a race to attempt to outpace the public longest chain and thereby replacing it after a block in the public chain becomes  $k$-deep. Let $\lambda_h$ and $\lambda_a$ be the rate at which the honest nodes and the adversary mine blocks, proportional to their respective hashing powers. Then it is clear from a law of large numbers argument that if $\lambda_a > \lambda_h$, then the adversary will succeed with high probability no matter how large $k$ is. Conversely, if $\lambda_a < \lambda_h$, the probability of the adversary succeeding decreases exponentially with $k$.  When there is a network delay of $\Delta$ between honest nodes, this condition for security becomes:
\begin{equation}
\label{eq:private_threhold}
    \lambda_a < \lambda_{{\rm growth}}(\lambda_h, \Delta),
\end{equation}
where $\lambda_{{\rm growth}}(\lambda_h, \Delta)$ is the growth rate of the honest chain under worst-case forking. In a fully decentralized setting with many honest nodes each having small mining power, \cite{ghost} calculates this to be $\lambda_{\rm growth} = \lambda_h/(1+\lambda_h \Delta)$. If we let $\beta$ to be the adversary fraction of power, then (\ref{eq:private_threhold}) yields the following condition:
\begin{equation}
\label{eq:private2}
    \beta < \frac{1-\beta}{1+(1-\beta) \lambda \Delta}.
\end{equation}
Here, $\lambda$ is the total mining rate, and $\lambda \Delta$ is the number of blocks mined per network delay. $1/(\lambda \Delta)$ is the block speed normalized by the network delay. Solving (\ref{eq:private2}) at equality gives a security threshold $\beta_{\rm pa}(\lambda \Delta)$. When $\lambda \Delta$ is small,  $\beta_{pa}(\lambda\Delta) \approx 0.5$, and this leads to Nakamoto's main claim in \cite{bitcoin}: the longest chain protocol is secure as long as the adversary has less than $50\%$ of the total hashing power and the mining rate is set to be low. A more aggressive mining rate to speed up the blockchain reduces the security threshold. Hence (\ref{eq:private2}) can be viewed as a tradeoff between security and block speed.

The private double-spend attack is a {\em specific} attack, and  Nakamoto claimed security based on the analysis of this attack alone. But what about other attacks? Are there other worse attacks? A pertinent question after Nakamoto's work is the identification of the {\em true} security threshold $\beta^*(\lambda \Delta)$ in the face of the {\em worst} attack. The groundbreaking work \cite{backbone} first addressed  this question by formulating and performing a formal security analysis of the Proof-of-work longest chain protocol. They used a lock-step round-by-round synchronous model, and the analysis was later extended to the more realistic $\Delta$-synchronous model \cite{pss16}. The results show that when $\lambda \Delta \rightarrow 0$, indeed $\beta^*(\lambda \Delta)$ approaches $50\%$, thus validating Nakamoto's intuition. However, for $\lambda  \Delta > 0$, there is a gap between their bounds and the private attack security threshold, and this gap grows when $\lambda\Delta$ grows. 


\subsection{Main contribution}

\begin{figure}
    \centering
    \includegraphics[width=0.45\textwidth]{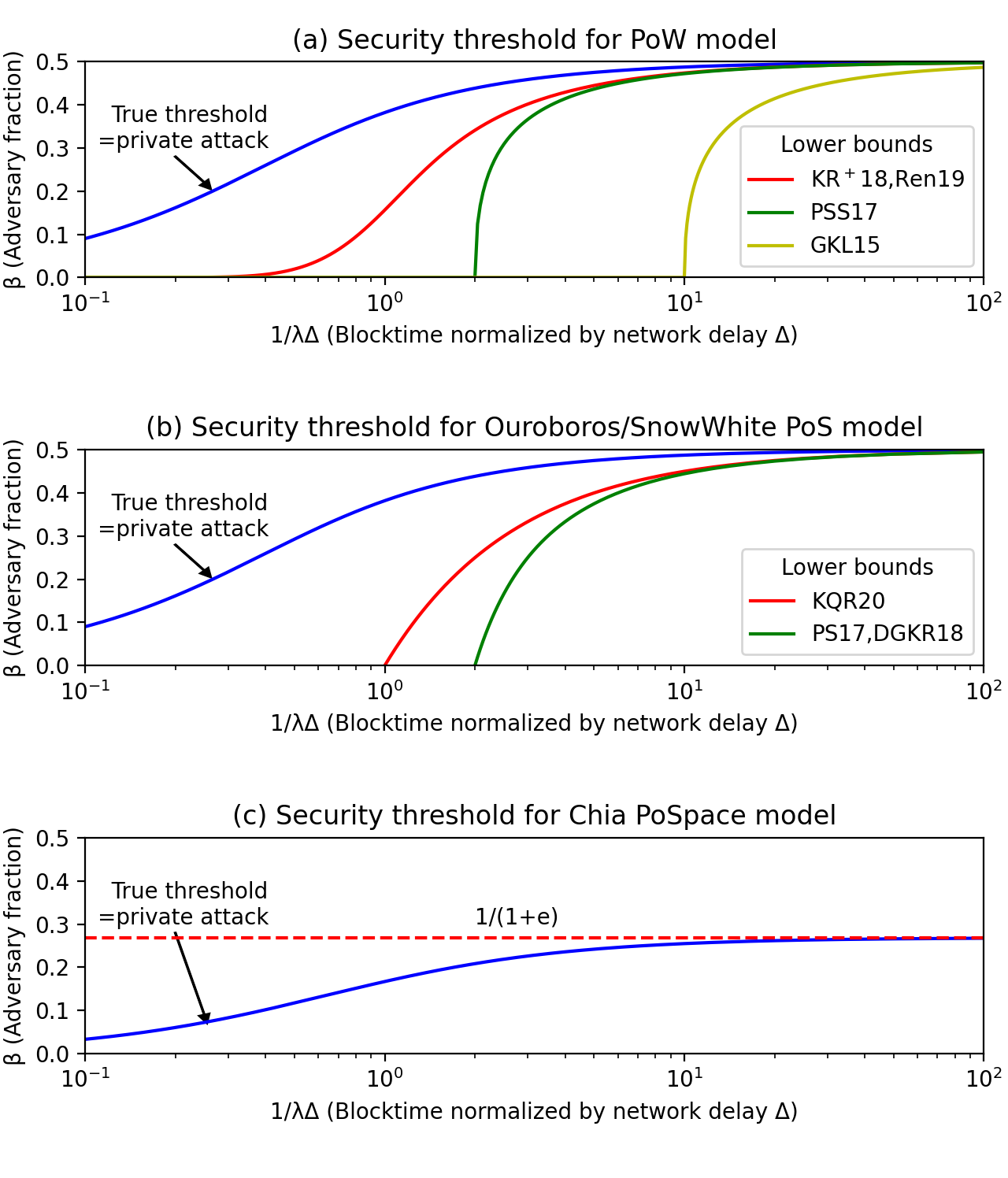}
    \caption{True security threshold as a function of normalized block speed, compared to bounds in the literature. (a) Proof-of-work model; (b)  Ouroboros/SnowWhite Proof-of-Stake model; (c) Chia Proof-of-Space model. In (a) and (b), the blue curve represents $\beta^*(\lambda\Delta) = \beta_{\rm pa}(\lambda\Delta)$; both PoW and PoS have the same (true) security threshold. In (a), the red, green and yellow curves are obtained by solving $\beta = (1-\beta)e^{-2(1-\beta)\lambda\Delta}$, $\beta = (1-\beta)(1-2\lambda\Delta(1-\beta))$ and $\beta = (1-\beta)(1-10\lambda\Delta(1-\beta))$ respectively. In (b), the red and green curves are $(1-\beta)/(1+\lambda\Delta) = 1/2$ and $(1-\beta)(1-\lambda\Delta) = 1/2$ respectively. In (c), the blue curve is the solution of $ e\beta = \frac{1-\beta}{1+(1-\beta)\lambda \Delta}$, the true threshold, and also that of private attack. Unlike in (a) and (b), the true threshold does not reach $0.5$ when $\lambda \Delta \rightarrow 0$, but reach $1/(1+e)$ instead. Note that while in all cases , the true security threshold equals the private attack threshold, the threshold is different for Chia than for the other two.}
    \label{fig:security_threshold}
\end{figure}

The main contribution of this work is a new approach to the security analysis of longest chain protocols. This approach is driven by the question of whether the private attack is the worst attack for longest chain protocols in a broad sense. Applying this approach to analyze three classes of longest chain protocols in the $\Delta-$synchronous model\cite{pss16}, we answer this question in the affirmative in all cases: {\bf the {\em true} security threshold is the same as the private attack threshold}:
\begin{equation}
    \beta^*(\lambda \Delta) = \beta_{\rm pa}(\lambda \Delta) \quad \mbox{for all $\lambda \Delta \ge 0$}
\end{equation} (Figure \ref{fig:security_threshold}). The three classes are: 1) the original Nakamoto PoW protocol; 2) Ouroboros Praos \cite{david2018ouroboros} and SnowWhite \cite{sleepy,bentov2016snow} PoS protocols; 3) Chia PoSpace protocol \cite{cohen2019chia}. They all use the longest chain rule but differ in how the lotteries for proposing blocks are run. (Figure \ref{fig:models})    In the first two protocols, we close the gap between existing bounds and the private attack threshold by identifying the true threshold to be the private attack threshold at all values of $\lambda\Delta$.  For Chia, the adversary is potentially very powerful, since at each time, the adversary can mine on every block of the blocktree, and each block provides an independent opportunity for winning the lottery.  It was not known to be secure for {\em any} non-zero fraction of adversary power. (More specifically, while \cite{cohen2019chia} proved the chain growth and chain quality properties for the Chia protocol, the crucial common prefix property is missing.)  Our result not only says that Chia is secure, but it is secure all the way up to the private attack threshold (although the private attack threshold is smaller for Chia than for the other two classes of protocols due to the increased power of the adversary).

\begin{figure*}
    \centering
    \includegraphics[width=\textwidth]{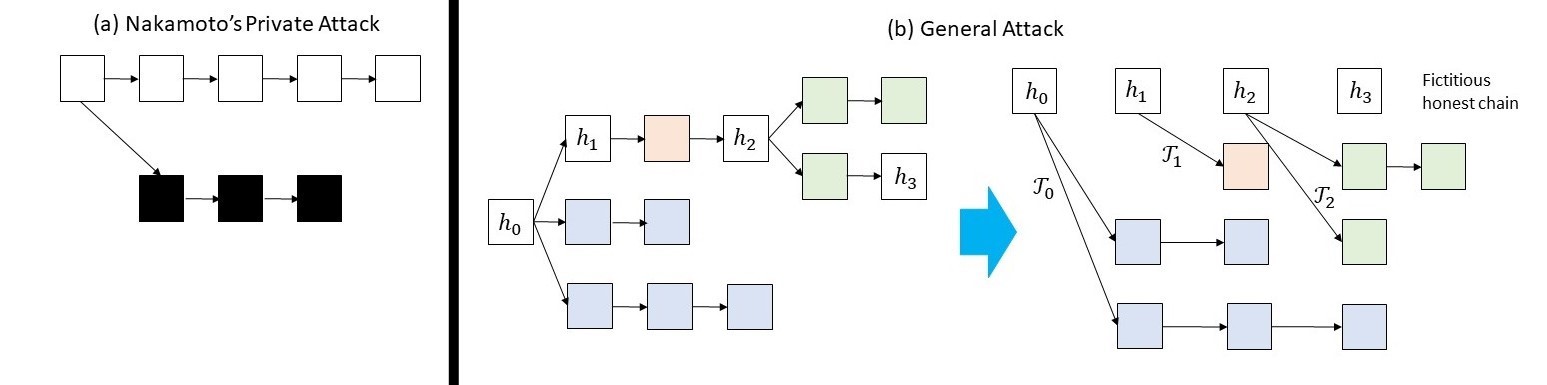}
    \caption{(a)Nakamoto's private attack as a race between a single adversary chain and the honest chain. (b) By blocktree partitioning, a general attack is represented as multiple adversary chains simultaneously racing with a fictitious honest chain. Note that this fictitious chain is formed by only the honest blocks, and may not correspond to the longest chain in the actual system. However, the longest chain in the actual system must grow no slower than this fictitious chain. }
    \label{fig:tree_partition}
\end{figure*}

\begin{figure}
    \centering
    \includegraphics[width=\textwidth/2]{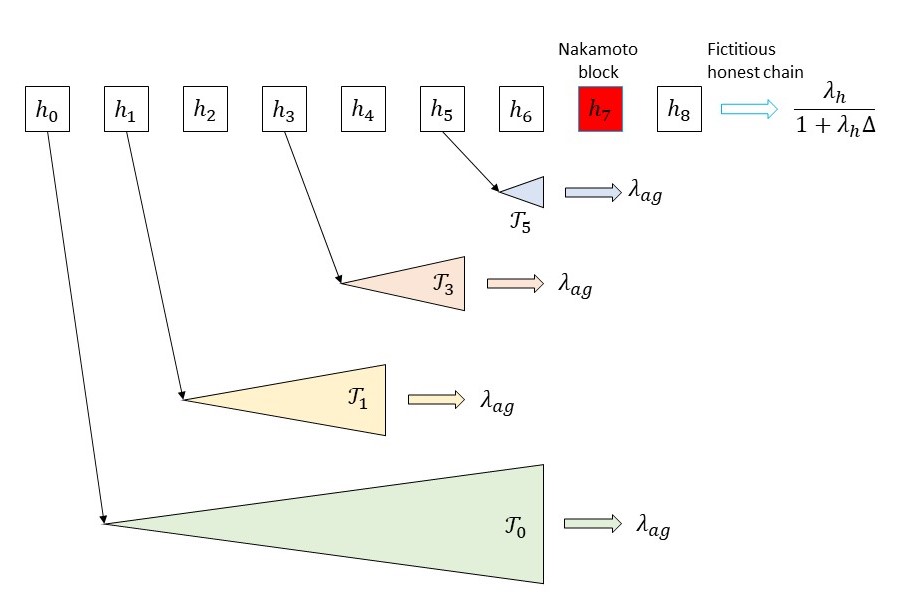}    \caption{Race between the adversary trees and the fictitious honest chain. While there may be multiple adversary trees simultaneously racing with the honest chain, the growth rate of each tree is bounded by the growth rate of the adversary chain in the private attack. An honest block is a Nakamoto block when  all the previous adversary trees  never catch up with the honest chain past that block. } 
    \label{fig:nakamoto_blocks}
\end{figure}

That the true security threshold matches the private attack threshold in all these protocols is not a coincidence. It is due to an intimate connection between the private attack and any general attack. Our approach exposes and exploits this connection by defining two key concepts: {\em blocktree partitioning} and {\em Nakamoto blocks}. Through these concepts, we can view {\em any} attack as a race between adversary and honest chains, not just the private attack. However, unlike the private attack, a general attack may send many adversary chains to simultaneously race with the honest chain.

The entire blocktree, consisting of both honest and adversary blocks, public or private, is particularly simple under a private attack: it can be partitioned into two chains, one honest and one adversary (Figure \ref{fig:tree_partition}(a)). In contrast, under a general attack where the  adversary can make public blocks at multiple time instances, a much more complex blocktree can emerge (Figure \ref{fig:tree_partition}(b)). However, what we observe is that by partitioning this more complex tree into sub-trees, each rooted at a honest block and consisting otherwise entirely of adversary blocks, one can view the general attack as initiating {\em multiple} adversary sub-trees to race with a single fictitious chain consisting of only honest blocks (Figure \ref{fig:nakamoto_blocks}). The growth rate of each of these adversary sub-trees is upper bounded by the growth rate of the adversary chain used in the private attack.  Therefore, if the private attack is unsuccessful, we know that the growth rate of each of the adversary trees must be less than that of the fictitious honest chain. What we show, for each of the three classes of protocols, is that under that condition, there must exist honest blocks, which we call {\em Nakamoto blocks}, each having the property that {\em none} of the past adversary trees can {\em ever} catch up after the honest chain reaches the block. These Nakamoto blocks serve to stabilize the blockchain: when each such block enters the blocktree, complex as it may be, we are guaranteed that the entire prefix of the longest chain  up to that block remains immutable in the future\footnote{Thus, Nakamoto blocks have a god-like permanence, they exist, but nobody knows which block is a Nakamoto block.}. When Nakamoto blocks occur and occur frequently, the persistence and liveness of the blockchain is guaranteed.

\subsection{Related works}

There have been several significant ideas that have emerged from  the security analysis of blockchains in the past few years, and below we  put our contribution in the perspective of these ideas.

\cite{backbone} initiated blockchain security analysis through defining key {\em backbone} properties\footnote{Properties of the blocktree, independent of the content of the blocks.} of chain common prefix, chain quality and chain growth. Applying this framework to analyse the PoW longest chain protocol in the lock-step round-by-round model, it is shown that the common prefix property, the most difficult property to analyze, is satisfied if the number of adversary blocks over a long window is less than the number of uniquely successful honest blocks\footnote{A uniquely successful honest block is one that is the only honest block mined in a round.}. A similar block counting analysis is conducted by \cite{pss16} in the $\Delta-$ synchronous model, with the notion of uniquely successful blocks replaced by the notion of {\em convergence opportunities}. The resulting bound is tight when $\lambda \Delta$ is small but loose in general. Moreover, the block-counting technique completely breaks down for analyzing PoS longest chain protocols because of the notorious {\em Nothing-at-Stake} problem: winning one lottery can yield a very large number of blocks for the adversary. To overcome this issue, two new ideas were invented. In the Ouroboros line of work \cite{kiayias2017ouroboros,david2018ouroboros,badertscher2018ouroboros}, a new notion of {\em forkable strings} was invented and a Markov chain analysis was performed to show convergence of the longest chain regardless of adversary action if the adversary stake is below a certain threshold. Sleepy Consensus and SnowWhite \cite{sleepy,bentov2016snow} took a different approach and defined a notion of {\em a pivot}, which is a time instance $t$ such that in all time intervals around $t$, there are more honest convergence opportunities than the number of adversary slots. They showed that a pivot forces convergence of the longest chain up to that time, and moreover if the adversary stake is less than a certain threshold, then these pivots must occur and they must occur often. 

Despite this impressive stream of ideas, the true security threshold was still unknown for both the PoW and PoS longest chain protocols. Moreover, the analysis techniques seem very tied to the specific longest chain protocol under study.
The definition of a pivot in \cite{sleepy}, for example, is  tied to the specific longest chain protocol, SnowWhite, they designed. 
In contrast, the notion of Nakamoto blocks in our approach can be viewed as a more general notion of pivots, but defined for general longest chain protocols and designed to tie the problem back to the private attack. Even though the analysis method in \cite{sleepy} has already evolved (or, shall we say, pivoted)  from the analysis method in \cite{backbone}, the influence of the block counting method is still felt in the definition of a pivot. We depart from this method by defining a Nakamoto block directly in terms of structural properties of the evolving blocktree itself. In fact, our approach was motivated from analyzing a protocol like Chia, where the rate of adversary winning slots grows exponentially over time and hence a condition like the one used in \cite{sleepy} does not give non-trivial bounds.

The present paper is an extension of an earlier version \cite{pos_paper}, where we introduced and applied this approach to analyze a PoS longest chain protocol \cite{fan2018scalable} similar to the Chia protocol. Since we released that early version, we became aware of an independent work \cite{kiayias2020consistency}, which obtains the true security threshold as well as linear consistency for the Ouroboros Praos protocol in the lock-step round-by-round model. They achieved this by tightening the definition of a pivot in \cite{sleepy} to count all honest slots, including concurrent ones, not only uniquely successful ones. Like the original definition of pivots, however, this definition is tied to the specific protocol. The approach would not give non-trivial bounds for the Chia protocol, for example. Moreover, their result on the Praos protocol under the $\Delta$-synchronous model is not tight (Figure \ref{fig:security_threshold}(b)).  We believe this is due to their analysis technique of mapping the $\Delta$-synchronous model back to the lock-step round-by-round model. In contrast, our analysis is directly in the $\Delta$-synchronous model and yields tight results in that model.

After the initial submission of this paper, we were made aware of independent work \cite{cryptoeprint:2020:661}, which obtained the same results for the PoW and the Ouroboros PoS protocols, but using totally a different set of techniques based on forkable strings. 

\subsection{Outline}

In Section \ref{sec:model}, we introduce a unified model for all three classes of protocols. In Section \ref{sec:nak_blks}, we introduce the central concepts of this work: blocktree partitioning and Nakamoto blocks. These concepts are applicable to any longest chain protocol. In Section \ref{sec:analysis}, we use these concepts in the security analysis of the three classes of protocol  attaining the private attack security threshold of each. In Section \ref{sec:worst_attack} we explore the question of whether the private attack is worst case in a stronger sense for longest chain protocols.

\section{Models}
\label{sec:model}


A key goal of this paper is to provide a common framework to analyze the security properties of various longest chain protocols. We focus here primarily on the graph theoretic and the stochastic aspects of the problem: some resource-dependent randomness is utilized by these protocols to select which node is eligible to create a block. The modality in which the randomness is generated leads to different stochastic processes describing the blocktree growth. Understanding these stochastic processes and the ability of the adversary to manipulate these processes to its advantage is the primary focus of the paper. 

Different longest chain protocols use different cryptographic means to generate the randomness needed. We specifically exclude here the cryptographic aspects of the protocols, whose  analysis is necessary to guarantee the full security of these protocols. In most of the protocols we consider (for example \cite{backbone,kiayias2017ouroboros}), the cryptographic aspects have already been carefully studied in the original papers and are not the primary bottleneck.  In others, further work may be necessary to guarantee the full cryptographic security. In all of these protocols, we assume ideal sources of randomness  to create a model that can then be analyzed independently.  

We will adopt a continuous-time  model, following the tradition set by Nakamoto \cite{bitcoin} and also  used in several subsequent influential works (eg. \cite{ghost}) as well as more recent works (eg. \cite{ren} and \cite{li2020continuoustime}). The continuous-time model affords analytical simplicity and allows us to focus on the essence of the problem without being cluttered by too many parameters.   Our model corresponds roughly to the $\Delta-$synchronous network model introduced in \cite{pss16} in the limit of a large number of lottery rounds over the duration of the network delay. This assumption seems quite reasonable. For example, the total hash rate in today's Bitcoin network is about $100$ ExaHash/s, i.e. solving $10^{21}$ puzzles per second. Nevertheless, we believe our results can be extended to the discrete setting.


We first explain the model in the specific context of Nakamoto's Proof-of-Work longest chain protocol, and then generalize it to a unified model for all three classes of protocols we study in this paper.


\subsection{Modeling proof-of-work longest chain}
\label{sec:model_nak}

The blockchain is run on a network of $n$ honest nodes and a set of malicious nodes. Each honest node mines blocks, adds them to the tip of the longest chain in its current view of the blocktree and broadcasts the blocks to other nodes. Malicious nodes also mine blocks, but they can be mined elsewhere on the blocktree, and they can also be made public at arbitrary times. Due to the memoryless nature of the puzzle solving and the fact that many attempts are tried per second, we model the block mining processes as Poisson with rates proportional to the hashing power of the miner.

Because of network delay, different nodes may have different views of the blockchain. Like the $\Delta$-synchronous model in \cite{pss16}, we assume there is a bounded communication delay $\Delta$ seconds between the $n$ honest nodes. We assume malicious nodes have zero communication delay among each other, and they can always act in collusion, which in aggregate is referred as {\it the adversary}. Also the adversary can delay the delivery of all broadcast blocks by up to $\Delta$ time. Hence, the adversary has the ability to have one message delivered to honest nodes at different times, all of which has to be within $\Delta$ time of each other.

More formally, the evolution of the blockchain can be modeled as a process $\{(\T(t), \C(t), \T^{(p)}(t), \C^{(p)}(t)): t \ge 0, 1 \leq p \leq n\}$, $n$ being the number of honest miners, where:
\begin{itemize}
    \item $\T(t)$ is a tree, and is interpreted as the {\em mother tree} consisting of all the blocks that are mined by both the adversary and the honest nodes up until time $t$, including blocks that are kept in private by the adversary and including blocks that are mined by the honest nodes but not yet heard by other honest nodes in the network. 
    \item $\T^{(p)}(t)$ is an induced (public) sub-tree of the mother tree $\T(t)$ in the view of the $p$-th honest node at time $t$. It is the collection of all the blocks that are mined by node $p$ or received from other nodes up to time $t$.
    \item $\C^{(p)}(t)$ is a longest chain in the tree $\T^{(p)}(t)$, and is interpreted as the longest chain in the local view of the $p$-th honest node on which it is mining at time $t$. Let $L^{(p)}(t)$ denote the depth, i.e the number of blocks in $\C^{(p)}(t)$ at time $t$.
   \item $\C(t)$ is the common prefix of all the local chains $\C^{(p)}(t)$ for $1 \leq p \leq n$. 
\end{itemize}

The process evolution is  as follows.

\begin{itemize}

    \item {\bf M0}: $\T(0)= \T^{(p)}(0) = \C^{(p)}(0), 1\leq p\leq n$ is a single root block, the genesis block.

    \item {\bf M1}: Adversary blocks are mined following a Poisson process at rate $\lambda_a$.  When a block is mined by the adversary, the mother tree $\T(t)$ is updated. The adversary can choose which block in $\T(t)$ to be the parent of the adversary block (i.e. the adversary can mine anywhere in the tree $\T(t)$.)
    
    \item {\bf M2}: Honest blocks are mined at a total rate of $\lambda_h$ across all the honest nodes, independent at each honest node and independent of the adversary mining process. When a block is mined by the honest node $p$, the sub-tree $\T^{(p)}(t)$ and the longest chain $\C^{(p)}(t)$ is updated. According to the longest chain rule, this honest block is appended to  the tip of $\C^{(p)}(t)$. The mother tree $\T(t)$ is updated accordingly.
    
    \item {\bf M3}:  $\T^{(p)}(t)$ and $\C^{(p)}(t)$ can also be updated by the adversary, in two ways: i) a block (whether is honest or adversary) must be added to $\T^{(p)}(t)$ within time $\Delta$ once it has appeared in $\T^{(q)}$ for some $q \neq p$, and the longest chain $\C^{(p)}(t)$ is extended if the block is at its tip; ii) the adversary can replace $\T^{(p)}(t^-)$ by another sub-tree $\T^{(p)}(t)$ from $\T(t)$ as long as the new sub-tree $\T^{(p)}(t)$ is an induced sub-tree of the new tree $\T^{(p)}(t)$, and can update $\C^{(p)}(t^-)$ to a longest chain in $T^{(p)}(t)$. \footnote{All jump processes are assumed to be right-continuous with left limits, so that $\C(t),\T(t)$ etc. include the new arrival if there is a new arrival at time $t$.} 
    
    \end{itemize}

We highlight the capabilities of the adversary in this model:

\begin{itemize}
\item {\bf A1}: Can choose to mine on any one block of the tree $\T(t)$ at any time.
\item {\bf A2}: Can delay the communication of blocks between the honest nodes, but no more than $\Delta$ time.
\item {\bf A3}: Can broadcast privately mined blocks at times of its own choosing: when private blocks are made public at time $t$ to node $p$, then these nodes are added to $\T^{(p)}(t^-)$ to obtain $\T^{(p)}(t)$. Note that by property M3(i), when private blocks appear in the view of some honest node $p$, they will also appear in the view of all other honest nodes by time $t+\Delta$. 
\item {\bf A4}: Can switch the $p$-th honest node's mining from one longest chain to another of equal length at any time, even when its view of the tree does not change. In this case, $\T^{(p)}(t) = \T^{(p)}(t^-)$ but $\C^{(p)}(t) \not = \C^{(p)}(t^-)$. 
\end{itemize}

The question is on what information can the adversary base in making these decisions? We will assume a causal adversary which has full knowledge of all past mining times of the honest blocks and the adversary blocks. 


Proving the security (persistence and liveness) of the protocol boils down to  providing a guarantee that the chain $\C(t)$ converges fast as $t \rightarrow \infty$ and that honest blocks enter regularly into $\C(t)$ regardless of the adversary's strategy. 


\subsection{From PoW to a unified model}

\begin{figure*}
    \centering
    \includegraphics[width=\textwidth]{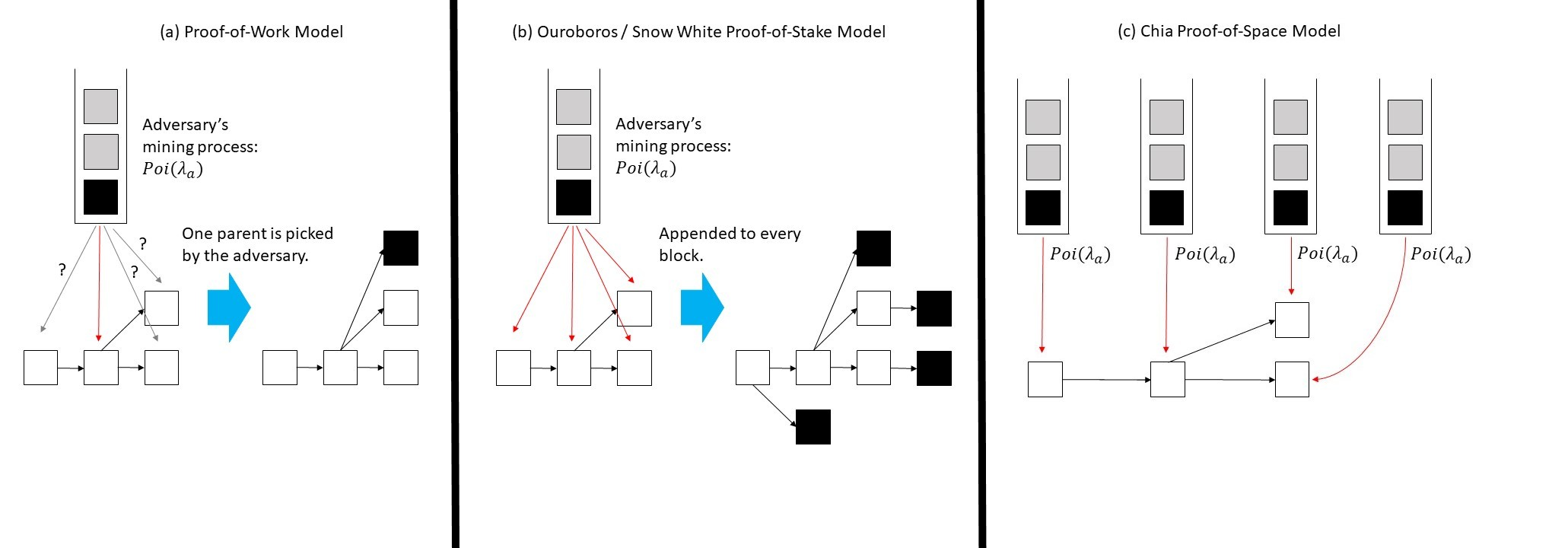}
    \caption{Three models for adversary block mining. In all models, adversary blocks are visualized as arriving via Poisson queues, and the focus is on how the block at the head of each queue is appended to the blocktree. In the PoW model, each adversary block can be appended to exactly one of the parent blocks of the existing blocktree. In the Paos/SnowWhte model, each adversary block can be appended to all possible parents blocks. In the Chia PoSpace model, the adversary blocks are mined independently on  the parent blocks of the existing tree.}
    \label{fig:models}
\end{figure*}

The model introduced in the last section can serve as a unified model for all three classes of protocols we study in this paper. The key difference between these classes of protocols is how the lottery in winning block proposal slots is conducted. This difference can be encapsulated by changing only one modeling assumption: {\bf M1}, the assumption on the adversary mining process (Figure \ref{fig:models}). In particular, the assumption on the honest behavior ({\bf M2}) remains the same,

\begin{itemize}
    \item {\bf M1-PoW} (Proof-of-Work): The original assumption we already had: Adversary blocks are mined according to a Poisson process 
    at rate $\lambda_a$, and the mined block can be appended to any parent block but only one, of the adversary's choosing, in the current mother tree $\T(t)$. This models the random attempts at solving the hash puzzle on one of the existing blocks.
    \item {\bf M1-PS} (Praos/SnowWhite Proof-of-Stake model): The adversary blocks are mined\footnote{In these Proof-of-Stake protocols, block proposal slots are won by conducting lotteries using the keys of the stake holders rather than by solving difficult computational puzzles as in Proof-of-Work protocols. However,for convenience, we use the term "mining" to denote the winning of any type of lotteries.}   according to a Poisson process at rate $\lambda_a$ (similar to PoW), but the adversary is allowed to append a version of each mined block simultaneously at {\em all} the blocks  in the current tree $\T(t)$. 
    \item {\bf M1-Chia} (Chia Proof-of-Space model): The adversary blocks are mined according to multiple independent Poisson processes of rate $\lambda_a$, one at each block of the current tree $\T(t)$. 
    A new block is appended to the tree at a certain block when a mining event happens.
\end{itemize}

Under {\bf M1-PoW}, miners can only mine on one parent block at a time, a consequence of conservation of work. Hence, the mined block can only be appended to one of the parent blocks. In {\bf M1-PS} and {\bf M1-Chia}, the adversary is able to mine new blocks on {\em all} of the existing blocks of the blocktree. This is a consequence of the phenomenon of {\em Nothing-at-stake}: the same resource (stake in PoS, disk space in PoSpace) can be used by the nodes to participate in random lotteries at all parent blocks to propose new blocks. Hence, unlike under assumption {\bf M1-PoW}, the overall mining rate of adversary blocks increases as the tree $\T(t)$ grows over time under both {\bf M1-PS} and {\bf M1-Chia}. However, the mining events across different blocks are fully {\em dependent} in {\bf M1-PS}  and completely {\em independent} in {\bf M1-Chia}. This is a consequence of the difference of how randomness is used in running the lotteries at different blocks. In the case of Praos/SnowWhite, the same randomness is used. In the case of Chia, independent randomness is used.

We note that it may appear that the capability {\bf A1} of the adversary (choosing where to mine), which is present in {\bf M1-PoW}, is gone under {\bf M1-PS} and {\bf M1-Chia}. However, the reason is that the adversary does not have to choose because it can mine everywhere simultaneously. Thus the adversary is actually more powerful under the {\bf M1-PS} and {\bf M1-Chia} conditions because the adversary has at its disposal much larger number of adversary blocks to attack the protocol. Somewhat surprisingly, our security threshold results show that this extra power is not useful in Praos/SnowWhite but useful in Chia.

The modeling assumptions for these protocols will be justified in more details in the following two subsections. The reader who is comfortable with these assumptions can go directly to Section \ref{sec:nak_blks}.

\subsection{Ouroboros Praos  and Snow White Proof-of-Stake model}
\label{app:ps}

   This section shows how   Ouroboros Praos \cite{david2018ouroboros} and Snow White \cite{bentov2016snow}  Proof-of-Stake protocols can be modeled using assumption {\bf M1-PS}  as mentioned earlier.  Both of these are Proof-of-Stake protocols, which means nodes get selected to create blocks in proportion to the number of coins (=stake) that they hold rather than the computation power held by the nodes. While the two protocols are similar at the level required for the analysis here, for concreteness, we will describe here the relation with Ouroboros Praos, which can handle adaptive corruption of nodes.
   
   We consider here only the static stake scenario - the stake of various nodes is fixed during the genesis block and assume that there is a single epoch (the composition of epochs into a dynamic stake protocol can be done using the original approach in \cite{david2018ouroboros}). The common randomness as well as the stake of various users is fixed at genesis (more generally, these are fixed at the beginning of each epoch for the entire duration of the epoch). For this protocol, we will assume that all nodes have a common clock (synchronous execution). At each time $t$, every node computes a verifiable random function (VRF) of the current time, the common randomness and the secret key. If the output value of the VRF is less than a certain threshold, then that node is allowed to propose a block at that time, to which it appends its signature. The key property of the VRF is that any node with knowledge only of the public key can validate that it was obtained with a node possessing the corresponding secret key. An honest node will follow the prescribed protocol and thus only create one block which it will append to the longest chain in its view. However, a winning dishonest node can create many different blocks mining on top of distinct chains. Blocks which are well-embedded into the longest-chain are considered confirmed.
   
   Now, we explain the connection of the protocol to our modeling in the earlier section. The first assumption is that time is quantized so finely that the continuous time modeling makes sense - this assumes that there is no simultaneous mining at any time point. However, if nodes mine blocks close to each other in time, they can be forked due to the delay $\Delta$ in the propagation time (thus we model concurrent mining through the effect of the propagation delay rather than through discrete time). Second, the honest action is to grow the longest chain through mining a new block at the tip - this justifies M2 (here $\lambda_h$ is proportional to the total honest stake). The adversaries can mine blocks which can be appended to many different positions in the blockchain. We assume that in the worst case, every adversary arrival contributes to a block extending every single block in the tree. We note that furthermore, there is another action, which is that the adversary can create many different blocks at any given position of the blockchain. Since this action does not increase the length of any chain or increase future mining opportunities, we do not need to model this explicitly. However, we point out that, since we show that a certain prefix of the blockchain ending at a honest block remains fixed for all future, that statement continues to hold even under this expanded adversary action space.


\subsection{Chia Proof-of-Space model}
\label{app:chia}

Chia consensus \cite{cohen2019chia} incorporates a combination of Proof of Space (PoSpace) and Proof of time, and is another energy efficient alternative to Bitcoin. PoSpace \cite{abusalah2017beyond,dziembowski2015proofs} is a cryptographic technique where provers can efficiently generate proofs to show that they allocate unused hard drive space for storage space.  Proof of time is implemented by a Verifiable Delay Function (VDF) \cite{boneh2018verifiable,pietrzak2018simple} that requires a certain amount of sequential computations to execute, but can be verified far quicker: a  VDF takes a challenge $c \in \{0,1\}^w$ and and a time parameter $t \in \mathbb{Z}^+$ as input, and produces a output $\tau$ and a proof $\pi$ in (not much more than) $t$ sequential steps; the correctness of output $\tau$ can be verified with the proof $\pi$ in  much less than $t$ steps. PoSpace enables Sybil resistance by  restricting participation to nodes that have reserved enough hard disk space and VDF enables coordination without having synchronized clocks as well as preventing long-range attacks \cite{park2018spacemint}. 
 
In Chia, each valid block $B$ contains a PoSpace $\sigma$ and a VDF output $\tau$. A Chia full node  mines a new block ($B_i$, with $i$ denoting the depth of the block from Genesis) as follows: 
\begin{enumerate}
    \item It first picks the  block $B_{i-1}$, at the tip of the longest chain in its local view of the blocktree, as the parent block that the newly generated block $B_i$ will be appended to.
    \item It draws a challenge $c_1$ deterministically from $B_{i-1}$ and generates a valid PoSpace $\sigma_i$ based on $c_1$ and a large file of size at least $M$ bits it stores.
    \item It computes a valid VDF output $\tau_i$ based on a challenge $c_2$ and a time parameter $t$, where $c_2$ is also drawn deterministically from $B_{i-1}$ and $t$ is the hash of $\sigma_i$ multiplied by a difficulty parameter $T$ (i.e. $t=0.{\sf H}(\sigma_i) \times T$ where {\sf H} is a cryptographic hash function).
    \item A new block $B_i$ comprised of $\sigma_i$, $\tau_i$ and some payload (example: transactions) is appended to $B_{i-1}$ in the blocktree.
\end{enumerate}

For each node, the ``mining'' time of a new block follows a uniform distribution in $(0,T)$: this is because  the hash function {\sf H} outputs a value that is uniformly distributed over its range. Suppose there are $N$ full nodes in the Chia network, then the inter-arrival block time in Chia consensus would be $\min(U_1,U_2,\cdots,U_N)$, where $U_i \sim {\sf Unif}(0,T)$ for $1 \leq i \leq N$. Then the expected inter-arrival block time is  
\begin{equation*}
\mathbb{E}[\min(U_1,U_2,\cdots,U_N)] = \int_0^T{(1-t/T)^N dt} = \frac{T}{N+1}.
\end{equation*}

So to maintain a fixed inter-arrival block time (example: 10 minutes in Bitcoin), the difficulty parameter $T$ needs to be adjusted linearly as number of full nodes $N$ grows.
We also observe that the chance for a node storing two large files each of size at least $M$ bits to find the first block is exactly doubled compared with a node storing one file, which provides Sybil resistance to Chia. 
Further we can model the mining process in Chia as a Poisson point process for large $N$. Fixing a parent block in the block tree, the number of new blocks mined in time $t$ follows a binomial distribution ${\sf bin}(N,t/T)$, which approaches a Poisson distribution ${\sf Poi}(Nt/T)$ when $N \rightarrow \infty$ and $N/T \rightarrow C$ for some constant $C$. 

 Assume there are $n$ honest nodes each controlling $M$ bits of space, and the adversary has $a \cdot M$ bits of space, then the mining processes of honest blocks and adversary blocks are Poisson point processes with rate $\lambda_h$ and $\lambda_a$ respectively, where $\lambda_h$ and $\lambda_a$ are proportional to total size of disk space controlled by honest nodes ($n \cdot M$) and the adversary ($a \cdot M$) respectively. Also while the honest nodes are following the longest chain rule, the adversary can work on multiple blocks or even all blocks in the block tree as a valid PoSpace is easy to generate and the adversary can compute an unlimited amount of VDF outputs in parallel; a similar phenomenon occurs in  Proof-of-Stake blockchains where it is termed as  the Nothing-at-Stake (NaS) attack  \cite{pos_paper}. Hence, we can model the adversary blocks as generated according to multiple independent Poisson processes of rate $\lambda_a$, one at each block of the current tree $\T(t)$. 
    A new block is appended to the tree at a certain block when a generation event happens. Like in the model for Ouroboros Praos and Snow White, the total rate of adversary block generation increases as the tree grows; however the generation events across different blocks are independent rather than fully dependent.

\section{Blocktree partitioning and Nakamoto blocks}
\label{sec:nak_blks}

In this section, we will introduce the concept of {\em blocktree partitioning} to represent a general adversary attack as  a collection of adversary trees racing against a fictitious honest chain. Using this representation, we 
define the key notion of {\em Nakamoto blocks} as honest blocks that are the winners of the race against all the past trees, and show that if a block is a Nakamoto block, then the block will forever remain in the longest chain. The results in this section apply to all three models. In fact, they are valid for any assumption on the adversary mining process in {\bf M1} in the model in Section \ref{sec:model_nak}, because no statistical assumptions are made.  In Section \ref{sec:analysis}, we will perform security analysis in all three backbone models using the tool of Nakamoto blocks, by showing that they occur frequently with high probability whenever the adversary power is not sufficient to mount a successful private attack. This proves the liveness and persistency of the protocols. 

First, we introduce the concept of blocktree partitioning and define Nakamoto blocks in the simpler case when $\Delta = 0$, and then we extend to general $\Delta$. The unrealistic but pedagogically useful zero-delay case allows us to focus on the capability of the adversary to mine and publish blocks, while the general case brings in its capability to delay the delivery of blocks by the honest nodes as well.

\subsection{Network delay $\Delta = 0$}

\subsubsection{Blocktree partitioning}

Let $\tau^h_i$ and $\tau^a_i$ be the mining time of the $i$-th honest and adversary blocks respectively; $\tau^h_0 = 0$ is the mining time of the genesis block, which we consider as the $0$-th honest block. 

\begin{defn}
{\bf Blocktree partitioning}
Given the mother tree $\T(t)$, define for the $i$-th honest block $b_i$, the {\em adversary tree} $\T_i(t)$ to be the sub-tree of the mother tree $\T(t)$ rooted at $b_i$  and consists of all the adversary blocks that can be reached from $b_i$ without going through another honest block. The mother tree $\T(t)$ is partitioned into sub-trees $\T_0(t),\T_1(t), \ldots T_j(t)$, where the $j$-th honest block is the last honest block that was mined before time $t$.
\end{defn}
See Figure \ref{fig:tree_partition}(b) for an example.

The sub-tree $\T_i(t)$ is born at time $\tau^h_i$ as a single block $b_i$ and then grows each time an adversary block is appended to a chain of adversary blocks from $b_i$. Let $D_i(t)$ denote the depth of $\T_i(t)$; $D_i(\tau^h_i) = 0$.  

\subsubsection{Nakamoto blocks}
Let $A_h(t)$ be the number of honest blocks mined from time $0$ to $t$. $A_h(t)$ increases by $1$ at each time $\tau^h_i$. We make the following important definition.

\begin{defn}
\label{defn:nak_blk}
({\bf Nakamoto block for $\Delta = 0$}) Define 
\begin{equation}
\label{eqn:E}
    E^0_{ij} = \mbox{event that $D_i(t)  < A_h(t) - A_h(\tau^h_i)$ for all $t > \tau^h_j$}
\end{equation}
for some $i<j$. The $j$-th honest block is called a {\em Nakamoto block} if 
\begin{equation}
F^0_j = \bigcap_{i = 0}^{j-1} E^0_{ij}
\end{equation}
occurs.
\end{defn}

We can interpret the definition of a Nakamoto block in terms of a {\em fictitious system}, having the same block mining times as the actual system, where there is a growing chain consisting of only honest blocks and the adversary trees are racing against this honest chain.  (Figure \ref{fig:nakamoto_blocks}). The event $E^0_{ij}$ is the event that the adversary tree rooted at the $i$-th honest block does not catch with the fictitious honest chain {\em any} time after the mining of the $j$-th honest block.  When the fictitious honest chain reaches a Nakamoto block, it has won the race against {\em all} adversary trees rooted at the past honest blocks.


Even though the events are about a fictitious system with a purely honest chain and the longest chain in the actual system may consist of a mixture of adversary and honest blocks, the actual chain can only grow faster than the fictitious honest chain, and so we have the following key lemma showing that a Nakamoto block will stabilize and remain in the actual chain forever.


\begin{lemma}
\label{lem:regen} ({\bf Nakamoto blocks stabilize, $\Delta = 0$.})
If the $j$-th honest block is a Nakamoto block, then it will be in the longest chain $\C(t)$ for all $t > \tau^h_j$. Equivalently, $\C(\tau^h_j)$ will be a prefix of $\C(t)$ for all $t > \tau^h_j$.
\end{lemma}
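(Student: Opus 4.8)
The plan is to argue by strong induction on the honest blocks that lie on the longest chain, showing that once the $j$-th honest block $b_j$ is a Nakamoto block, no future longest chain can avoid passing through $b_j$. First I would fix $j$ with $F^0_j$ occurring and set $d := \text{depth of } b_j \text{ in } \T(\tau^h_j)$. Since at $\Delta = 0$ every honest node sees the full public tree instantly and all honest miners extend a longest chain, at time $\tau^h_j$ the block $b_j$ sits at depth $d = L(\tau^h_j)$, i.e. $b_j$ is a tip of the (common) longest chain; moreover every honest block mined strictly after $\tau^h_j$ is mined on a chain of length $\ge d$ at the moment of mining. The goal is: for all $t \ge \tau^h_j$, every longest chain $\C^{(p)}(t)$ contains $b_j$.

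The key step is a comparison between the depth of any chain that branches off below $b_j$ and the depth of the honest chain through $b_j$. Consider any time $t > \tau^h_j$ and any block $c$ in $\T(t)$ whose path to genesis does not go through $b_j$. I want to show $\text{depth}(c) < \text{depth of some honest-only chain through } b_j$. Decompose the path from genesis to $c$: it last touches an honest block $b_i$ with $i < j$ (possibly the genesis, $i=0$) that is \emph{not} an ancestor of $b_j$ on the honest backbone — more precisely, let $b_i$ be the deepest honest block on $c$'s path that is a common ancestor of both $b_j$ and $c$; below $b_i$, the path to $c$ consists of blocks in the adversary tree $\T_i(t)$ together with at most... actually the cleanest bookkeeping is: $\text{depth}(c) \le \text{depth}(b_i) + D_i(t)$, because after leaving $b_i$ off the honest backbone toward $c$, every subsequent block on that path is an adversary block reachable from $b_i$ without passing another honest block (if it passed another honest block $b_{i'}$, that $b_{i'}$ would be a deeper common ancestor only if it is also an ancestor of $b_j$ — handle this by induction, choosing $b_i$ minimal). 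On the other hand, the honest chain through $b_j$ has depth at least $\text{depth}(b_i) + (A_h(t) - A_h(\tau^h_i))$: indeed $b_j$ is at depth $\text{depth}(b_i) + (\text{number of honest blocks on the backbone between } b_i \text{ and } b_j)$, and one must also account for honest blocks mined after $\tau^h_j$; the point is that the honest longest chain grows by at least one per honest mining event in the relevant window because honest nodes always extend a current longest chain (this is exactly the $\Delta=0$ chain-growth fact). Combining, $E^0_{ij}$ — which says $D_i(t) < A_h(t) - A_h(\tau^h_i)$ for all $t > \tau^h_j$ — forces $\text{depth}(c) < L(t)$, so no longest chain at time $t$ can pass through $c$; hence every longest chain passes through $b_j$.

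The main obstacle I anticipate is the bookkeeping of \emph{which} honest block $b_i$ to charge a deviating chain against, and making the "honest chain through $b_j$ has depth $\ge \text{depth}(b_i) + (A_h(t)-A_h(\tau^h_i))$" bound airtight — this needs the $\Delta=0$ chain-growth property (each honest arrival strictly increases the longest-chain length) plus the fact that all honest blocks after $b_i$ that are \emph{not} on the $b_j$-backbone get counted in some other $\T_{i'}(t)$, not against $b_i$; a clean induction on $j$ (assuming the lemma already holds for all Nakamoto blocks before $b_j$, so their prefixes are frozen) lets me assume the backbone below $b_j$ is stable and reduces the analysis to the single tree $\T_i(t)$ for the relevant $i$. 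Once that is set up, intersecting over all $i < j$ (which is exactly $F^0_j$) gives that no deviating chain ever reaches length $L(t)$, and since $\C(t)$ is a longest chain for every honest node and $b_j \in \C^{(p)}(t)$ for all $p$, we get $b_j \in \C(t)$ and $\C(\tau^h_j) \preceq \C(t)$, because $\C(\tau^h_j)$ is itself the honest prefix up to $b_j$ (a tip at that time) and, by the induction hypothesis applied to the Nakamoto blocks before $b_j$, that whole prefix is already permanent. I would close by remarking that the argument uses no statistical assumption on $\textbf{M1}$, only the deterministic longest-chain and chain-growth structure, so it applies verbatim to all three models.
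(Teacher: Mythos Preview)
Your overall shape is right---compare the deviating chain's length to $L(\tau^h_i)+D_i(t)$ for some $i<j$, invoke $E^0_{ij}$, and use the $\Delta=0$ chain-growth fact $L(t)-L(\tau^h_i)\ge A_h(t)-A_h(\tau^h_i)$---and this matches the paper. But there is a real gap in how you select $b_i$ and handle later honest blocks.

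You waver between two choices of $b_i$, and neither works as stated. If $b_i$ is the deepest honest \emph{common ancestor} of $b_j$ and $c$, then the bound $\mathrm{depth}(c)\le \mathrm{depth}(b_i)+D_i(t)$ is false in general: the path from $b_i$ to $c$ can pass through further honest blocks $b_{i'}$ that are \emph{not} ancestors of $b_j$ (this really happens, even with $i'<j$, because of adversary-induced tie-breaking; and nothing you have said rules out $i'>j$). Your parenthetical ``handle this by induction, choosing $b_i$ minimal'' does not repair it, since such a $b_{i'}$ is not a common ancestor and your induction on $j$ over Nakamoto blocks says nothing about honest blocks mined after $\tau^h_j$. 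If instead $b_i$ is the \emph{last} honest block on $c$'s path (the better choice), then everything after $b_i$ does lie in $\T_i$---but now you must argue $i<j$, and that is exactly the circular point you acknowledge as ``the main obstacle.''

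The paper breaks the circle with a one-line minimality device that your write-up is missing. It argues by contradiction: let $t^*$ be the \emph{first} time some $\C^{(p)}(t)$ fails to have $\C(\tau^h_j)$ as a prefix, and let $b_i$ be the last honest block on $\C^{(p)}(t^*)$. If $b_i$ were mined at some $t_1>\tau^h_j$, then the honest miner of $b_i$ already had, at time $t_1^-<t^*$, a longest chain (namely the prefix of $\C^{(p)}(t^*)$ before $b_i$) not containing $\C(\tau^h_j)$, contradicting minimality of $t^*$. Hence $i<j$; the tail of $\C^{(p)}(t^*)$ after $b_i$ lies entirely in $\T_i(t^*)$, giving $L(t^*)\le L(\tau^h_i)+D_i(t^*)$; and combining $E^0_{ij}$ with chain growth yields $L(t^*)<L(t^*)$. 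No induction on $j$ is needed at all.
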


\begin{proof}
Note that although honest nodes may have different views of the longest chain because of the adversary capability {\bf A4}, $\T^{(p)}(t) = \T^{(q)}(t)$ and hence $L^{(p)}(t) = L^{(q)}(t)$ always hold for any $q \neq p$ at any time $t$ when $\Delta = 0$.  Let $L(t)$ be the length of the longest chain in the view of honest nodes. $L(0) = 0$. Note that since the length of the chain $\C^{(p)}(t)$ increments by $1$ immediately at every honest block mining event (this is a consequence of $\Delta = 0$), it follows that for all $i$ and for all $t > \tau^h_i$,  
\begin{equation} 
    L(t)- L(\tau^h_i) \ge A_h(t) - A_h(\tau^h_i).
    \label{eq:lc}
\end{equation}
We now proceed to the proof of the lemma.

We will argue by contradiction. Suppose $F^0_j$ occurs and let $t^* > \tau^h_j$ be the smallest $t$ such that $\C(\tau^h_j)$ is not a prefix of $\C^{(p)}(t)$ for some $1\leq p \leq n$. Let $b_i$ be the last honest block on $\C^{(p)}(t^*)$ (which must exist, because the genesis block is by definition honest.) If $b_i$ is generated at some time $t_1> \tau^h_j$, then $\C^{(p)}(t_1^-)$ is the prefix of $\C^{(p)}(t^*)$ before block $b_i$, and does not contain $\C(\tau^h_j)$ as a prefix, contradicting the minimality of $t^*$. So $b_i$ must be generated before $\tau^h_j$, and hence $b_i$ is the $i$-th honest block for some $i < j$. The part of $\C^{(p)}(t^*)$ after block $b_i$ must lie entirely in the adversary tree $T_i(t^*)$ rooted at $b_i$. Hence,
\begin{equation*}
    L(t^*) \le L(\tau^h_i) + D_i(t^*).
\end{equation*}
However we know that
\begin{equation}
    D_i(t^*) < A_h(t^*)-A_h(\tau^h_i) \le L(t^*) - L(\tau^h_i),
\end{equation}
where the first inequality follows from the fact that $F_j$ holds, and the second inequality follows from the longest chain policy (eqn. (\ref{eq:lc})). From this we obtain that 
\begin{equation}
    L(\tau^h_i) + D_i(t^*) < L(t^*),
\end{equation}
which is a contradiction since $L(t^*) \le L(\tau^h_i) + D_i(t^*)$.
\end{proof}


Lemma \ref{lem:regen} justifies the name {\em Nakamoto block}: just like its namesake, a Nakamoto block has a godlike permanency. Also like its namesake, no one knows for sure whether a given block is a Nakamoto block: it is defined in terms of what happens in the indefinite future. However, the concept is useful because as long as a Nakamoto block appears in the last $k$ blocks of the current longest chain, then the prefix before these $k$ blocks will stabilize. Hence, the problem is reduced to showing under what conditions  Nakamoto blocks exist and they enter the blockchain frequently.

Since Nakamoto blocks are defined in terms of a race between adversary trees and the honest chain, and the growth rate of each adversary tree is bounded by the growth rate of the private attack adversary chain no matter what the attack is, one can intuitively expect that if the private attack is not successful, i.e. the growth rate of the private adversary chain is less than that of the honest chain, then once in a while Nakamoto blocks will occur because the adversary trees cannot win all the time. This intuition is made precise in Section \ref{sec:analysis} for the three models of interest. The current task at hand is to extend the notion of Nakamoto blocks to the $\Delta > 0 $ case.

\subsection{General network delay $\Delta$}

Definition \ref{defn:nak_blk} of a Nakamoto block is tailored for the zero network delay case. When the network delay $\Delta > 0$, there is forking in the blockchain even without adversary blocks, and two complexities arise:
\begin{enumerate}
    \item Even when a honest block $b$ has won the race against all the previous adversary trees, there can still be multiple honest blocks on the same level as $b$ in the mother tree $\T(t)$ due to forking. Hence there is no guarantee that $b$ will remain in the longest chain.
    \item Even when the honest block $b$ is the only block in its level, the condition in Equation \eqref{eqn:E} is not sufficient to guarantee the stabilization of $b$: the number of honest blocks mined is an over-estimation of the amount of growth in the honest chain due to forking.
\end{enumerate} 

The first complexity is a consequence of the fact that when the network delay is non-zero, the adversary has the additional power to delay delivery of honest blocks to create split view among the honest nodes. In the context of the formal security analysis of Nakamoto's PoW protocol, the limit of this power is quantified by the notion of {\em uniquely successful} rounds in \cite{backbone} in the lock-step synchronous round-by-round model, and extended to the notion of {\em convergence opportunities} in \cite{pss16} in the $\Delta$-synchronous model. The honest blocks encountering the convergence opportunities are called {\em loners}  in \cite{ren}.

\begin{defn}
The $j$-th honest block mined at time $\tau^h_j$ is called a {\em loner} if there are no other honest blocks mined in the time interval $[\tau^h_j - \Delta, \tau^h_j + \Delta]$.
\end{defn}

It is shown in \cite{pss16,ren} that a loner must be the only honest block in its depth in $\T(t)$ at any time $t$ after the block is mined.  Thus, to deal with the first complexity, we simply strengthen the definition of a Nakamoto block to restrict it to also be a loner block. Since loner blocks occur frequently, this is not an onerous restriction. 

To deal with the second complexity, we define the race of the adversary trees not against a fictitious honest chain without forking as in definition \ref{defn:nak_blk}, but against a fictitious honest tree with worst-case forking. This tree is defined as follows.

\begin{defn}
\label{def:fictitious}
Given honest block mining times $\tau^h_i$'s, define a honest fictitious tree $\T_h(t)$ as a tree which evolves as follows:
\begin{enumerate}
    \item $\T_h(0)$ is the genesis block.
    \item The first mined honest block and all honest blocks within $\Delta$ are all appended to the genesis block at their respective mining times to form the first level. 
    \item The next honest block mined and all honest blocks mined within time $\Delta$ of that are added to form the second level (which first level blocks are parents to which new blocks is immaterial) . 
    \item The process repeats.
\end{enumerate}
Let $D_h(t)$ be the depth of $\T_h(t)$.
\end{defn}

We are now ready to put everything together to define Nakamoto blocks in general.

\begin{defn}
\label{defn:nak_blk_gen}
({\bf Nakamoto block for general $\Delta$}) Let us define:
\begin{equation}
\label{eqn:Et}
    E_{ij} = \mbox{event that $D_i(t)  < D_h(t-\Delta) - D_h(\tau^h_i+\Delta)$ for all $t > \tau^h_j + \Delta$}.
\end{equation}
The $j$-th honest block is called a {\em Nakamoto block} if it is a loner and
\begin{equation}
F_j = \bigcap_{i = 0}^{j-1} E_{ij}
\end{equation}
occurs.
\end{defn}
Note that when $\Delta = 0$, $D_h(t) = A_h(t)$, the number of honest blocks mined in $[0,t]$. Hence $E_{ij} = E^0_{ij}$. Also, every block is a loner. Here Definition \ref{defn:nak_blk_gen} degenerates to Definition \ref{defn:nak_blk}. Moreover, it is not difficult to see that
$$D_h(t-\Delta) - D_h(\tau^h_i+\Delta) \le A_h(t) - A_h(\tau^h_i)$$
so Definition \ref{defn:nak_blk_gen} is indeed a strengthening of Definition \ref{defn:nak_blk}. This strengthening allows us to show that Nakamoto blocks stabilize for all $\Delta > 0$.

\begin{theorem}
\label{thm:nak_blk}({\bf Nakamoto blocks stabilize, general $\Delta$})
If the $j$-th honest block is a Nakamoto block, then it will be in the chain $\C(t)$ for all $t > \tau^h_j+\Delta$. This implies that the longest chain until the $j$-th honest block has stabilized.
\end{theorem}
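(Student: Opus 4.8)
The plan is to run the same contradiction argument as in the proof of Lemma~\ref{lem:regen}, with its two ingredients upgraded for $\Delta>0$: the ``unforked honest chain'' is replaced by the worst-case honest tree $\T_h$ of Definition~\ref{def:fictitious}, and we exploit the loner hypothesis on the Nakamoto block. So suppose $b_j$ (the $j$-th honest block) is a Nakamoto block and, for contradiction, let $t^*>\tau^h_j+\Delta$ be the first time some honest node $p$ holds a longest chain $\C^{(p)}(t^*)$ that does not contain $b_j$ (such a first time exists by right-continuity of the processes). Let $b_i$ be the deepest honest block on $\C^{(p)}(t^*)$; as in Lemma~\ref{lem:regen}, everything on $\C^{(p)}(t^*)$ below $b_i$ is a chain of adversary blocks lying in the adversary tree $\T_i(t^*)$, so $L^{(p)}(t^*)\le \mathrm{depth}(b_i)+D_i(t^*)$ (depths measured in the mother tree, genesis at depth $0$).

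First I would pin down that $i<j$. Since $b_i\in\C^{(p)}(t^*)\not\ni b_j$ we have $b_i\ne b_j$, so if $b_i$ were mined in $[\tau^h_j-\Delta,\tau^h_j+\Delta]$ this would contradict $b_j$ being a loner. If $b_i$ were mined after $\tau^h_j+\Delta$, then the (unique) path from genesis to $b_i$ is precisely the longest chain its honest miner held just before mining $b_i$ --- a chain existing at a time in $(\tau^h_j+\Delta,t^*)$, hence already containing $b_j$ by minimality of $t^*$ --- so $b_j$ would lie on $\C^{(p)}(t^*)$, again a contradiction. Hence $\tau^h_i<\tau^h_j-\Delta$, so $i<j$, and the event $E_{ij}$ (part of $F_j$, which holds) gives $D_i(t^*)<D_h(t^*-\Delta)-D_h(\tau^h_i+\Delta)$.

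The technical core is the chain-growth bound that replaces \eqref{eq:lc}: for any honest block $b_i$ and any $t\ge\tau^h_i+\Delta$, every honest node's longest chain at time $t$ has length at least $\mathrm{depth}(b_i)+D_h(t-\Delta)-D_h(\tau^h_i+\Delta)$. I would prove this by the usual $\Delta$-synchronous induction over the levels of $\T_h$ opened after time $\tau^h_i+\Delta$: by time $\tau^h_i+\Delta$ every honest node has $b_i$ in view, hence a chain of depth $\ge\mathrm{depth}(b_i)$; and each successive level of $\T_h$, being opened more than $\Delta$ after the previous, is started by an honest block whose miner therefore already sees (through $b_i$ or through the previous level's opener) a chain of the depth guaranteed so far, and appends one level deeper --- a block that reaches all honest nodes within a further $\Delta$. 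Counting the levels visible to everyone by $t^*$ gives exactly $D_h(t^*-\Delta)-D_h(\tau^h_i+\Delta)$, the $\pm\Delta$ offsets being the propagation slack. Chaining the inequalities,
\[
L^{(p)}(t^*)\le \mathrm{depth}(b_i)+D_i(t^*)<\mathrm{depth}(b_i)+D_h(t^*-\Delta)-D_h(\tau^h_i+\Delta)\le L^{(p)}(t^*),
\]
a contradiction; as this holds for every honest node, $b_j\in\C(t)$ for all $t>\tau^h_j+\Delta$.

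I expect the chain-growth bound to be the main obstacle. When $\Delta=0$ it is the one-line observation $D_h(t)=A_h(t)$ together with \eqref{eq:lc}; with delay one must verify that worst-case honest forking is captured exactly by $\T_h$, that the level-opening honest blocks are always mined on chains at least as deep as $b_i$, and that the $-\Delta$/$+\Delta$ buffers in Definition~\ref{defn:nak_blk_gen} together with the loner restriction absorb the boundary effects (this is also why only $t>\tau^h_j+\Delta$, rather than $t\ge\tau^h_j+\Delta$, can be claimed, and why a loner opens its own level of $\T_h$). This is essentially the continuous-time $\Delta$-synchronous version of the GHOST / Pass--Shi--Shelat chain-growth lemma, and the work is in the bookkeeping --- in particular, that once a block enters an honest view it permanently contributes to that node's longest-chain length, and that consecutive levels of $\T_h$ are genuinely more than $\Delta$ apart so the induction closes.
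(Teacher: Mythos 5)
Your argument is essentially identical to the paper's: the same contradiction structure, the same loner-based exclusion of mid-aged $b_i$, and your chain-growth bound $L^{(p)}(t)\ge\mathrm{depth}(b_i)+D_h(t-\Delta)-D_h(\tau^h_i+\Delta)$ is the paper's Proposition~\ref{prop:delay} (re-based at $\tau^h_i+\Delta$ rather than $\tau^h_i$, which is in fact a slightly cleaner anchoring since $p$ may not yet see $b_i$ at time $\tau^h_i$). No gap; this matches the paper's proof.
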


The proof of Theorem \ref{thm:nak_blk} is given in \S \ref{sec:nakamoto-3.2}.

Nakamoto blocks are defined for general longest chain protocols. When applied to the Praos/SnowWhite protocols, the definition of Nakamoto blocks is a weakening of the definition of pivots in \cite{sleepy}. Although \cite{sleepy} did not define pivots explicitly in terms of races, one can re-interpret the definition as a race between the adversary and a fictitious honest chain consisting of {\em  only} loner honest blocks. This fictitious chain can never occur in the actual system even when no adversary blocks are made public, because there are other honest blocks which are not loners but can make it into the main chain. On the other hand, Nakamoto blocks are defined directly as a race between the adversary and the fictitious honest chain which would arise if there were no public adversary blocks. This is why the definition of Nakamoto blocks leads to a tight characterization of the security threshold in the Praos/SnowWhite protocols, matching the private attack threshold, while the definition of pivots in \cite{sleepy} cannot. (Theorem \ref{thm:pow_pos}). This tightening is similar to the tightening done in the recent work \cite{kiayias2020consistency} for the lock-step round-by-round model.







\section{Security Analysis}
\label{sec:analysis}

The goal of this section is to show that the private attack is the worst attack for the three models defined in Section \ref{sec:model}. More precisely, we want to show that {\em security threshold}, i.e. the maximum adversary power tolerable for any adversary strategy, is the same as that of Nakamoto's private attack. This is true for any total mining rate $\lambda$ and for any $\Delta$. (In fact, the threshold depends only on the product $\lambda \Delta$.) We will use the notion of Nakamoto blocks to establish these results.

\subsection{Statement of results}
\label{sec:results}

Our goal is to generate a transaction ledger that satisfies
{\em persistence}  and {\em liveness} as defined in  \cite{backbone}.
Together, persistence and liveness guarantee robust transaction ledger; honest transactions will be adopted to the ledger and be immutable.

\begin{definition}[from \cite{backbone}]
    \label{def:public_ledger}
    A protocol $\Pi$ maintains a robust public transaction ledger if it organizes the ledger as a blockchain of transactions and it satisfies the following two properties:
    \begin{itemize}
        \item (Persistence) Parameterized by $\tau \in \mathbb{R}$, if at a certain time a transaction {\sf tx} appears in a block which is mined more than $\tau$ time away from the mining time of the tip of the main chain of an honest node (such transaction will be called confirmed), then {\sf tx} will be confirmed by
        all honest nodes in the same position in the ledger.
        \item (Liveness) Parameterized by $u \in \mathbb{R}$, if a transaction {\sf tx} is received by all honest nodes for more than time $u$, then all honest nodes will contain {\sf tx} in the same place in the ledger forever.
    \end{itemize}
\end{definition}

As discussed in the introduction, the condition for the private attack on Nakamoto's Proof-of-Work protocol to be successful is
\begin{equation}
    \lambda_a > \lambda_{\rm growth}(\lambda_h,\Delta) = \frac{\lambda_h}{1 +\lambda_h \Delta}
\end{equation}
in the fully decentralized regime. In terms of $\beta$, the fraction of adversary power, and $\lambda$, the total block mining rate:
\begin{equation}
\label{eq:powpos_threshold}
\beta > \frac{1-\beta}{1 + (1-\beta)\lambda \Delta}.
\end{equation}
The parameter $\lambda \Delta$ is the number of blocks generated per network delay, and determines the latency and throughput of the blockchain. If this condition is satisfied, then clearly the ledger does not have persistency or liveness. Hence, the above condition can be interpreted as a tradeoff between latency/throughput and the security (under private attack).

In the Praos/SnowWhite protocol, the honest growth rate is the same as in the PoW system. Consider now the adversary blocks. They are mined according to a Poisson process at rate $\lambda_a$. When a block is mined, the adversary gets to append that block to all the blocks in the current adversary chain (cf.\ Figure \ref{fig:models}(b)). This leads to an exponential increase in the number of adversary blocks. However, the {\em depth} of that chain increases by exactly $1$. Hence the growth of the adversary chain is exactly the same as the advversary chain under PoW. Hence, we get exactly the same private attack threshold \eqref{eq:powpos_threshold} in both the PoW and the Praos/SnowWhite PoS protocols.

The theorem below shows that the  the private attack threshold yields the true security threshold for both classes of protocols.

\begin{theorem}
    \label{thm:pow_pos}
    If 
\begin{equation}
\beta < \frac{1-\beta}{1 + (1-\beta)\lambda \Delta},
\end{equation}
    then the Nakamoto PoW and the Ouroboros/SnowWhite PoS protocols generate  transaction ledgers such that each transaction ${\sf tx}$\footnote{In contrast to the theorems in \cite{backbone, pss16}, this theorem guarantees high probability persistence and liveness for {\em each} transaction rather than for the entire ledger. This is because our model has an infinite time-horizon while their model has a finite horizon, and guarantees for an infinite ledger is impossible. However, one can easily translate our results to high probability results for an entire finite ledger over a time horizon of duration polynomial in the security parameter $\sigma$ using the union bound.}
    satisfies {\em persistence} (parameterized by $\tau=\sigma$) and {\em liveness} (parameterized by $u=\sigma$) in Definition~\ref{def:public_ledger} with probability at least $1-e^{-\Omega(\sigma^{1-\epsilon})}$, for any $\epsilon > 0$.
\end{theorem}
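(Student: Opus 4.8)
The plan is to derive persistence and liveness entirely from Theorem~\ref{thm:nak_blk} (Nakamoto blocks stabilize), by proving that whenever $\beta < (1-\beta)/(1+(1-\beta)\lambda\Delta)$, Nakamoto blocks occur in every sufficiently long time window except with probability $e^{-\Omega(\sigma^{1-\epsilon})}$. Two preliminary observations set the scene. First, the honest fictitious tree $\T_h$ of Definition~\ref{def:fictitious} is a renewal process: each new level is completed $\Delta$ after it opens and the next level opens after a further $\mathsf{Exp}(\lambda_h)$ wait, so $D_h(t)/t \to \lambda_{\rm growth} := \lambda_h/(1+\lambda_h\Delta)$, and the stated hypothesis is \emph{exactly} $\lambda_a < \lambda_{\rm growth}$. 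Second, for both models the depth of every adversary tree is dominated by a single Poisson$(\lambda_a)$ counting process: $D_i(t) \le N_a(t)-N_a(\tau^h_i)$, in PoW because a mined block attaches to a single parent, and in PS because attaching a copy of a mined block to every existing block still raises the depth of any fixed sub-tree by at most one per arrival — this coincidence of depth processes is why PoW and PS share a threshold.

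The key step is to collapse the whole family $\{E_{ij}\}_{i<j}$ from Definition~\ref{defn:nak_blk_gen} onto one random walk. Substituting $D_i(t)\le N_a(t)-N_a(\tau^h_i)$ into \eqref{eqn:Et} and rearranging, $E_{ij}$ is implied by the event that the process $W(t):=D_h(t-\Delta)-N_a(t)$ satisfies $W(t) > W(\tau^h_i)+\eta_i$ for all $t>\tau^h_j+\Delta$, where $\eta_i := D_h(\tau^h_i+\Delta)-D_h(\tau^h_i-\Delta)$ counts honest levels in a $2\Delta$-window and is $O(1)$ in probability. Since $W$ has positive drift $\lambda_{\rm growth}-\lambda_a$, it follows that a loner block $j$ is a Nakamoto block whenever
\[
\inf_{t>\tau^h_j+\Delta} W(t) \;>\; \max_{i<j}\bigl(W(\tau^h_i)+\eta_i\bigr),
\]
i.e.\ whenever $\tau^h_j$ is a \emph{renewal epoch} of $W$: the cushioned past supremum of $W$ over honest mining times lies strictly below the future infimum of $W$. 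Note this reduction carries no union bound over $i$ — in the PS model all adversary trees are literally driven by the one process $N_a$, and in PoW they are dominated by it.

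It then remains to show that such renewal epochs, intersected with the event that $j$ is a loner, occur at a positive rate with exponentially-tailed gaps. This is standard fluctuation theory for a positive-drift process: a one-sided strong law (after a geometric delay, $W$ stays above every line of slope $<\lambda_{\rm growth}-\lambda_a$) controls the future infimum, the reversed walk controls the cushioned past supremum, and loner blocks are $\Delta$-isolated Poisson points, hence of positive density. Making the one-sided law \emph{uniform} over a polynomially long interval — the only place a polynomial loss enters — yields: in any interval $[s,s+\sigma]$ there is a Nakamoto block except with probability $e^{-\Omega(\sigma^{1-\epsilon})}$ for every $\epsilon>0$. Finally one translates to Definition~\ref{def:public_ledger}: if a transaction sits in a block that is mining-time $\tau=\sigma$ below some honest tip, a chain-growth lower bound forces $\Omega(\sigma)$ honest levels in between, so a Nakamoto block lies on the chain between the two with high probability, and by Theorem~\ref{thm:nak_blk} the prefix up to it — hence the transaction — is common to all honest chains forever (persistence); liveness with $u=\sigma$ follows symmetrically by locking in a Nakamoto block mined shortly after the transaction is seen.

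\textbf{Main obstacle.} The hard part is the infinite-horizon, two-sided control of $W$ behind the renewal-epoch statement: one must preclude $N_a$ from \emph{ever}, at any future $t\to\infty$, overtaking $D_h(\cdot-\Delta)$ started from \emph{any} past honest epoch, which is not a single Chernoff bound but a quantitative renewal estimate, complicated further by the fact that $D_h$ has dependent increments (a renewal, not compound-Poisson, process) and by the $\eta_i$ cushion created by the network delay. Pinning down the density of cushioned renewal epochs and pushing the overall failure probability to $e^{-\Omega(\sigma^{1-\epsilon})}$ uniformly is where essentially all the work lies.
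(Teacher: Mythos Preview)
Your approach is correct and essentially the same as the paper's. The reduction to the single drift process $W(t)=D_h(t-\Delta)-N_a(t)$ and the identification of Nakamoto blocks with cushioned ladder epochs of $W$ is precisely what the paper's first proof of Lemma~\ref{lem:infinite_many_F} does (its events $E_2,E_3$ are your forward-infimum and backward-supremum conditions), and your observation that $D_i(t)\le N_a(t)-N_a(\tau^h_i)$ holds identically in the PoW and PS models is the reason the paper proves only the PS case.

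The one place you are vaguer than the paper is exactly the step you flag as the main obstacle: upgrading ``renewal epochs have positive density'' to $P(\text{no Nakamoto block in }[s,s+\sigma])\le e^{-\Omega(\sigma^{1-\epsilon})}$. The paper does \emph{not} obtain this by invoking fluctuation theory and a single uniformisation; instead it unpacks $\hat F_j^c=\bigl(\bigcup_{i<j<k}\hat B_{ik}\bigr)\cup U_j^c$ with $\hat B_{ik}$ the catch-up of tree $i$ by block $k$, proves the single-race tail $P(\hat B_{ik})\le e^{-c(k-i)}$, and then runs a \emph{recursive} bootstrap: partition $[s,s+\sigma]$ into sub-intervals of length $\sigma^{\alpha}$, exploit independence of the localised events $C_\ell$ across well-separated sub-intervals, union-bound the long-range catch-ups, and iterate the exponent via $m_{k+1}=(2m_k-1)/m_k$ so that $m_k=(k+1)/k\to 1$. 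That iteration is the source of the arbitrary $1-\epsilon$; a single pass gives only $e^{-\Omega(\sqrt\sigma)}$. Your sketch is compatible with this machinery but does not name it.
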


For the Chia Proof-of-Space model, the private attack is analyzed in \cite{cohen2019chia, fan2018scalable}. The growth rate of the private adversary chain is $e\lambda_a$. (The magnification by a factor of $e$ is due to the Nothing-at-Stake nature of the protocol; more on that in Section \ref{sec:chia_analysis}.). Hence the condition for success for the private attack is:
\begin{equation}
    e \lambda_a > \frac{\lambda_h}{1 + \lambda_h \Delta},
\end{equation}
in the fully decentralized setting. This implies the following condition on $\beta$, the adversary fraction of space resources:
\begin{equation}
    e \beta >  \frac{1-\beta}{1 + (1-\beta)\lambda \Delta}.
\end{equation}

For the Chia model, this threshold yields the true threshold as well.
\begin{theorem}
    \label{thm:chia}
    If 
\begin{equation}
e \beta < \frac{1-\beta}{1 + (1-\beta)\lambda \Delta},
\end{equation}
    then the Chia Proof-of-Space protocol generate  transaction ledgers
    satisfying {\em persistence} (parameterized by $\tau=\sigma$) and {\em liveness} (parameterized by $u=\sigma$) in Definition~\ref{def:public_ledger} with probability at least $1-e^{-\Omega(\sigma^{1-\epsilon})}$, for any $\epsilon > 0$.
\end{theorem}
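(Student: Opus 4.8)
The plan is to derive Theorem~\ref{thm:chia} from Theorem~\ref{thm:nak_blk} in exactly the way one would for the PoW/PoS case, the only genuinely new work being a sharp bound on how fast an adversary tree can grow under assumption \textbf{M1-Chia}. Concretely, I would show that under the hypothesis $e\lambda_a<\lambda_h/(1+\lambda_h\Delta)$ (equivalently $e\beta<(1-\beta)/(1+(1-\beta)\lambda\Delta)$) a \emph{Nakamoto block} in the sense of Definition~\ref{defn:nak_blk_gen} occurs among the honest blocks mined in every time window of length $\Theta(\sigma)$, except with probability $e^{-\Omega(\sigma^{1-\epsilon})}$. Since a Nakamoto block freezes the longest chain up to itself forever (Theorem~\ref{thm:nak_blk}), and Chia enjoys chain growth and chain quality \cite{cohen2019chia} so that a $\sigma$-buried transaction is followed within $O(\sigma)$ time by $\Theta(\sigma)$ honest blocks on the main chain, persistence and liveness in the sense of Definition~\ref{def:public_ledger} (with $\tau=u=\sigma$) follow at the stated probability. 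Thus everything reduces to the race between the adversary trees $\T_i(t)$ and the fictitious honest tree $\T_h(t)$ of Definition~\ref{def:fictitious}.

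Two estimates drive this. \textbf{(i) Honest-tree speed.} By construction $\T_h$ gains a level each time an honest block is mined followed by $\Delta$ idle seconds, so its depth is a renewal process with i.i.d.\ inter-level times distributed as $\Delta+\mathrm{Exp}(\lambda_h)$; a Chernoff bound gives $D_h(t-\Delta)-D_h(s+\Delta)\ge \gamma_h(t-s)-c_1$ with $\gamma_h:=\lambda_h/(1+\lambda_h\Delta)$, uniformly in $t$, outside probability $e^{-\Omega(t-s)}$ for linear-size deficits. \textbf{(ii) Adversary-tree speed (the new ingredient).} Fix a tree $\T_i$ born at $\tau_i^h$ and write $u=t-\tau_i^h$. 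Under \textbf{M1-Chia} every block spawns adversary children by an \emph{independent} rate-$\lambda_a$ Poisson process, so if $N_d(u)$ is the number of depth-$d$ adversary descendants present at time $t$, the recursion $\mathbb{E}[N_d(u)]=\int_0^u\lambda_a\,\mathbb{E}[N_{d-1}(u-r)]\,dr$ with $\mathbb{E}[N_0]\equiv1$ yields $\mathbb{E}[N_d(u)]=(\lambda_a u)^d/d!$. Hence, by Markov and Stirling,
\[
\mathbb{P}\big(D_i(t)\ge d\big)=\mathbb{P}\big(N_d(u)\ge1\big)\le\frac{(\lambda_a u)^d}{d!}\le\Big(\frac{e\lambda_a u}{d}\Big)^{d},
\]
which is $e^{-\Omega(u)}$ as soon as $d\ge(e\lambda_a+\eta)u$ for any $\eta>0$; so with overwhelming probability $D_i(t)\le(e\lambda_a+\eta)(t-\tau_i^h)$ for all $t$. (The matching second-moment lower bound, confirming that $e\lambda_a$ is exactly the frontier speed and hence that $e\beta=\dots$ is the private-attack threshold, is in \cite{cohen2019chia,fan2018scalable}; only the upper bound is needed for security.)

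Now combine. The hypothesis says precisely $\gamma_h-e\lambda_a=:\delta>0$. For a loner honest block $b_j$, $F_j$ fails only if some $\T_i$ with $i<j$ ever reaches $D_i(t)\ge D_h(t-\Delta)-D_h(\tau_i^h+\Delta)$ for some $t>\tau_j^h+\Delta$; by (i)--(ii), choosing $\eta<\delta/4$, with overwhelming probability the left side is $\le(e\lambda_a+\eta)(t-\tau_i^h)$ and the right side is $\ge(\gamma_h-\eta)(t-\tau_i^h)-c_1$, so a crossing is possible only while $t-\tau_i^h=O_\delta(1)$, i.e.\ only for trees born within a bounded time before $b_j$. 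Taking a union bound over those finitely-many candidate indices $i$, over the polynomially-many honest blocks of the $\sigma$-window, and over a fine discretization of $t$, and using that loners occur at rate $\lambda_h e^{-2\lambda_h\Delta}>0$ (which itself concentrates over a $\sigma$-window), shows that with probability $1-e^{-\Omega(\sigma^{1-\epsilon})}$ the window contains a loner $b_j$ for which every $E_{ij}$ holds --- a Nakamoto block. This is the desired statement, and Theorem~\ref{thm:chia} follows.

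The main obstacle is step (ii) together with the union bound after it. In PoW and Praos/SnowWhite each adversary tree is essentially a single chain advancing at rate $\lambda_a$, whereas in Chia every one of the unboundedly many adversary trees is a genuine supercritical branching tree whose leading edge moves at the strictly larger speed $e\lambda_a$, and \emph{all} of these trees must be kept permanently behind one honest tree. Two things must be done carefully: first, the constant $e$ has to be pinned down \emph{exactly} --- any looser constant would move the threshold --- which is what the estimate $(\lambda_a u)^d/d!\le(e\lambda_a u/d)^d$ achieves; second, one must control $\sup_{t\ge\tau_i^h}\big(D_i(t)-e\lambda_a(t-\tau_i^h)\big)$ with tails strong enough to survive the double union bound over all trees and over the continuum of times, and near the diagonal $t\downarrow\tau_j^h$ --- where the honest tree has barely pulled ahead, so one needs control of sub-linear, moderate-deviation fluctuations of the branching frontier rather than just its law of large numbers --- the best available bound is only sub-exponential, which is precisely why the exponent in the failure probability is $\sigma^{1-\epsilon}$ rather than $\sigma$.
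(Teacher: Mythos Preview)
Your estimate (ii), $P(D_i(t)\ge d)\le (e\lambda_a u/d)^d$ via $\mathbb{E}[N_d(u)]=(\lambda_a u)^d/d!$ and Markov, is correct and coincides exactly with the paper's Lemma~\ref{theo-tail}; your Yule-process/first-moment derivation is a clean alternative to the paper's branching-random-walk Chernoff computation, and the honest-tree estimate (i) also matches.

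The gap is the passage from (i)--(ii) to ``the $\sigma$-window contains a Nakamoto block with probability $1-e^{-\Omega(\sigma^{1-\epsilon})}$'', which a union bound cannot deliver. For a fixed loner $b_j$, the nearby trees $\T_i$ with $\tau^h_j-\tau^h_i=O_\delta(1)$ catch up with \emph{constant} probability---your rate bounds (i)--(ii) carry no force at the $O(1)$ timescale---so the most one can conclude per block is $P(\hat F_j)\ge p$ for some fixed $p>0$. This is precisely the content of the paper's Lemma~\ref{lem:infinite_many_F_chia}, where the nearby trees are neutralized not by a rate argument but by conditioning on a local event $G_n$ of no adversary growth near $b_j$. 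Amplifying from $P(\hat F_j)\ge p$ to $P(B_{s,s+\sigma})\le e^{-\Omega(\sigma^{1-\epsilon})}$ then requires an \emph{independence} argument, not a union bound: the paper (Lemma~\ref{lem:time_strong_chia}) partitions the window into sub-intervals, observes that the events $C_\ell$ (``no Nakamoto block in sub-interval $\ell$ due to local catch-ups'') for well-separated sub-intervals are independent because under {\bf M1-Chia} the adversary trees rooted at different honest blocks are independent, and then recursively bootstraps the exponent from $1/2$ toward $1$. Your attribution of the exponent $\sigma^{1-\epsilon}$ to ``moderate-deviation fluctuations of the branching frontier'' is therefore a misdiagnosis: the tail in (ii) is genuinely exponential once $d>(e\lambda_a+\eta)u$; the $1-\epsilon$ is an artefact of this bootstrapping recursion, which at each stage trades sub-interval length against sub-interval count.
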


The security thresholds for the different models are plotted in Figure~\ref{fig:security_threshold}, comparing to existing lower bounds in the literature.

\subsection{Approach}

To prove Theorems \ref{thm:pow_pos} and \ref{thm:chia}, we use the technique of Nakamoto blocks developed in Section \ref{sec:nak_blks}. Theorem \ref{thm:nak_blk} states that Nakamoto blocks remain in the longest chain forever. The question is whether they exist and appear frequently regardless of the adversary strategy. If they do, then the protocol has liveness and persistency: honest transactions can enter the ledger frequently through the Nakamoto blocks, and once they enter, they remain at a fixed location in the ledger. More formally, we have the following result.

\begin{lemma}
\label{lem:nak_secure}
Define $B_{s,s+t}$ as the event that there is no Nakamoto blocks in the time interval $[s,s+t]$. If \begin{equation}
    P(B_{s,s+ t}) < q_t < 1
\end{equation}
for some $q_t$ independent of $s$ and the adversary strategy, then the protocol generates   transaction ledgers
    satisfying {\em persistence} (parameterized by $\tau=\sigma$) and {\em liveness} (parameterized by $u=\sigma$) in Definition~\ref{def:public_ledger} with probability at least $1-q_\sigma$.
\end{lemma}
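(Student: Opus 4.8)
The plan is to translate the probabilistic guarantee on the existence of Nakamoto blocks directly into the persistence and liveness properties of Definition \ref{def:public_ledger}, using Theorem \ref{thm:nak_blk} as the bridge. First I would fix an arbitrary transaction ${\sf tx}$ and consider the two properties separately. For liveness (parameterized by $u=\sigma$): suppose ${\sf tx}$ is received by all honest nodes by some time $s_0$. By the honest mining process (assumption {\bf M2}), over the window $[s_0, s_0+\sigma]$ honest nodes keep extending the longest chains in their views, and any honest block mined after $s_0$ and included in a local longest chain can carry ${\sf tx}$ (if it is not already in the chain). The event $B_{s_0, s_0+\sigma}$ that no Nakamoto block is mined in $[s_0, s_0+\sigma]$ has probability at most $q_\sigma$; on its complement, there is a Nakamoto block $b_j$ with mining time $\tau^h_j \in [s_0, s_0+\sigma]$. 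By Theorem \ref{thm:nak_blk}, the prefix of the longest chain up to $b_j$ is stabilized for all $t > \tau^h_j + \Delta$, i.e.\ it is contained in $\C(t)$ for all honest nodes thereafter. Since $b_j$ (or some honest block on the stabilized chain after $s_0$) either already contains ${\sf tx}$ or the honest node mining it will include ${\sf tx}$ because it has been heard, ${\sf tx}$ occupies a fixed position in the ledger of every honest node from then on. Hence liveness holds with probability at least $1 - q_\sigma$.

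Next I would handle persistence (parameterized by $\tau=\sigma$). Suppose ${\sf tx}$ appears in a block $b$ that is mined more than $\sigma$ time before the tip of some honest node's main chain at some time $t_1$; so $b$ is mined at time $\tau_b$ and some honest tip at time $t_1$ has mining time $\geq \tau_b + \sigma$. I want to show $b$'s position in the ledger is common to all honest nodes and immutable. The key observation is that on the complement of $B_{\tau_b, \tau_b + \sigma}$ there is a Nakamoto block $b_j$ with $\tau^h_j \in [\tau_b, \tau_b + \sigma]$, and moreover $b_j$ lies on the longest chain at its mining time (it is an honest block appended to the local longest chain by {\bf M2}, and by Theorem \ref{thm:nak_blk} stays there forever). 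Any honest chain that has grown past depth of $b_j$ — which is guaranteed once a chain has a tip mined at time $\geq \tau_b + \sigma \geq \tau^h_j$, using the chain-growth bound $L(t) - L(\tau^h_j) \ge D_h(t-\Delta) - D_h(\tau^h_j + \Delta)$ implicit in the Nakamoto-block machinery — must contain $\C(\tau^h_j)$ as a prefix, and $b$ (being a block confirmed deep enough below $b_j$) sits in that common stabilized prefix. I would need to verify the bookkeeping that $\sigma$ time of chain growth suffices to bury $b$ below a Nakamoto block; this follows because within $[\tau_b, \tau_b+\sigma]$ a Nakamoto block appears (off the bad event) and Nakamoto blocks are loners on the honest fictitious tree whose depth lower-bounds real chain growth.

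Finally I would take the union bound over the (at most two) bad events and conclude that both persistence and liveness hold for ${\sf tx}$ with probability at least $1 - q_\sigma$, using that $q_t$ is independent of the interval's starting point $s$ and of the adversary strategy, so the bound applies uniformly. The main obstacle I anticipate is \emph{not} the probabilistic part — that is handed to us as a hypothesis — but rather the deterministic reduction: carefully arguing that ``a Nakamoto block appears within a $\sigma$-window after ${\sf tx}$ is seen (or buried)'' really does pin down ${\sf tx}$'s ledger position for \emph{all} honest nodes simultaneously and forever. This requires combining Theorem \ref{thm:nak_blk} (which stabilizes the chain up to the Nakamoto block) with the chain-growth estimate to ensure ${\sf tx}$ is actually below the Nakamoto block in the chain, and checking the $\Delta$-offsets in the definitions of $E_{ij}$ and the loner condition line up so that the stabilized prefix is identical across honest views. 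Everything else is a routine application of the union bound and the definitions.
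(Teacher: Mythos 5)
Your liveness argument is essentially the same as the paper's: find a Nakamoto block $b_j$ with $\tau^h_j\in[s_0,s_0+\sigma]$ off the bad event $B_{s_0,s_0+\sigma}$, note that the miner of $b_j$ has already heard ${\sf tx}$, so ${\sf tx}$ is in $b_j$ or an ancestor, and conclude by Theorem~\ref{thm:nak_blk}. The persistence half, however, is structured differently from the paper's, and this is where I'd flag the one non-routine issue.

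The paper argues persistence by contradiction via a common-prefix property: fix two honest chain-views $\C_t$ and $\C_{t'}$ with $t<t'$, and show that $\C_t^{\lceil\sigma}\not\preceq\C_{t'}$ forces \emph{every} honest block mined in $[t-\sigma,t]$ to fail the Nakamoto condition $\hat F_j$ (it is either not in $\C_{t'}$, not in $\C_t$, or in neither), so $P(\C_t^{\lceil\sigma}\not\preceq\C_{t'})\le P(B_{t-\sigma,t})<q_\sigma$. You instead try a direct argument: place a Nakamoto block $b_j$ in $[\tau_b,\tau_b+\sigma]$ and argue $b$ sits in the stabilized prefix. That route does work, but the step you flag (``verify the bookkeeping that $\sigma$ time of chain growth suffices to bury $b$ below a Nakamoto block'') is phrased in terms of \emph{depth}, and depth is the wrong invariant: a block shallower than $b_j$ need not lie on the path from genesis to $b_j$. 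The observation that actually closes the gap is about \emph{ancestry in the blocktree}, not depth: since $\tau_b<\tau^h_j$, the block $b$ cannot be a descendant of $b_j$; and once $b_j$ has entered $\C^{(p)}$ (at any $t>\tau^h_j+\Delta$) while $b$ is still in $\C^{(p)}$, both lie on the same chain, so $b$ must be an ancestor of $b_j$, and Theorem~\ref{thm:nak_blk} then freezes $b$'s position in every honest view. The chain-growth bound $L^{(p)}(t)-L^{(p)}(s)\ge D_h(t-\Delta)-D_h(s+\Delta)$ is needed inside the proof of Theorem~\ref{thm:nak_blk} itself, not again here. So: your decomposition is correct and arguably more elementary than the paper's contradiction/common-prefix formulation, but you should replace the depth-based ``burying'' heuristic by the ancestry argument; otherwise the persistence step is not actually justified as written.
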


The proof of Lemma \ref{lem:nak_secure} can be found in \S \ref{sec:persist_live}.
This reduces the problem to that of bounding the probability that there are no Nakamoto blocks in a long duration. Here we follow a similar style of reasoning as in the analysis of occurrence of pivots in the Sleepy Consensus protocol \cite{sleepy}:
\begin{enumerate}
    \item Show that the probability that the $j$-th honest block is a Nakamoto block is lower bounded by some $p >0$ for all $j$ and for all adversary strategy, in the parameter regime when the private attack growth rate is less than the honest chain growth rate.  
    \item Bootstrap from (1) to bound the probability of the event $B_{s,t}$, an event of no occurrence of Nakamoto blocks for a long time.
    
\end{enumerate}

Intuitively, if (1) holds, then one would expect that the chance that Nakamoto blocks do not occur over a long time is low, provided that a block being Nakaomoto is close to independent of another block being Nakamoto if the mining times of the two blocks are far apart. We perform the bootstrapping by exploiting this fact for the various models under consideration.

In \cite{sleepy}, the bootstrapping yields a bound $\exp(-\Omega(\sqrt{t}))$ on $P(B_{s,s+t})$. By recursively applying the bootstrapping procedure, we are able to get a bound $\exp(-\Omega(t^{1-\epsilon}))$ on $P(B_{s,s+t})$, for any $\epsilon> 0$. 
We apply this general analysis strategy to the three models in the next two subsections.

\subsection{Nakamoto PoW and Praos/SnowWhite PoS Models}
\label{sec:powpos_analysis}

This subsection is dedicated to proving Theorem \ref{thm:pow_pos}. We will show that if 
\begin{equation}
    \lambda_a < \frac{\lambda_h}{1 + \lambda_h \Delta}, 
\end{equation}
then Nakamoto blocks occur frequently and regularly under both the PoW and the Praos/SnowWhite PoS models. Since the adversary in the Praos/SnowWhite PoS model is stronger, it suffices for us to prove the statement in that model.

As outlined in the section above, to prove Theorem \ref{thm:pow_pos}, we need to show that there exists constants $A_{\epsilon},a_{\epsilon}>0$ such that
$$P(B_{s,s+t}) < A_{\epsilon}\exp(-a_{\epsilon} t^{1-\epsilon})$$ for all $s,t>0$.
In this context, we first establish that the probability of occurrence of a Nakamoto block is bounded away from $0$. 

\begin{lemma}
\label{lem:infinite_many_F}
If 
$$\lambda_a < \frac{\lambda_h}{1+\Delta\lambda_h},$$
there exists a constant $p>0$ such that the probability that the $j$-th honest block is a Nakamoto block is  at least  $p$ for all $j$. 
\end{lemma}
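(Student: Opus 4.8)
The goal is to lower-bound, uniformly over $j$ and over all adversary strategies, the probability that the $j$-th honest block is a Nakamoto block, i.e. that it is a loner and that $F_j = \bigcap_{i=0}^{j-1} E_{ij}$ occurs, where $E_{ij}$ says the adversary tree $\T_i$ never catches up with the delayed fictitious honest tree $D_h(\cdot - \Delta)$ past time $\tau_j^h + \Delta$. The key structural fact, valid in the Praos/SnowWhite model by {\bf M1-PS}, is that each adversary tree $\T_i$ grows in depth \emph{exactly like the private-attack adversary chain}: whenever a Poisson-$\lambda_a$ arrival occurs, $D_i$ can increase by at most $1$ (the adversary copies the block onto all tips, but depth increases by one). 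So $D_i(t) - D_i(\tau_i^h)$ is stochastically dominated by a Poisson process of rate $\lambda_a$ started at time $\tau_i^h$, and this domination holds \emph{conditionally on everything the honest side does}, because the adversary mining process is independent of honest mining. Meanwhile $D_h(t)$ is a deterministic functional of the honest mining times: it is a renewal-type process that advances one level every time $\Delta$ has elapsed since the last level-advancing block, so $D_h$ grows at rate $\lambda_h/(1+\lambda_h\Delta)$ by the law of large numbers / renewal theorem.

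\textbf{Main steps.} First I would fix the event $U_j$ that $h_j$ is a loner: since honest blocks are Poisson rate $\lambda_h$, $P(U_j) \ge e^{-2\lambda_h\Delta}$ by looking at the gaps to the previous and next honest blocks — this is bounded below by a constant $p_0 > 0$ independent of $j$. Second, I would condition on the honest mining times and work on a ``good'' event $G_j$ for the honest process: on $G_j$, for every $i < j$ and every $t > \tau_j^h + \Delta$, the honest fictitious tree satisfies $D_h(t-\Delta) - D_h(\tau_i^h + \Delta) \ge (\lambda_g - \delta)(t - \tau_i^h)$ for a suitable slack $\delta$, where $\lambda_g = \lambda_h/(1+\lambda_h\Delta) > \lambda_a$; such a uniform linear lower bound holds with probability bounded below by a constant, using a maximal/Doob-type inequality on the renewal process $D_h$ together with the gap $\lambda_g > \lambda_a$ (one also needs the crude bound that $D_h$ can't fall too far below its mean on any ray to the future, handled by a geometric-sum tail). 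Third, on $G_j \cap U_j$, I would bound $P(F_j^c \mid \text{honest times})$ by a union over $i$: $E_{ij}^c$ forces the rate-$\lambda_a$ Poisson process $D_i$ to exceed a line of slope $\lambda_g - \delta > \lambda_a + \delta'$ at some $t > \tau_j^h + \Delta$; by a standard exponential-martingale (Cramér / ballot-type) estimate, $P(E_{ij}^c \mid \text{honest times}) \le C e^{-c(\tau_j^h + \Delta - \tau_i^h)}$, and summing the geometric series over $i = 0, \ldots, j-1$ (the exponents are spaced by the honest interarrival times, which are $\Omega(1)$ on average) gives a bound that can be made, say, $\le 1/2$ — this requires choosing the ``good event'' parameters so that the constant in front is small, which one arranges by additionally intersecting with a high-probability event that no adversary tree has gotten a large head start relative to $D_h$ by time $\tau_j^h$. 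Finally, combining: $P(h_j \text{ is Nakamoto}) \ge P(U_j \cap G_j) - P(F_j^c \cap U_j \cap G_j) \ge p_0' - \tfrac12 p_0' =: p > 0$, uniformly in $j$.

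\textbf{Main obstacle.} The delicate point is the third step — controlling the union over \emph{all} past indices $i < j$ simultaneously, uniformly in $j$, so that the total failure probability is a constant strictly less than $P(U_j \cap G_j)$ rather than merely finite. Two issues intertwine here: (a) $j$ can be arbitrarily large, so the geometric decay $e^{-c(\tau_j^h - \tau_i^h)}$ must genuinely kick in, which needs a good event ensuring the honest arrivals are not anomalously clustered and that $D_i(\tau_j^h) - D_h(\tau_j^h - \Delta)$ is not already dangerously positive (an adversary tree that legitimately got lucky early on); (b) the exponential-martingale estimate for $E_{ij}^c$ must be uniform and must use the \emph{strict} inequality $\lambda_a < \lambda_g$ in an essential way to produce a positive rate $c$. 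I expect the cleanest route is to phrase $E_{ij}^c$ as a first-passage event for the difference process $D_i(t) - \big(D_h(t-\Delta) - D_h(\tau_i^h+\Delta)\big)$ over the ray $t > \tau_j^h + \Delta$ and bound it by a supermartingale maximal inequality after an exponential tilt, handling the randomness of $D_h$ by conditioning and the randomness of $D_i$ by its Poisson law; all the model-specific input (PoW vs. PS) enters only through ``$D_i$ is dominated by Poisson($\lambda_a$)'', which is exactly where {\bf M1-PS} is used and why proving it for the PS model suffices.
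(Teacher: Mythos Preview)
Your plan is broadly on the right track and is close in spirit to the paper's \emph{alternative} proof (\S\ref{sec:lem:infinite_many_F}): a union bound over past indices $i$, exploiting that each $D_i$ is dominated by a rate-$\lambda_a$ Poisson counter while $D_h$ grows at rate $\lambda_g=\lambda_h/(1+\lambda_h\Delta)>\lambda_a$, so that $P(E_{ij}^c)$ decays exponentially in $\tau_j^h-\tau_i^h$. That paper's proof does exactly this, but handles the uniformity-in-$j$ issue by splitting on a large depth parameter $B$ (near terms vs.\ far terms) rather than via your ``good event'' $G_j$.

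One genuine technical issue: your event $G_j$ as written is too strong to have uniformly positive probability. You require $D_h(t-\Delta)-D_h(\tau_i^h+\Delta)\ge(\lambda_g-\delta)(t-\tau_i^h)$ for \emph{all} $i<j$ and $t>\tau_j^h+\Delta$, including $i=j-1$ and $t$ just above $\tau_j^h+\Delta$. On the loner event the left side is at most $1$, while the right side is at least $(\lambda_g-\delta)\cdot 2\Delta$, which can exceed $1$ for moderate $\lambda_h\Delta$; so $P(G_j)=0$ in some regimes. The fix is to relax $G_j$ to allow an additive constant (i.e.\ require the linear lower bound only once $t-\tau_i^h$ exceeds a fixed threshold) and to treat the finitely many nearby indices $i$ separately---which is precisely the role of the parameter $B$ in the paper's alternative proof. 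Your ``main obstacle'' paragraph anticipates this, but the resolution needs this near/far split made explicit.

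The paper's \emph{primary} proof (\S\ref{sec:lem:infinite_many_F_2}) takes a genuinely different and cleaner route that you may find instructive. It avoids the union bound entirely by decomposing $F_j$ into a ``future'' event $E_2$ (after $\tau_j^h$, the honest depth stays ahead of the cumulative adversary count) and a ``past'' event $E_3$ (the same, time-reversed, on $[0,\tau_j^h]$). Since the level-increase times $X_d$ of $D_h$ are i.i.d.\ and the adversary arrivals are Poisson, both processes are time-reversible, giving $P(E_3)\ge P(E_2')$ for a shifted copy $E_2'$ and hence $P(\hat F_j)\ge e^{-c\Delta}P(E_2)^2$ uniformly in $j$. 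Showing $P(E_2)>0$ then reduces to the classical fact that a one-dimensional random walk with strictly positive drift (the drift being positive exactly when $\lambda_a<\lambda_g$) has a positive probability of never returning to $0$. This time-reversal trick converts the awkward simultaneous control over all past $i<j$ into a single random-walk escape question, sidestepping both the geometric-sum bookkeeping and the uniformity-in-$j$ issue you flagged.
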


The proof of Lemma \ref{lem:infinite_many_F} is given in \S \ref{sec:lem:infinite_many_F_2} of the Appendix. It is based on connecting the event of being a Nakamoto block to the event of a random walk never returning to the starting point. An alternative proof is presented in \S \ref{sec:lem:infinite_many_F} of the Appendix.

We next obtain a bound on $P(B_{s,s+t})$.

\begin{lemma}
\label{lem:pow-decay-delta-pos}
If 
$$\lambda_a < \frac{\lambda_h}{1+\Delta\lambda_h},$$
then for any $\epsilon > 0$
there exist constants $ a_\epsilon, A_\epsilon$ so that for all $s,t\geq 0$,
$$P(B_{s,s+t}) < A_{\epsilon}\exp(-a_{\epsilon} t^{1-\epsilon}).$$
\end{lemma}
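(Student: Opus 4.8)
The plan is to use the two-step strategy outlined in Section~\ref{sec:analysis}: first establish (via Lemma~\ref{lem:infinite_many_F}) that each honest block is a Nakamoto block with probability at least some fixed $p>0$, and then bootstrap this into a stretched-exponential tail bound on $P(B_{s,s+t})$, the probability that no Nakamoto block appears in a window of length $t$. The key to the bootstrapping is an approximate independence structure: whether the $j$-th honest block is a Nakamoto block depends, up to small error, only on the mining events in a bounded time neighborhood of $\tau^h_j$. Concretely, I would introduce a truncated event $\hat E_{ij}$ where the requirement ``$D_i(t) < D_h(t-\Delta)-D_h(\tau^h_i+\Delta)$ for all $t>\tau^h_j+\Delta$'' is only checked for $t$ up to some horizon, and a truncated Nakamoto-block event; the exponential decay of the private-attack success probability (the growth rate of each adversary tree $D_i$ being strictly below that of $D_h$ when $\lambda_a<\lambda_h/(1+\Delta\lambda_h)$, which is exactly the regime of Lemma~\ref{lem:infinite_many_F}) lets me couple the truncated and untruncated events with an error that decays exponentially in the truncation horizon.

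First I would make precise the ``catch-up'' estimate: for $k>i$, let $\hat B_{ik}$ be the event that the adversary tree $\T_i$ catches up to the fictitious honest tree $D_h$ at some time around when the $k$-th honest block is mined, i.e. $D_i(t)\ge D_h(t-\Delta)-D_h(\tau^h_i+\Delta)$ for some $t\in(\tau^h_i,\tau^h_k+\Delta)$. Using that in the Praos/SnowWhite model the adversary chain grows in depth exactly like a single private chain (the Nothing-at-Stake replication inflates the number of adversary blocks but not the depth), a large-deviations / renewal argument gives $P(\hat B_{ik})\le e^{-c(k-i-1)}$ for a constant $c>0$ depending on the gap $\lambda_h/(1+\Delta\lambda_h)-\lambda_a$. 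Moreover these catch-up events are independent when the corresponding index intervals $(\tau^h_i,\tau^h_k+\Delta)$ are disjoint, because the honest mining process is Poisson, the adversary Poisson process is independent of it, and the fictitious honest tree depth $D_h$ is a deterministic functional of the honest mining times. The $j$-th honest block fails to be a Nakamoto block essentially when it is not a loner or when some $\hat B_{ij'}$ occurs with $i<j\le j'$; so $B_{s,s+t}$ forces a long ``interval cover'' of bad catch-up events across the honest blocks mined in $[s,s+t]$.

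Next I would carry out the bootstrapping itself. Partition the window $[s,s+t]$ into $\sqrt{t}$-length subwindows; in each subwindow, conditionally on the Poisson honest-block count being typical, the probability that it contains no Nakamoto block is bounded away from $1$ by a chain-rule argument over the honest blocks inside it, using step~(1) and the exponential decay above to control the dependence between a candidate block and the bad events reaching back more than $O(\log(1/\epsilon))$ honest blocks. Multiplying over the roughly $\sqrt t$ essentially-independent subwindows gives $P(B_{s,s+t})\le \bar A\exp(-\bar a\sqrt t)$; then, recursively, one splits $[s,s+t]$ into $t^{\epsilon}$ blocks of length $t^{1-\epsilon}$, applies the previous-level bound in each, and uses disjointness-independence to multiply, upgrading the exponent from $1/2$ towards $1-\epsilon$ in finitely many iterations. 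This yields $P(B_{s,s+t})<A_\epsilon\exp(-a_\epsilon t^{1-\epsilon})$ for every $\epsilon>0$, as claimed.

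The main obstacle, I expect, is making the approximate independence rigorous: $D_h(t)$ and the adversary trees are global objects, so I must carefully truncate each $E_{ij}$ to a finite horizon, show the truncation error is summable and exponentially small (this is where the strict inequality $\lambda_a<\lambda_h/(1+\Delta\lambda_h)$ and the associated negative drift of $D_i(t)-(D_h(t-\Delta)-D_h(\tau^h_i+\Delta))$ is essential), and then verify that the truncated events over disjoint time intervals are genuinely independent under the Poisson structure. A secondary technical point is handling the Poisson fluctuations in the number of honest blocks per subwindow (so that ``$\sqrt t$ honest blocks'' holds with overwhelming probability), which is a routine Chernoff bound but must be threaded through the recursion so the constants $a_\epsilon,A_\epsilon$ stay finite. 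Once the truncation and independence are in place, the recursive doubling of the exponent is a clean induction.
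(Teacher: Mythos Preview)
Your proposal is correct and follows essentially the same approach as the paper: establish the exponential catch-up bound $P(\hat B_{ik})\le e^{-c(k-i-1)}$, use Lemma~\ref{lem:infinite_many_F} for the positive Nakamoto-block probability, exploit independence of the $\hat B_{ik}$ over disjoint time intervals, and then bootstrap by partitioning $[s,s+t]$ into sub-intervals and recursively improving the exponent toward $1-\epsilon$. The paper's implementation differs only in technical packaging: rather than a truncation/chain-rule argument within each subwindow, it takes every third subwindow (a $1\bmod 3$ spacing trick) so that the local ``no Nakamoto block here'' events $C_\ell$ are \emph{exactly} independent, and it separately bounds two explicit atypical events $B,\tilde B$ collecting all long-range $\hat B_{ik}$ (those with $\tau^h_k-\tau^h_i$ exceeding the subwindow length, or straddling the whole window) via a direct union bound using the exponential decay of $P(\hat B_{ik})$ and Poisson tail estimates. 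The recursion in the paper is also slightly more explicit, with subwindow length $t^{m/(2m-1)}$ at stage $m$, yielding exponents $1/2,2/3,3/4,\ldots\to 1$.
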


Proof of Lemma \ref{lem:pow-decay-delta-pos} is given in \S \ref{sec:lem:pow-decay-delta-pos} of the Appendix. 
Then combining Lemma \ref{lem:pow-decay-delta-pos} with Lemma \ref{lem:nak_secure} implies Theorem \ref{thm:pow_pos}.

\subsection{Chia Proof-of-Space Model}
\label{sec:chia_analysis}

This subsection is dedicated to proving Theorem \ref{thm:chia}. We will show that if 
\begin{equation}
    e \lambda_a < \frac{\lambda_h}{1 + \lambda_h \Delta}, 
\end{equation}
then Nakamoto blocks occur frequently and regularly under the Chia Proof-of-Space model.

Since the occurrence of a Nakamoto block depends on whether the adversary trees from the previous honest blocks can catch up with the (fictitious) honest tree, we next turn to an analysis of the growth rate of an adversary tree. Note that under assumption ${\bf M1-Chia}$, adversary blocks are mined at rate $\lambda_a$ independently at each block of the mother tree $\T(t)$. Hence, each adversary tree $\T_i(t)$ grows statistically in the same way (and independent of each other). Without loss of generality, let us focus on the adversary tree $\tt_0(t)$, rooted at genesis, of the tree $\T(t)$. The depth of the tree $\tt_0(t)$ is $D_0(t)$ and defined as the maximum depth of its blocks. The genesis block is always at depth $0$ and hence $\tt_0(0)$ has depth zero.

With the machinery of branching random walks, we can show that the growth rate of depth of $\tt_0(t)$ is $e\lambda_a$ 
while the total number of adversary blocks in $\tt_0(t)$ 
grows exponentially with time $t$. Hence, compared to the Praos/SnowWhite model we just analyzed, the growth rate of each adversary tree is magnified by a factor of $e$. Thus, the Nothing-at-Stake phenomenon is more significant in the Chia model compared to the Praos/SnowWhite model, due to the independence of mining opportunities at different blocks.

We will also need  a tail bound on $D_0(t)$. While such estimates can be read from \cite{shi}, we bring instead a quantitative statement suited for our needs.
\begin{lemma}
   \label{theo-tail}
  For $m \geq 1$,
  \begin{equation}
    \label{eq:Ofertheo}
    P(D_0(t)\geq m)\leq   \left(\frac{e\lambda_a t}{m}\right)^m.
  \end{equation}
\end{lemma}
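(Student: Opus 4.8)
The plan is to control the tail of $D_0(t)$ by a first-moment (union-bound) argument exploiting the branching structure of the adversary tree. Under {\bf M1-Chia}, the adversary tree $\tt_0(t)$ is a continuous-time branching process: the genesis block sits at depth $0$, and every block of $\tt_0$ that has already been created, whatever its depth, produces children one level deeper according to its own independent rate-$\lambda_a$ Poisson clock that runs forever. For $k \geq 0$ let $N_k(t)$ be the number of depth-$k$ blocks present in $\tt_0(t)$ at time $t$; since blocks are never removed, $D_0(t) \geq m$ if and only if $N_m(t) \geq 1$. Hence by Markov's inequality $P(D_0(t) \geq m) \leq \mathbb{E}[N_m(t)]$, and the task reduces to computing $\mathbb{E}[N_m(t)]$.

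Next I would evaluate $\mathbb{E}[N_m(t)]$ exactly. Because $N_0(t) \equiv 1$ and a new depth-$k$ block appears precisely when one of the $N_{k-1}(t)$ existing depth-$(k-1)$ blocks rings its clock, $N_k$ is a counting process with stochastic intensity $\lambda_a N_{k-1}(t)$; taking expectations (legitimate by an induction on $k$ that simultaneously shows $\mathbb{E}[N_k(t)] < \infty$) gives the recursion
\[
\frac{d}{dt}\,\mathbb{E}[N_k(t)] = \lambda_a\,\mathbb{E}[N_{k-1}(t)], \qquad \mathbb{E}[N_0(t)] \equiv 1,
\]
whose solution is $\mathbb{E}[N_k(t)] = (\lambda_a t)^k/k!$. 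Equivalently, one can write $\mathbb{E}[N_m(t)]$ directly as the expected number of length-$m$ ancestral chains of adversary blocks (genesis $\to b_1 \to \cdots \to b_m$) born by time $t$, which by the independence of the Poisson clocks and Fubini equals $\lambda_a^m \int_{0 < s_1 < \cdots < s_m \leq t} ds_1 \cdots ds_m = \lambda_a^m t^m/m!$. Either way, $P(D_0(t) \geq m) \leq (\lambda_a t)^m/m!$.

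To finish, I would replace $m!$ by the elementary lower bound $m! \geq (m/e)^m$, which follows from $e^m = \sum_{j \geq 0} m^j/j! \geq m^m/m!$, obtaining
\[
P(D_0(t) \geq m) \leq \frac{(\lambda_a t)^m}{m!} \leq \frac{(\lambda_a t)^m}{(m/e)^m} = \left(\frac{e \lambda_a t}{m}\right)^m,
\]
which is exactly \eqref{eq:Ofertheo}.

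I do not expect any serious obstacle here. The only step needing a little care is the passage from the stochastic intensity $\lambda_a N_{k-1}(t)$ to the ODE for $\mathbb{E}[N_k(t)]$ — one must know a priori that these expectations are finite, which the induction provides — or, in the alternative route, the Fubini interchange used to compute the expected number of length-$m$ ancestral chains. Beyond that, the argument is just Markov's inequality followed by $m! \geq (m/e)^m$, so this is essentially the crude but sufficient first-moment estimate, in the spirit of (though more self-contained than) the bounds one could read off from \cite{shi}.
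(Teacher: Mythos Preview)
Your proposal is correct. Both proofs start from the same first-moment (union-bound) step --- $P(D_0(t)\geq m)\leq \sum_{v\in\I_m}P(S_v\leq t)=\mathbb E[N_m(t)]$ --- but then diverge in how they handle this expectation. The paper adopts a branching-random-walk viewpoint: it indexes depth-$m$ vertices by tuples $v=(i_1,\ldots,i_m)$, observes $S_v$ is a sum of $|v|=i_1+\cdots+i_m$ independent exponentials, applies a Chernoff bound $P(S_v\leq t)\leq (\lambda_a/(\lambda_a-\theta))^{|v|}e^{-\theta t}$ for $\theta<0$, sums the resulting geometric series to get $(-\theta/\lambda_a)^{-m}e^{-\theta t}$, and finally optimizes in $\theta$ (taking $\theta=-m/t$). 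You instead compute $\mathbb E[N_m(t)]$ \emph{exactly} via the intensity recursion (or equivalently the $\lambda_a^m\!\int_{0<s_1<\cdots<s_m\leq t}ds$ integral), obtaining the sharper intermediate bound $(\lambda_a t)^m/m!$ and then discarding the extra precision with $m!\geq(m/e)^m$. Your route is a bit more elementary and self-contained (no BRW labeling or MGF optimization needed), and yields the sharper Poisson-tail bound $(\lambda_a t)^m/m!$ along the way; the paper's route is the standard large-deviations maneuver for BRW minima and generalizes more readily to other offspring laws where exact level means are unavailable.
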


Details on the analysis of $\tt_0(t)$ and the proof of Lemma \ref{theo-tail} are in \S \ref{app:brw} in the Appendix. 

With Lemma \ref{theo-tail}, we show below that in the regime $e \lambda_a <  \frac{\lambda_h}{1 + \lambda_h\Delta}$, Nakamoto blocks has a non-zero probability of occurrence. 

\begin{lemma}
\label{lem:infinite_many_F_chia}
If 
$$e \lambda_a <  \frac{\lambda_h}{1 + \lambda_h\Delta},$$
then there is a $p > 0$ such that  that probability the $j$-th honest block is a Nakamoto block is greater than $p$ for all $j$. 

\end{lemma}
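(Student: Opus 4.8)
The plan is to bound, uniformly in $j$ and over all adversary strategies, the probability that the $j$-th honest block is a Nakamoto block, by showing that a certain ``good'' event occurs with probability at least some fixed $p > 0$, and that on this good event the block is Nakamoto. Recall the $j$-th honest block is Nakamoto if it is a loner and $F_j = \bigcap_{i=0}^{j-1} E_{ij}$ holds, where $E_{ij}$ says the adversary tree $\T_i$ never catches up with the fictitious honest tree after time $\tau^h_j + \Delta$. The first step is to condition on the block being a loner: since honest blocks are mined as a Poisson process of rate $\lambda_h$ (independent of the adversary), the probability that no other honest block falls in $[\tau^h_j - \Delta, \tau^h_j + \Delta]$ is at least $e^{-2\lambda_h \Delta} > 0$, and by the strong Markov property this is (asymptotically, as $j$ grows) independent of the rest of the construction; more carefully, one argues the loner event and the tree-race events can be handled on a common probability space with the loner probability bounded below.

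The heart of the argument is the race analysis. Under {\bf M1-Chia} each adversary tree $\T_i(t)$ evolves as an independent copy of the branching-random-walk-type process $\tt_0$, with depth $D_i(t)$, and the fictitious honest tree $\T_h$ has depth $D_h(t)$ growing at rate $\lambda_{\rm growth}(\lambda_h,\Delta) = \lambda_h/(1+\lambda_h\Delta)$. The hypothesis $e\lambda_a < \lambda_h/(1+\lambda_h\Delta)$ is exactly the statement that the (asymptotic) growth rate $e\lambda_a$ of each adversary tree's depth is strictly below the honest growth rate. I would first establish a ``drift'' statement: there is a constant $\rho$ with $e\lambda_a < \rho < \lambda_{\rm growth}$ such that, except on an event of small probability, $D_h(t-\Delta) - D_h(\tau^h_j+\Delta) \ge \rho(t - \tau^h_j)$ for all $t$ large enough past $\tau^h_j$ — this follows from a law of large numbers / concentration bound for the honest renewal-type process $D_h$ (each level takes time at least $\text{Exp}(\lambda_h)$ plus at most $\Delta$ of ``absorption''). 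Then, using Lemma \ref{theo-tail}, for each $i < j$ I bound $P(D_i(t) \ge D_h(t-\Delta) - D_h(\tau^h_i+\Delta) \text{ for some } t > \tau^h_j+\Delta)$: split into the ``short'' range $t - \tau^h_j$ of order the elapsed gap, handled by a union bound over integer depths using $(e\lambda_a t/m)^m \le e^{-cm}$ once $m \ge 2e\lambda_a t$, say; and the ``long'' range, where the honest lead already exceeds any plausible $D_i$. Summing the resulting geometrically-decaying bounds over $i = 0, \dots, j-1$ (the bound for $E_{ij}^c$ decays exponentially in $j - i$ because the honest tree has had time proportional to $\tau^h_j - \tau^h_i$ to build a lead over $\T_i$) gives $\sum_i P(E_{ij}^c) \le \sum_{k\ge 0} C e^{-ck} = C' < \infty$, and one chooses the constants so that $C' < 1/2$, whence $P(F_j) \ge 1/2$ minus the small-probability exceptional events. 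Combining with the loner bound gives the desired $p > 0$.

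The main obstacle, and the step requiring the most care, is controlling the race over the \emph{infinite} time horizon $t > \tau^h_j + \Delta$ uniformly: Lemma \ref{theo-tail} is a single-time tail bound, so to upgrade it to a statement about $\sup_t$ one must either use a union bound over a discretization of time (the adversary tree depth and honest depth only change at Poisson event times, so one can sum over levels rather than a continuum) together with the fact that once the honest lead is large the adversary tree tail is doubly-exponentially small in that lead, or invoke a supremum tail bound for the branching random walk of the type recorded in \cite{shi}. A second subtlety is that the adversary is \emph{adaptive} and can choose where to place blocks; however, under {\bf M1-Chia} the growth of each $\T_i$ is governed by independent Poisson clocks at each existing block and the adversary's only freedom is which blocks to keep — but keeping more blocks only adds to the tree, so the depth process $D_i(t)$ is stochastically dominated by (in fact equals, if the adversary is greedy) the full branching process, making the bound adversary-independent. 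I would also need to check that the $E_{ij}$ for different $i$, while not independent, are handled purely by the union bound so no independence is required here (independence is only needed later, for Lemma \ref{lem:pow-decay-delta-pos}-type bootstrapping). Finally, the loner event and the honest-tree concentration event both depend only on the honest Poisson process, so they are positively correlated or can be intersected with only a constant-factor loss.
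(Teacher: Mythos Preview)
Your overall structure---bound the loner probability, then bound $P(F_j^c)$ by a union bound over $i$ using the tail Lemma~\ref{theo-tail} and honest-growth concentration---is the right shape, but there is a real gap at the step ``one chooses the constants so that $C' < 1/2$.'' The constants $C,c$ in your bound $P(E_{ij}^c)\le C e^{-c(j-i)}$ are not free parameters: they are determined by $\lambda_a,\lambda_h,\Delta$. For $i$ close to $j$ (say $i=j-1$), the honest tree has essentially no lead over $\T_i$ at time $\tau^h_j+\Delta$, so $P(E_{j-1,j}^c)$ is some fixed constant strictly between $0$ and $1$ that you cannot make small. The first few terms of $\sum_{k\ge 1} Ce^{-ck}$ are therefore $O(1)$ and the sum may well exceed $1$, at which point the union bound tells you nothing.

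The paper fixes exactly this by introducing an auxiliary ``good start'' event $G_n$: that the adversary trees rooted at the $2n{+}1$ honest blocks nearest $b_j$ have depth $0$ for a time of order $n/\lambda_h$. This event has small but strictly positive probability, and \emph{conditional on $G_n$} the near-$j$ catch-up events become exponentially unlikely in $n$ (the adversary starts from nothing while the honest chain has already grown by order $n$). The far-$i$ terms are then handled by the exponential decay you describe. Taking $n$ large makes the conditional union bound strictly below $1$, and multiplying by $P(G_n\mid U_j)>0$ gives the desired $p>0$. The paper also extends the process to a two-sided stationary version so that $P(\hat E_j)$ is literally independent of $j$, sidestepping the uniformity-in-$j$ issue; your approach of seeking a uniform bound directly is fine in principle, but only once the near-term problem is repaired by a conditioning device of this kind.
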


The proof of this result can be found in \S\ref{app:proof_delay} of the Appendix.

Having established the fact that Nakamoto blocks occurs with non-zero frequency, we can bootstrap on Lemma \ref{lem:infinite_many_F_chia} to get a bound on the probability that in a time interval $[s,s+t]$, there are no Nakamoto blocks, i.e. a bound on $P(B_{s,s+t})$.

\begin{lemma}
\label{lem:time_strong_chia}
If 
$$e \lambda_a <  \frac{\lambda_h}{1 + \lambda_h\Delta},$$
then for any $\epsilon > 0$
there exist constants $\bar a_\epsilon,\bar A_\epsilon$ so that for all $s,t\geq 0$,
\begin{equation}
\label{eqn:qst_strong}
P(B_{s,s+t}) \leq \bar A_\epsilon \exp(-\bar a_\epsilon t^{1-\epsilon}).
\end{equation}
\end{lemma}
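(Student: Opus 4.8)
The plan is to obtain this bound exactly as its Proof-of-Work/Proof-of-Stake counterpart, Lemma~\ref{lem:pow-decay-delta-pos}, was obtained: by reducing the statement to three structural conditions and then running the recursive bootstrapping argument. For $i<k$, introduce the \emph{local catch-up} event $\hat B_{ik}$ that the adversary tree $\T_i$ rooted at the $i$-th honest block attains, at some time in the interval $(\tau^h_i,\tau^h_k+\Delta)$, a depth $D_i(\cdot)$ that is at least the corresponding fictitious-honest increment $D_h(\cdot-\Delta)-D_h(\tau^h_i+\Delta)$. Since the $j$-th honest block fails to be a Nakamoto block only if it is not a loner or some $\hat B_{ij}$ (or its infinite-horizon tail) occurs, it suffices to establish: (a) there is $p>0$ with $P(h_j \text{ is a loner and no recent adversary tree catches up})\ge p$ uniformly in $j$ and the adversary; (b) there is $c>0$ with $P(\hat B_{ik})\le e^{-c(k-i-1)}$ for all $k>i+1$; and (c) the events $\hat B_{ik}$ over index ranges whose time windows $(\tau^h_i,\tau^h_k+\Delta)$ are pairwise disjoint are independent. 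Given (a)--(c), the same divide-into-windows-and-recurse scheme as in the proof of Lemma~\ref{lem:pow-decay-delta-pos} first yields a $\exp(-\bar a_2\sqrt t)$ bound and then, by repeatedly bootstrapping (each pass shrinking the exponent deficit because the long-range error terms are themselves now exponentially small in a power of $t$), the claimed $\bar A_\epsilon\exp(-\bar a_\epsilon t^{1-\epsilon})$ for every $\epsilon>0$.

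Condition (a) is essentially Lemma~\ref{lem:infinite_many_F_chia}: in the regime $e\lambda_a<\lambda_h/(1+\lambda_h\Delta)$ a Nakamoto block—which by Definition~\ref{defn:nak_blk_gen} is already required to be a loner—occurs with probability bounded below by a constant, and one gets the ``recent only'' form by truncating the infinite conjunction $F_j$ at a finite look-back window, the tail being absorbed into the constant using (b). Condition (c) is where the Chia model is actually cleaner than PoW/PoS: the adversary makes no choices, so each $\T_i$ evolves as an autonomous family of independent Poisson processes (one per block of $\T(t)$), the fictitious honest tree $D_h$ is a deterministic function of the honest mining times, and hence $\hat B_{ik}$ is measurable with respect to the honest and adversarial arrivals inside $(\tau^h_i,\tau^h_k+\Delta)$; disjoint windows therefore give independence, with no causal-adversary subtlety to manage.

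The main obstacle is condition (b), the exponential tail bound on the catch-up probability, and this is precisely why Lemma~\ref{theo-tail} was proved. Between the $i$-th and $k$-th honest mining events roughly $k-i$ honest blocks are produced, which takes time $T\approx (k-i)/\lambda_h$ up to exponentially small Poisson fluctuations; over such a span the fictitious honest tree gains depth at least $\approx (k-i)/(1+\lambda_h\Delta)$ even under worst-case forking (Definition~\ref{def:fictitious} together with the chain-growth estimate), whereas by Lemma~\ref{theo-tail} the adversary tree satisfies $P(D_i(T)\ge m)\le (e\lambda_a T/m)^m$. Taking $m$ equal to the honest increment, the \emph{strict} inequality $e\lambda_a<\lambda_h/(1+\lambda_h\Delta)$ forces $e\lambda_a T/m$ to be bounded away from $1$, so this probability is $\le e^{-c(k-i)}$; since $D_h$ is a step function it suffices to test the catch-up condition at the $O(k-i)$ honest mining times in the window, so a union bound over those times (and over the adversary depth level) only costs a polynomial factor and still leaves $P(\hat B_{ik})\le e^{-c'(k-i-1)}$. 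The delicate point, which is where the real work lies, is to make the ``with high probability'' statements about the honest block count and about the growth of $D_h$ quantitative with exponentially small error, so that they do not erode the exponential decay; this is handled by standard Chernoff/Poisson concentration, while the genuinely heavy-tailed, exponentially-growing branching-random-walk structure of the adversary tree is exactly what Lemma~\ref{theo-tail} tames. Finally, combining the resulting bound with Lemma~\ref{lem:nak_secure} completes the proof of Theorem~\ref{thm:chia}.
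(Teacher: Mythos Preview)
Your proposal is correct and follows essentially the same approach as the paper's proof in \S\ref{app:proof_time_strong}: the paper likewise reduces to the three ingredients you list (Lemma~\ref{lem:infinite_many_F_chia} for the positive probability of Nakamoto blocks, an exponential bound $P(\hat B_{ik})\le e^{-c_1(k-i-1)}$ derived from Lemma~\ref{theo-tail} together with the concentration estimates appearing as inequality~\eqref{eqn:B_hat_bound}, and independence of the $C_\ell$ events over disjoint deterministic sub-intervals), and then runs the identical two-step bootstrapping argument used for Lemma~\ref{lem:pow-decay-delta-pos}. The only cosmetic difference is that the paper defines $\hat B_{ik}$ pointwise at time $\tau^h_k+\Delta$ and absorbs the ``some time in the window'' aspect into the union over $k$ via Lemma~\ref{lem:F_j}, whereas you define it as a supremum and then union-bound over the $O(k-i)$ step times; both routes give the same exponential tail.
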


The proof of this result can be found in \S\ref{app:proof_time_strong} of the Appendix.
Then combining Lemma \ref{lem:time_strong_chia} with Lemma \ref{lem:nak_secure} implies Theorem \ref{thm:chia}.

\section{Does Nakamoto really always win?}
\label{sec:worst_attack}

We have shown that the threshold for the adversary power beyond which the private attack succeeds is in fact the tight threshold for the security of the three models {\bf M1-PoW}, {\bf M1-PS} and {\bf M1-Chia}. However, security threshold is a statistical concept. Can we say that the private attack is the worst attack in a stronger, deterministic, sense?

Indeed, it turns out that one can, with a slight strengthening of the private attack, in a special case: the PoW model with network delay $\Delta = 0$. In this setting, we can indeed make a stronger statement.

In the PoW model, any attack strategy $\pi$ consists of two components: where to place each new adversary arrival and when to release the adversary blocks. Consider a specific attack $\pi_{\rm SZ}$: the Sompolinsky and Zohar's strategy of private attack with pre-mining \cite{zohar2016doublespend}. This attack focuses on a block $b$: it builds up a private chain with the maximum lead over the public chain when block $b$ is mined, and then starts a private attack from that lead. We have the following result.

\begin{theorem}
\label{lem:worst-attack}
Let $\tau_1^h, \tau_2^h, \ldots $ and $\tau_1^a, \tau_2^a, \ldots$ be a given sequence of mining times of the honest and adversary blocks. Let $b$ be a specific block. (i) Suppose $\pi$ violates the persistence of $b$ with parameter $k$, i.e. $b$ leaves the longest chain after becoming $k$-deep. Then the  $\pi_{SZ}$ attack on $b$ also forces $b$ to leave the longest chain after becoming $k$-deep, under the {\em same} mining times. (ii) Suppose $b$ is an honest block and $\pi$ violates liveness for the $k$ consecutive honest blocks starting with $b$, i.e. none of the $k$ consecutive honest blocks starting with $b$ stay in the longest chain indefinitely. Then the $\pi_{SZ}$ attack on $b$ also forces these $k$ consecutive honest blocks to leave the longest chain indefinitely under the {\em same} mining times.
\end{theorem}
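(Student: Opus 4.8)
The plan is to work entirely in the $\Delta = 0$ PoW model, where at every time $t$ all honest nodes share a single blocktree $\T(t)$ and a single longest-chain length $L(t)$, and $L$ increments by exactly one at each honest mining time (as used in the proof of Lemma \ref{lem:regen}). Fix the mining-time sequences $\{\tau_i^h\}$ and $\{\tau_i^a\}$ and the target block $b$, mined at its mining time $\tau_b$ at some depth $d_b$ in the honest-only chain. The key structural observation is that, since $\Delta = 0$, the honest chain always grows on the current longest chain, so the only way for $b$ to be dislodged after becoming $k$-deep is for some competing chain, whose portion above the fork point consists entirely of adversary blocks, to overtake the honest chain that contains $b$. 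I would first argue that \emph{any} attack $\pi$ that dislodges $b$ must, at some time $t^* > \tau_b$, exhibit an adversary-only sub-chain hanging off an ancestor of $b$ whose length exceeds $L(t^*) - (\text{depth of that ancestor})$, exactly as in the contradiction argument of Lemma \ref{lem:regen}; the existence of such a sub-chain is precisely the failure of the ``Nakamoto-block'' event for $b$ along the prefix.

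Next I would show that $\pi_{\rm SZ}$ on $b$ maximizes, time-for-time, the adversary's lead over the honest chain among all attacks that respect the same mining-time sequence. The point is that $\pi_{\rm SZ}$ allocates every adversary arrival before $\tau_b$ to building one private chain extending the genesis-to-$b$ prefix (pre-mining), thereby achieving the maximum possible lead $\ell^* = \max_{t \le \tau_b}\big(A_a(t) - (\text{honest depth at }t)\big)$ over the honest chain at the moment $b$ is mined, and then continues to devote every subsequent adversary arrival to extending that single chain. Formally, if we let $G^\pi(t)$ denote (length of the longest adversary-only sub-chain rooted at an ancestor of $b$) $ + $ (depth of that root) under attack $\pi$, and $G^{\rm SZ}(t)$ the same quantity under $\pi_{\rm SZ}$, then a counting argument — at most one adversary arrival per arrival time, and $\pi_{\rm SZ}$ spends all of them on the one relevant chain — gives $G^\pi(t) \le G^{\rm SZ}(t)$ for all $t$. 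Combined with the fact that the honest chain containing $b$ has the \emph{same} length $L(t)$ under any attack (honest growth is adversary-independent when $\Delta=0$, since honest nodes always extend \emph{a} longest chain and all longest chains have equal length), this yields: if under $\pi$ the competing adversary chain beats $b$'s chain at time $t^*$, i.e. $G^\pi(t^*) > L(t^*)$, then a fortiori $G^{\rm SZ}(t^*) > L(t^*)$, so $\pi_{\rm SZ}$ also dislodges $b$. For part (i) I must additionally check the "after becoming $k$-deep" clause: under $\pi$, $b$ reaches depth $k$ at some time; but $b$'s depth in the honest chain is a function only of the honest mining times (once $b$ is in the chain it stays until dislodged), so the $k$-deep time is the same under $\pi_{\rm SZ}$, and dislodging happens weakly later than that time under $\pi_{\rm SZ}$ exactly as under $\pi$ — actually I should phrase the conclusion as "also forces $b$ to leave after becoming $k$-deep", matching the statement.

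For part (ii), liveness, the argument is a reduction to part (i) applied simultaneously to the relevant honest blocks. If $\pi$ violates liveness for the $k$ consecutive honest blocks $b = h_{m}, h_{m+1}, \dots, h_{m+k-1}$, then none of them stays in the longest chain indefinitely; in particular each is eventually dislodged. I would argue that $\pi_{\rm SZ}$ launched at $b = h_m$ — which pre-mines to the maximum lead at time $\tau_m^h$ and thereafter races — produces a single adversary chain that, by the same $G^\pi \le G^{\rm SZ}$ domination, overtakes the honest chain at every time the honest chain's relevant prefix is at most as long as the honest-only depth; since under $\pi$ all of $h_m, \dots, h_{m+k-1}$ leave the chain, the honest chain never permanently gets $k$ ahead of the fork at $b$'s parent, and $\pi_{\rm SZ}$'s chain therefore stays ahead long enough to evict all $k$ of them as well. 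The cleanest packaging is: $\pi_{\rm SZ}$ on $b$ keeps a private chain that is, at all times $t \ge \tau_m^h$, at least as long as any adversary-only chain produced by $\pi$ off $b$'s parent; liveness violation by $\pi$ means the honest chain through $h_{m+k-1}$ is, at some time, no longer than $\pi$'s adversary chain, hence no longer than $\pi_{\rm SZ}$'s.

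The main obstacle I anticipate is making the domination $G^\pi(t) \le G^{\rm SZ}(t)$ fully rigorous: one must be careful that "adversary-only sub-chain rooted at an ancestor of $b$" is the right quantity (the fork could occur at genesis, or anywhere on the prefix of $b$), that $\pi_{\rm SZ}$'s pre-mining really does achieve the pointwise-maximal lead trajectory and not merely the maximal terminal lead (this needs the observation that Sompolinsky–Zohar pre-mining maintains the running maximum of $A_a(t) - D_h(t)$, which is a nondecreasing reflected-walk quantity), and that honest chain growth is genuinely attack-invariant at $\Delta = 0$ — capability {\bf A4} lets the adversary switch which longest chain honest nodes mine on, but all longest chains have the same length, so $L(t)$ is unaffected, though the \emph{identity} of blocks on it can change, which is exactly why $b$ can be dislodged. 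Handling {\bf A4} correctly — showing it cannot help the adversary beyond what $\pi_{\rm SZ}$ already does — is the subtle point; I expect it reduces to noting that any honest block the adversary "strands" off the longest chain was itself built on a then-longest chain, so its contribution to honest progress is already counted in $L(t)$.
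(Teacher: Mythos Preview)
Your proposal has a genuine structural gap. The central quantity you introduce, $G^\pi(t)$, is the depth of the deepest \emph{adversary-only} sub-chain rooted at an ancestor of $b$; but under a general attack $\pi$ the chain $\C'$ that dislodges $b$ need not be adversary-only above the fork. Via capability {\bf A4} the adversary can publicize a fork and steer honest miners onto it, so $\C'$ may well contain honest blocks that are not on $\C$. Consequently the inequality $G^\pi(t^*) > L(t^*)$ is not what ``$\pi$ dislodges $b$'' gives you, and the domination $G^\pi \le G^{\rm SZ}$, even if you establish it, is beside the point. The paper does not attempt any such pointwise trajectory domination. Instead it exploits the single structural fact valid at $\Delta=0$: \emph{at most one honest block per depth}. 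Taking $h_i$ to be the last honest block common to $\C$ and $\C'$, it counts that at every depth in $(d^h_i,L(t)]$ the two chains either have two distinct blocks or share one (necessarily adversarial) block, whence the number $A$ of adversary arrivals in $(\tau^h_i,t]$ satisfies $A\ge d:=L(t)-d^h_i\ge H$. This counting is what transfers to $\pi_{\rm SZ}$ via its four listed properties, giving a private chain of depth at least $d^{h*}_i+A$.

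Two further claims in your sketch are false as stated and would need repair even if the first gap were patched. First, $L(t)$ is \emph{not} attack-invariant: under $\pi$ the adversary may release blocks into the public chain, so $L(t)$ under $\pi$ can strictly exceed $L^*(t)$ under $\pi_{\rm SZ}$, and likewise the depth $d_b$ of $b$ can exceed $d^*_b$. The paper uses the inequality $d_b-d^h_i\ge d^*_b-d^{h*}_i$ (valid because $\C^*$ contains only honest blocks) together with $A\ge d\ge (k-1)+(d_b-d^h_i)$ to handle this. Second, and relatedly, the time at which $b$ becomes $k$-deep is not the same under $\pi$ and $\pi_{\rm SZ}$; the paper closes part~(i) with a two-case argument depending on whether $b$ is already $k$-deep under $\pi_{\rm SZ}$ at time $t$, showing in the second case that the private chain already reaches depth $\ge (k-1)+d^*_b$ and hence will succeed once $b$ becomes $k$-deep. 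Your part~(ii) inherits the same issues; the paper again runs the per-depth counting, now across all the chains $\C_m$ that dislodge $h_m$, taking the common fork at the shallowest $h_{e^*}$.
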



The full proof of this theorem, together with a counter-example in the case of $\Delta > 0$, can be found in \S \ref{sec:lem:worst-attack}.
To demonstrate the main ideas used in the full proof, we focus here on a special case of where the adversary attacks the first honest block, $b_1$, mined after the genesis block.
Note that in this special case, the Sompolinsky and Zohar's attack strategy $\pi_{SZ}$ against $b_1$ is simply Nakamoto's private attack starting at the genesis block.
In this context, we prove that if the persistence of $b_1$ with parameter $k$ is violated by an adversary following some arbitrary attack strategy $\pi$, then, it is also violated by an adversary following the private attack under the same sequence of mining times for the honest and adversary blocks.
The proof will be built on the observation that at any depth, there can be at most one honest block when $\Delta=0$.
This observation is a direct result of the Chain Growth Lemma in \cite{backbone}, and is a consequence of the fact that there is no forking among the honest blocks when delay $\Delta = 0$.

\begin{proof}
Let $L(.)$ and $L^*(.)$ denote the lengths of the public longest chains, denoted by $\C$ and $\C^*$ under $\pi$ and the private attack respectively.
Let $\tau_1$ be the mining time of block $b_1$, and, define $t > \tau_1$ as the first time block $b$ disappears from $\C$ after it becomes $k$ deep within $\C$, under $\pi$.
Let $H$ and $A$ denote the number of honest and adversary blocks mined by time $t$ under the given sequence of mining times.


We first focus on $\pi$. Since $\pi$ removes $b_1$ from $\C$ at time $t$, there is another chain building on the genesis block that is parallel to $\C$ and at least as long as $\C$ at time $t$. (See top of Figure \ref{fig:priv_attack}.) Since there can be at most one honest block at every depth and there cannot be any honest block deeper than $L(t)$ (by virtue of the fact that $L(t)$ is the length of the public longest chain), $A \ge L(t) \ge H$. Also, since $b_1$ is at least $k$ deep at time $t$, $L(t) \geq k$. Hence, $A \ge \max\{H,k\}$.

\begin{figure}[h]
    \centering
    \vspace{-1.5mm}
    \includegraphics[width=0.89\linewidth]{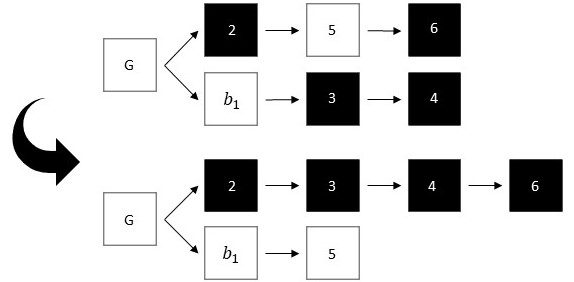}
    \vspace{-2mm}
    \caption{Blocktrees built under an arbitrary attack $\pi$ and the private attack by time $t$ are given at the top and bottom respectively.
    Colors black and white represent the adversary and the honest blocks, and, the blocks are labeled by the mining order. Here, $k =3, H = 2, A = 4, L(t) = 3, L^*(t) = 2.$ Under $\pi$, the adversary is successful in attacking $b_1$ at time $t$. Under the same mining times, the private attack has $4$ blocks in private and the honest chain has $2$ blocks. By the time the honest chain grows to $3$ blocks, the adversary can kick out $b_1$ by releasing the private chain.}  
    \vspace{-3mm}
    \label{fig:priv_attack}
\end{figure}

Now consider the blocktree under the private attack $\pi^*$ at time $t$. (Bottom of figure \ref{fig:priv_attack}.) Since no adversary block is mined on $\C^*$ under the private attack, $L^*(t)=H$. The length of the private chain starting at the genesis is exactly $A \geq \max\{H,k\} = \max\{L^*(t),k\}$.  If $L^*(t) \ge k$, the block $b_1$ can be kicked out now as the adversary can release the private chain at this time. On the other hand, if $L^*(t) <k$, the adversary can wait until the public chain grows to length $k$ and then release the private chain, which will be at least of length $k$. In either case, the private attack is successful in the violation of persistence for $b_1$. 

\end{proof}

In contrast to the PoW setting, a beautiful example from \cite{shi2019analysis} indicates that private attack is no longer the worst attack for every sequence of arrival times under the Praos/SnowWhite model, even for $\Delta = 0$. Figure \ref{fig:sample_path_1} explains this example, and exhibits the blocktree partitioning for this example. With only $1/3$ as many mining times opportunities to $2/3$ for the honest players, the protocol can lose persistence. A private attack would not be able to accomplish the same, because the adversary has less mining opportunities than the honest nodes.  This is somewhat surprising, given that the security threshold is $1/2$ for this model (at $\Delta = 0$). This also suggests that although the two settings, PoW and Praos/SnowWhite have identical security thresholds, their "true" worst case behaviors, taken over all mining time sequences, are different. The larger number of blocks available to the adversary in the Praos/SnowWhite protocol does have some effect in the true worst-case sense, and this allows the mounting of a more serious attack than a private attack. However, these are very atypical mining time sequences, and this difference does not show up in the security threshold.

So perhaps Nakamoto almost always wins.

\begin{figure}[h]
    \centering
    \vspace{-4mm}
    \includegraphics[width=\linewidth]{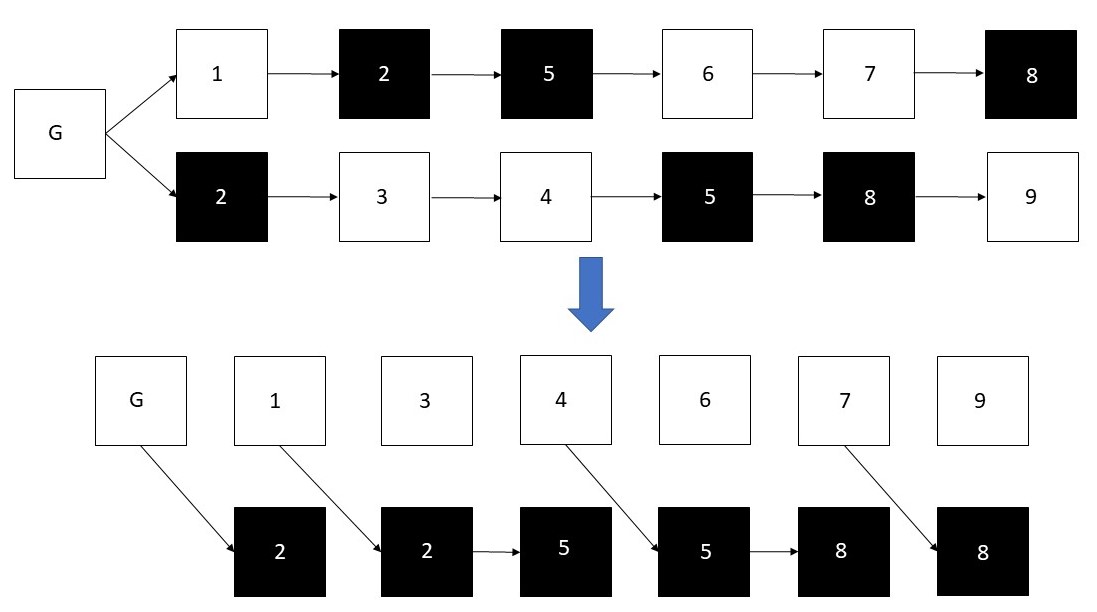}
    \vspace{-6mm}
    \caption{On the top is the blocktree for the example above. Colors black and white represent adversary and honest blocks respectively. The mining time of each block is stated on it. On the bottom is the partition of the blocktree into honest blocks and adversary chains, verifying that indeed there are no Nakamoto blocks. The adversary mines two blocks every third mining time and gets two copies of it. By publishing the shallower block and keeping the deeper block in private and having the honest nodes mine on the shallower block,  it can continue the balance attack indefinitely. This attacks relies on a periodic arrival pattern of the blocks. In a random environment, this pattern cannot hold indefinitely and the attack is not sustainable. So randomness saves Praos/SnowWhite.}
    \vspace{-4mm}
    \label{fig:sample_path_1}
\end{figure}

\section{Acknowledgments}

Amir Dembo and Ofer Zeitouni were partially supported  by a US-Israel BSF grant.
Ertem Nusret Tas was supported in part by the Stanford Center for Blockchain Research.
This research is also supported in part by NSF under grants CCF-1705007, DMS-1954337, 1651236 and
Army Research Office under grant W911NF-14-1-0220.
We thank the reviewers for the helpful comments.


\bibliographystyle{alpha}
\bibliography{references}

\newcommand{\etalchar}[1]{$^{#1}$}
\begin{thebibliography}{DGKR18}

\bibitem[AAC{\etalchar{+}}17]{abusalah2017beyond}
Hamza Abusalah, Jo{\"e}l Alwen, Bram Cohen, Danylo Khilko, Krzysztof Pietrzak,
  and Leonid Reyzin.
\newblock Beyond hellman’s time-memory trade-offs with applications to proofs
  of space.
\newblock In {\em International Conference on the Theory and Application of
  Cryptology and Information Security}, pages 357--379. Springer, 2017.

\bibitem[A{\"\i}d13]{aidekon}
Elie A{\"\i}d{\'e}kon.
\newblock Convergence in law of the minimum of a branching random walk.
\newblock {\em The Annals of Probability}, 41(3A):1362--1426, 2013.

\bibitem[BBBF18]{boneh2018verifiable}
Dan Boneh, Joseph Bonneau, Benedikt B{\"u}nz, and Ben Fisch.
\newblock Verifiable delay functions.
\newblock In {\em Annual international cryptology conference}, pages 757--788.
  Springer, 2018.

\bibitem[BDK{\etalchar{+}}19]{pos_paper}
Vivek Bagaria, Amir Dembo, Sreeram Kannan, Sewoong Oh, David Tse, Pramod
  Viswanath, Xuechao Wang, and Ofer Zeitouni.
\newblock Proof-of-stake longest chain protocols: Security vs predictability.
\newblock {\em arXiv preprint arXiv:1910.02218}, 2019.

\bibitem[BGK{\etalchar{+}}18]{badertscher2018ouroboros}
Christian Badertscher, Peter Ga{\v{z}}i, Aggelos Kiayias, Alexander Russell,
  and Vassilis Zikas.
\newblock Ouroboros genesis: Composable proof-of-stake blockchains with dynamic
  availability.
\newblock In {\em Proceedings of the 2018 ACM SIGSAC Conference on Computer and
  Communications Security}, pages 913--930. ACM, 2018.

\bibitem[BPS16]{bentov2016snow}
Iddo Bentov, Rafael Pass, and Elaine Shi.
\newblock Snow white: Provably secure proofs of stake.
\newblock {\em IACR Cryptology ePrint Archive}, 2016:919, 2016.

\bibitem[CP19]{cohen2019chia}
Bram Cohen and Krzysztof Pietrzak.
\newblock The chia network blockchain.
\newblock {\em https://www.chia.net/assets/ChiaGreenPaper.pdf}, 2019.

\bibitem[DFKP15]{dziembowski2015proofs}
Stefan Dziembowski, Sebastian Faust, Vladimir Kolmogorov, and Krzysztof
  Pietrzak.
\newblock Proofs of space.
\newblock In {\em Annual Cryptology Conference}, pages 585--605. Springer,
  2015.

\bibitem[DGKR18]{david2018ouroboros}
Bernardo David, Peter Ga{\v{z}}i, Aggelos Kiayias, and Alexander Russell.
\newblock Ouroboros praos: An adaptively-secure, semi-synchronous
  proof-of-stake blockchain.
\newblock In {\em Annual International Conference on the Theory and
  Applications of Cryptographic Techniques}, pages 66--98. Springer, 2018.

\bibitem[Drm09]{Drmota}
Michael Drmota.
\newblock The height of increasing trees.
\newblock {\em Annals of Combinatorics}, 12(4):373--402, 2009.

\bibitem[FZ18]{fan2018scalable}
Lei Fan and Hong-Sheng Zhou.
\newblock A scalable proof-of-stake blockchain in the open setting (or, how to
  mimic nakamoto's design via proof-of-stake), 2018.
\newblock Cryptology ePrint Archive, Report 2017/656, Version 20180425:201821.

\bibitem[GKL15]{backbone}
Juan Garay, Aggelos Kiayias, and Nikos Leonardos.
\newblock The bitcoin backbone protocol: Analysis and applications.
\newblock In {\em Annual International Conference on the Theory and
  Applications of Cryptographic Techniques}, pages 281--310. Springer, 2015.

\bibitem[GKR20]{cryptoeprint:2020:661}
Peter Gaži, Aggelos Kiayias, and Alexander Russell.
\newblock Tight consistency bounds for bitcoin.
\newblock Cryptology ePrint Archive, Report 2020/661, 2020.
\newblock \url{https://eprint.iacr.org/2020/661}.

\bibitem[HS09]{hushi}
Yueyun Hu and Zhan Shi.
\newblock Minimal position and critical martingale convergence in branching
  random walks, and directed polymers on disordered trees.
\newblock {\em The Annals of Probability}, 37(2):742--789, 2009.

\bibitem[KQR20]{kiayias2020consistency}
Aggelos Kiayias, Saad Quader, and Alexander Russell.
\newblock Consistency of proof-of-stake blockchains with concurrent honest slot
  leaders.
\newblock {\em arXiv preprint arXiv:2001.06403}, 2020.

\bibitem[KRDO17]{kiayias2017ouroboros}
Aggelos Kiayias, Alexander Russell, Bernardo David, and Roman Oliynykov.
\newblock Ouroboros: A provably secure proof-of-stake blockchain protocol.
\newblock In {\em Annual International Cryptology Conference}, pages 357--388.
  Springer, 2017.

\bibitem[LG20]{li2020continuoustime}
Jing Li and Dongning Guo.
\newblock Continuous-time analysis of the bitcoin and prism backbone protocols.
\newblock {\em arXiv preprint arXiv:2001.05644}, 2020.

\bibitem[Nak08]{bitcoin}
Satoshi Nakamoto.
\newblock Bitcoin: A peer-to-peer electronic cash system.
\newblock 2008.

\bibitem[Pie18]{pietrzak2018simple}
Krzysztof Pietrzak.
\newblock Simple verifiable delay functions.
\newblock In {\em 10th innovations in theoretical computer science conference
  (itcs 2019)}. Schloss Dagstuhl-Leibniz-Zentrum fuer Informatik, 2018.

\bibitem[Pit94]{pittel94}
Boris Pittel.
\newblock Notes on the heights of random recursive trees and random m-ary
  search trees.
\newblock {\em Random Structures Alg.}, 5:337--347, 1994.

\bibitem[PKF{\etalchar{+}}18]{park2018spacemint}
Sunoo Park, Albert Kwon, Georg Fuchsbauer, Peter Ga{\v{z}}i, Jo{\"e}l Alwen,
  and Krzysztof Pietrzak.
\newblock Spacemint: A cryptocurrency based on proofs of space.
\newblock In {\em International Conference on Financial Cryptography and Data
  Security}, pages 480--499. Springer, 2018.

\bibitem[PS17]{sleepy}
Rafael Pass and Elaine Shi.
\newblock The sleepy model of consensus.
\newblock In {\em International Conference on the Theory and Application of
  Cryptology and Information Security}, pages 380--409. Springer, 2017.

\bibitem[PSS17]{pss16}
R~Pass, L~Seeman, and A~Shelat.
\newblock Analysis of the blockchain protocol in asynchronous networks.
\newblock In {\em Annual International Conference on the Theory and
  Applications of Cryptographic Techniques}, 2017.

\bibitem[Ren19]{ren}
Ling Ren.
\newblock Analysis of nakamoto consensus.
\newblock Technical report, Cryptology ePrint Archive, Report 2019/943.(2019).
  https://eprint. iacr. org~…, 2019.

\bibitem[Shi15]{shi}
Zhan Shi.
\newblock {\em Branching Random Walks}, volume 2151 of {\em Lecture Notes in
  Mathematics}.
\newblock Springer Verlag, New York NY, 2015.

\bibitem[Shi19]{shi2019analysis}
Elaine Shi.
\newblock Analysis of deterministic longest-chain protocols.
\newblock In {\em 2019 IEEE 32nd Computer Security Foundations Symposium
  (CSF)}, pages 122--12213. IEEE, 2019.

\bibitem[SZ15]{ghost}
Yonatan Sompolinsky and Aviv Zohar.
\newblock Secure high-rate transaction processing in bitcoin.
\newblock In {\em International Conference on Financial Cryptography and Data
  Security}, pages 507--527. Springer, 2015.

\bibitem[SZ16]{zohar2016doublespend}
Yonatan Sompolinsky and Aviv Zohar.
\newblock Bitcoin’s security model revisited.
\newblock {\em arXiv preprint arXiv:1605.09193}, 2016.

\end{thebibliography}

\appendices
\section*{Appendix}

\section{Definitions and Preliminary Lemmas for the Proofs}
\label{sec:defin}

In this section, we define some important events which will appear frequently in the analysis and provide some useful lemmas.

Let $\delta^h_i=\tau^h_{i}-\tau^h_{i-1}$ and $\delta^a_i=\tau^a_{i}-\tau^a_{i-1}$ denote the time intervals for subsequent honest and adversary arrival events. Let $d^h_i$ denote the depth of the $i$-th honest block within $D_h(t)$. Define $X_d$, $d>0$, as the time it takes for $D_h$ to reach depth $d$ after reaching depth $d-1$. In other words, $X_d$ is the difference between the times $t_1>t_2$, where $t_1$ is the minimum time $t$ such that $D_h(t)=d$, and, $t_2$ is the minimum time $t$ such that $D_h(t)=d-1$. 

Let $U_j$ be the event that the $j$-th honest block $b_j$ is a loner, i.e., 
$$U_j = \{\tau^h_{j-1} < \tau^h_{j} - \Delta\} \cap \{\tau^h_{j+1} > \tau^h_{j} + \Delta\}.$$
Let $\hat{F}_j = U_j \cap F_j$ be the event that $b_j$ is a Nakamoto block. Then we can define the following catch up event: 
\begin{equation}
\label{eqn:hat_Bik}
    \hat{B}_{ik} = \mbox{event that $D_i(\tau^h_{k} + \Delta) \ge  D_h(\tau^h_{k-1}) - D_h(\tau^h_i+\Delta)$},
\end{equation}
which is the event that the adversary launches a private attack starting from $b_i$ and catches up the fictitious honest chain right before $b_k$ is mined. The following lemma shows that event $\hat{F}_j$ can be represented with $\hat{B}_{ik}$'s.
\begin{lemma}  
\label{lem:F_j}
For each $j$,
\begin{equation}
    \hat{F}_j^c = F_j^c \cup U_j^c = \left(\bigcup_{(i,k): 0 \leq i < j <k} \hat{B}_{ik}\right) \cup U_j^c.
\end{equation}
\end{lemma}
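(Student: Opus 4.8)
I would split the claim into its two equalities. The first, $\hat{F}_j^c = F_j^c\cup U_j^c$, is just De Morgan applied to the definition $\hat{F}_j = U_j\cap F_j$, so nothing is needed there. For the second, since $F_j=\bigcap_{i=0}^{j-1}E_{ij}$, distributing the complement reduces everything to a single set identity for each fixed $0\le i<j$, namely $E_{ij}^c=\bigcup_{k>j}\hat{B}_{ik}$; taking the union over $i$ then turns $F_j^c$ into $\bigcup_{0\le i<j<k}\hat{B}_{ik}$, and re-adjoining $U_j^c$ recovers the displayed line. So the real work is that one identity, which I would prove by two inclusions. The only ingredients I plan to use are that $t\mapsto D_i(t)$ and $t\mapsto D_h(t)$ are non-decreasing, right-continuous, integer-valued step functions, plus one structural fact read off from Definition~\ref{def:fictitious}: $D_h$ increases only at some of the honest mining times $\tau^h_1,\tau^h_2,\dots$, hence is constant on every interval $[\tau^h_{k-1},\tau^h_k)$, so that $D_h(t-\Delta)=D_h(\tau^h_{k-1})$ for every $t\in(\tau^h_{k-1}+\Delta,\ \tau^h_k+\Delta)$. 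I will write $c:=D_h(\tau^h_i+\Delta)$.

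For $E_{ij}^c\subseteq\bigcup_{k>j}\hat{B}_{ik}$ I would start from a witnessing time $t^\ast>\tau^h_j+\Delta$ with $D_i(t^\ast)\ge D_h(t^\ast-\Delta)-c$ and ``round it up'': let $k$ be the unique index with $t^\ast\in(\tau^h_{k-1}+\Delta,\ \tau^h_k+\Delta]$, which (given $t^\ast>\tau^h_j+\Delta$ and $t^\ast\le\tau^h_k+\Delta$) forces $k>j$. Because $\tau^h_k+\Delta\ge t^\ast$ and $t^\ast-\Delta>\tau^h_{k-1}$, monotonicity of $D_i$ and of $D_h$ yields
\[
D_i(\tau^h_k+\Delta)\ \ge\ D_i(t^\ast)\ \ge\ D_h(t^\ast-\Delta)-c\ \ge\ D_h(\tau^h_{k-1})-c,
\]
which is exactly $\hat{B}_{ik}$.

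For the reverse inclusion I would start from $\hat{B}_{ik}$ for some $k>j$, i.e.\ $D_i(\tau^h_k+\Delta)\ge D_h(\tau^h_{k-1})-c$, and exhibit a witness for $E_{ij}^c$. Let $t_0\le\tau^h_k+\Delta$ be the first time at which $D_i\ge D_h(\tau^h_{k-1})-c$ (this is $\tau^h_i$ when that level is nonpositive, the vacuous case, and an adversary mining time otherwise). If $t_0\le\tau^h_{k-1}+\Delta$ I would pick any $t$ in the nonempty open interval $(\tau^h_{k-1}+\Delta,\ \tau^h_k+\Delta)$; if $t_0$ already lies in that interval I would take $t=t_0$. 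In either case $t\ge t_0$ gives $D_i(t)\ge D_h(\tau^h_{k-1})-c$, while $t-\Delta\in(\tau^h_{k-1},\tau^h_k)$ gives $D_h(t-\Delta)=D_h(\tau^h_{k-1})$, and $t>\tau^h_{k-1}+\Delta\ge\tau^h_j+\Delta$; hence $t$ witnesses $E_{ij}^c$.

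The one subcase this does not cover is $t_0=\tau^h_k+\Delta$ exactly — an adversary block landing precisely $\Delta$ after $b_k$ while $\hat{B}_{ik}$ holds at equality and $\tau^h_k$ happens to be a depth-increment time of the fictitious tree $\T_h$ — and I expect this to be the only genuine obstacle in the argument. I plan to dispose of it by noting it is a null event: the honest and adversary arrival processes are independent Poisson processes, so almost surely no adversary mining time equals $\tau^h_k+\Delta$ for any $k$, and since Lemma~\ref{lem:F_j} is used only inside probabilistic estimates (Lemmas~\ref{lem:pow-decay-delta-pos} and~\ref{lem:time_strong_chia}) discarding this null set is harmless; alternatively one can restore a strictly pathwise identity by using the left limit of $D_i$ at $\tau^h_k+\Delta$ in the definition of $\hat{B}_{ik}$, at the cost of a marginally less clean statement.
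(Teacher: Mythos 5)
Your proof is correct and takes essentially the same route as the paper's: the paper also discretizes the quantifier ``for all $t>\tau^h_j+\Delta$'' to the checkpoints ${\tau^h_k}^-+\Delta$, $k>j$, exploiting that $D_h$ is constant on $[\tau^h_{k-1},\tau^h_k)$, and then identifies $D_i({\tau^h_k}^-+\Delta)$ with $D_i(\tau^h_k+\Delta)$. Your two-inclusion write-up is a more explicit version of the same reduction, and the null-set caveat you flag at the end is precisely the step the paper's last equality performs silently (replacing the left limit $D_i({\tau^h_k}^-+\Delta)$ by the right-continuous value $D_i(\tau^h_k+\Delta)$), so you have correctly located the one place the chain of set identities is only almost-sure rather than pathwise; your two proposed fixes — discard the null event, or redefine $\hat{B}_{ik}$ via a left limit — are both sound.
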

\begin{proof}
\begin{eqnarray*}
&& U_j \cap E_{ij} \\
&=& U_j \cap \mbox{$\{D_i(t) < D_h(t-\Delta) - D_h(\tau^h_i+\Delta)$ for all $t > \tau^h_j + \Delta\}$}\\
&= & U_j \cap \mbox{$\{D_i(t +\Delta) < D_h(t) - D_h(\tau^h_i+\Delta)$ for all $t > \tau^h_j\}$ }\\
&= & U_j \cap \mbox{$\{D_i({\tau^h_k}^- +\Delta) < D_h({\tau^h_k}^-) - D_h(\tau^h_i+\Delta)$ for all $k > j\}$} \\
&= & U_j \cap \mbox{$\{D_i({\tau^h_k} +\Delta) < D_h({\tau^h_{k-1}}) - D_h(\tau^h_i+\Delta)$ for all $k > j\}$} 
\end{eqnarray*}
Since $\hat{F}_j = F_j \cap U_j = \bigcap_{0 \leq i < j} E_{ij} \cap U_j$, by the definition of $\hat{B}_{ik}$ we have $\hat{F}_j = \left(\bigcap_{(i,k): 0 \leq i < j <k} \hat{B}_{ik}^c\right) \cap U_j$. Taking complement on both side, we can conclude the proof.

Finally, define the parameter $r$ as follows:
$$r := \frac{\lambda_a}{\lambda_h} (1+\Delta \lambda_h),$$
for which $r<1$ holds whenever
$$\lambda_a < \frac{\lambda_h}{1+\Delta \lambda_h}.$$
\end{proof}

\section{Proof of Theorem \ref{thm:nak_blk}}
\label{sec:nakamoto-3.2}

Notation used in this section is defined in section \ref{sec:nak_blks}. 

For the proof of the stabilization property of a Nakamoto block, it is crucial to show that $D_h(t)$ gives a conservative bound on the growth of the chains $\C^{(p)}$ from time $s$ to $t$. For this purpose, we prove the following proposition:

\begin{proposition}
\label{prop:delay}
For any given $s$, $t$ such that $s + \Delta < t - \Delta$;
$$D_h(t-\Delta) - D_h(s+\Delta) \leq L^{(p)}(t)-L^{(p)}(s)$$
for any honest miner $p$.
\end{proposition}

\begin{proof}
Assume that the increase in $L^{(p)}$ within the interval $[s,t]$ is solely due to the arrival of honest blocks to some miner in the interval $[s-\Delta,t]$. Then, we first show that delaying every block that arrives within this interval by $\Delta$ minimizes the increase in $L^{(p)}$ from $s$ to $t$ for any $t>s+\Delta$. To prove this, first observe that minimizing the increase in $L^{(p)}$ is equivalent to maximizing the time it takes for $\C^{(p)}$ to reach any depth $d$. Now, let $h_i$ be the block at the tip of $\C^{(p)}$ when it reaches depth $d$, and, assume that it took $\delta_i \leq \Delta$ time for $p$ to learn about $h_i$ after it was mined. Then, $\C^{(p)}$ reaches depth $d$ at time $\tau^h_i+\delta_i$. However, if the message for $h_i$ was delayed for $\delta'_i>\delta_i$ time, then, either $\C^{(p)}$ would have reached depth $d$ at time $\tau^h_i+\delta'_i \geq \tau^h_i+\delta_i$ with block $h_i$ at its tip, or, another block $h_j$, with index $j \neq i$ would have brought $\C^{(p)}$ to depth $d$ at some time $t$, $\tau^h_i+\delta'_i > t > \tau^h_i+\delta_i$. Hence, delaying the transmission of $h_i$ increases the time it takes for $\C^{(p)}$ to reach depth $d$. This implies that $h_i$ should be delayed as long as possible, which is $\Delta$. Since this argument also applies to any other block $h_j$ that might also bring $\C^{(p)}$ to depth $d$ when $h_i$ is delayed, every block should be delayed by $\Delta$ to maximize the time for $\C^{(p)}$ to reach any depth $d$. This, in turn, minimizes the increase in $L^{(p)}$ by any time $t>s$. 

Next, define the following random variable:
$$L_{max}(t)=\max_{p=1,..,n} (L^{(p)}(t)).$$
Then, we can assert that;
$$L_{max}(t-\Delta) \leq L^{(p)}(t) \leq L_{max}(t)$$
for any honest miner $p$. Then,
$$L^{(p)}(t)-L^{(p)}(s) \geq L_{max}(t-\Delta) - L_{max}(s).$$
From the paragraph above, we know that delaying every honest block by $\Delta$ minimizes $L^{(p)}(t)$ for any $t$. Hence, this action also minimizes $L^{(p)}(t)-L^{(p)}(s)$ for any $t>s+2\Delta$. Now, assume that no honest miner hears about any adversary block in the interval $[s,t]$ and every honest block is delayed by $\Delta$. Then, the difference $L_{max}(t-\Delta)-L_{max}(s)$ will be solely due the honest blocks that arrive within the period $[s,t-\Delta]$. However, in this case, depth of $L_{max}$ changes via the same process as $D_h$ (when each miner has infinitesimal power), which implies the following inequality:
$$L_{max}(t-\Delta)-L_{max}(s) \geq D_h(t-\Delta)-D_h(s+\Delta).$$
Hence, we see that when every block is delayed by $\Delta$ and there are no adversary blocks heard by $p$ in the time interval $[s,t]$;
$$L^{(p)}(t)-L^{(p)}(s) \geq D_h(t-\Delta)-D_h(s+\Delta).$$
However, delaying honest blocks less than $\Delta$ time or the arrival of adversary blocks to $p$ in the period $[s,t]$ only increases the difference $L^{(p)}(t)-L^{(p)}(s)$. Consequently;
$$D_h(t-\Delta)-D_h(s+\Delta) \leq L^{(p)}(t)-L^{(p)}(s)$$
for any honest miner $p$.

\end{proof}

Now we are ready to prove Theorem \ref{thm:nak_blk}.

\begin{proof}

We prove that the $j$-th honest block will be included in any future chain $\C(t)$ for $t>\tau^h_i+\Delta$, by contradiction. Suppose $\hat{F}_j$ occurs and let $t^* > \tau^h_j+\Delta$ be the smallest $t$ such that the $j$-th honest block is not contained in $\C^{(p)}(t)$ for some $1\leq p \leq n$. Let $h_i$ be the last honest block on $\C^{(p)}(t^*)$, which must exist, because the genesis block is by definition honest. If $\tau^h_i> \tau^h_j+\Delta$ for $h_i$, then, $\C^{(p)}(\tau^{h-}_i)$ is the prefix of $\C^{(p)}(t^*)$ before block $h_i$, and, does not contain the $j$-th honest block, contradicting the minimality of $t^*$. Therefore, $h_i$ must be mined before time $\tau^h_j+\Delta$. Since the $j$-th honest block is a loner, we further know that $h_i$ must be mined before time $\tau^h_j$, implying that $h_i$ is the $i$-th honest block for some $i < j$. In this case, part of $\C^{(p)}(t^*)$ after block $h_i$ must lie entirely in the tree $\tt_i(t^*)$ rooted at $h_i$. Hence, 
\begin{equation}
    L^{(p)}(t^*) \leq L^{(p)}(\tau^h_i) + D_i(t^*).
\end{equation}
However, we know that;
\begin{equation}
    D_i(t^*) < D_h(t^*-\Delta) - D_h(\tau^h_i+\Delta) \leq L^{(p)}(t^*)-L^{(p)}(\tau^h_i)
\end{equation}
where the first inequality follows from the fact that $\hat{F}_j$ holds and the second inequality follows from Proposition \ref{prop:delay}. From this we obtain that 
\begin{equation}
    L^{(p)}(\tau^h_i) + D_i(t^*) < L^{(p)}(t^*)
\end{equation}
which is a contradiction since $L^{(p)}(t^*) \leq L^{(p)}(\tau^h_i) + D_i(t^*)$. This concludes the proof.

\end{proof}
\section{Proofs for Section \ref{sec:powpos_analysis}}
\label{sec:app_pow}

Notations used in this section are defined in \S \ref{sec:defin}. 

Subsequent propositions are used in future proofs.



\begin{proposition}
\label{prop:min-rate}
Let $Y_d$, $d \geq 1$, be i.i.d random variables, exponentially distributed with rate $\lambda_h$. Then, each random variable $X_d$ can be expressed as $\Delta+Y_d$.
\end{proposition}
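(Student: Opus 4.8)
The plan is to unwind the definition of $D_h(t)$ from Definition~\ref{def:fictitious} and reduce the statement to the memorylessness of the honest arrival process. Write $s_d := \min\{t : D_h(t) = d\}$ for $d \geq 0$, so that $X_d = s_d - s_{d-1}$, and recall that the aggregate honest mining process is a Poisson process $N_h$ of rate $\lambda_h$ (the superposition of the $n$ independent per-node processes, which is again Poisson). First I would observe from the construction of $\T_h$ that the depth-$(d-1)$ level consists exactly of the honest blocks mined during the window $[s_{d-1}, s_{d-1}+\Delta]$, and that $\T_h$ acquires a block at depth $d$ precisely at the first honest arrival strictly after $s_{d-1}+\Delta$; hence $s_d = \min\{\tau^h_i : \tau^h_i > s_{d-1}+\Delta\}$. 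In particular $s_d > s_{d-1} + \Delta$, so $Y_d := X_d - \Delta \geq 0$ is well defined, and it is the residual waiting time of $N_h$ measured from the instant $s_{d-1}+\Delta$.

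The core step is then to show that $(Y_d)_{d\ge1}$ are i.i.d.\ and exponentially distributed with rate $\lambda_h$. I would argue by induction on $d$ using the strong Markov property of the Poisson process: $s_{d-1}$ is a stopping time for the natural filtration $(\mathcal{F}_t)$ of $N_h$ (it is built by a finite recursion of hitting-type times starting from $s_0=0$), hence so is $s_{d-1}+\Delta$. By the strong Markov property the shifted process $u \mapsto N_h(s_{d-1}+\Delta+u) - N_h(s_{d-1}+\Delta)$ is again a rate-$\lambda_h$ Poisson process independent of $\mathcal{F}_{s_{d-1}+\Delta}$; its first arrival time equals $Y_d$, so $Y_d$ is exponential with rate $\lambda_h$ and independent of $\mathcal{F}_{s_{d-1}+\Delta}$. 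Since $s_0,\dots,s_{d-1}$ and therefore $Y_1,\dots,Y_{d-1}$ are all $\mathcal{F}_{s_{d-1}+\Delta}$-measurable, $Y_d$ is independent of $(Y_1,\dots,Y_{d-1})$; iterating over $d$ yields mutual independence, and each $Y_d$ has the same law. This gives $X_d = \Delta + Y_d$ with the claimed joint distribution.

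The only step requiring care — and the one I would treat as the main (modest) obstacle — is the bookkeeping in the first paragraph: one must verify that passing from depth $d-1$ to depth $d$ always costs a full deterministic $\Delta$ (the time for the current level's $\Delta$-window to elapse) plus a fresh exponential, \emph{including} for $d=1$, which forces reading the genesis/depth-$0$ level as spanning $[0,\Delta]$ so that $s_1$ is the first honest arrival after time $\Delta$ rather than simply $\tau^h_1$. Once the identity $s_d = \min\{\tau^h_i : \tau^h_i > s_{d-1}+\Delta\}$ is in hand, the rest is the standard regeneration argument for Poisson processes. I would also note in passing that even under the literal reading with $s_1 = \tau^h_1$, one still has $X_1 \le \Delta + Y_1$ in the stochastic order, which is all that is needed wherever the proposition is invoked to lower-bound the growth of $D_h$.
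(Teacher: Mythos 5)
Your proof is correct and follows the same argument as the paper's: passing from depth $d-1$ to depth $d$ waits out the deterministic $\Delta$-window of level $d-1$, after which the residual waiting time to the next honest arrival is a fresh $\mathrm{Exp}(\lambda_h)$ by the (strong) Markov / memoryless property of the honest Poisson process; your version is merely more explicit about the stopping-time and measurability bookkeeping that makes the $Y_d$ jointly independent. You have also correctly spotted a boundary issue the paper's proof glosses over: under Definition~\ref{def:fictitious} as written, $s_1=\tau^h_1$ so that $X_1$ is exponential with no $\Delta$ offset, and your remedy — either read the genesis level as occupying $[0,\Delta]$, or settle for the stochastic domination $X_1\le\Delta+Y_1$ — is exactly what is needed, since Propositions~\ref{prop:bound-1} and~\ref{prop:bound-3} only use $X_d$ to upper-bound how slowly $D_h$ can grow.
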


\begin{proof}
Let $h_i$ be the first block that comes to some depth $d-1$ within $\T_h$. Then, every honest block that arrives within the interval $[\tau^h_i, \tau^h_i+\Delta]$ will be mapped to the same depth as $h_i$, i.e $d-1$. Hence, $\T_h$ will reach depth $d$ only when an honest block arrives after time $\tau^h_i+\Delta$. Now, we know that the difference between $\tau^h_i+\Delta$ and the arrival time of the first block after $\tau^h_i+\Delta$ is exponentially distributed with rate $\lambda_h$ due to the memoryless property of the exponential distribution. This implies that for each depth $d$, $X_d$ is equal to $\Delta+Y_d$ for some random variable $Y_d$ such that $Y_d$, $d \geq 1$, are i.i.d and exponentially distributed with rate $\lambda_h$. Then, $X_d$ are also i.i.d random variables with mean
$$\Delta + \frac{1}{\lambda_h}.$$

\end{proof}

\begin{proposition}
\label{prop:bound-1}
For any constant $a$, 
$$P(\sum_{d=a}^{n+a} X_d > n(\Delta + \frac{1}{\lambda_h})(1+\delta)) \leq e^{-n\Omega(\delta^2(1+\Delta\lambda_h)^2)}.$$
\end{proposition}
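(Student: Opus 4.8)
The plan is to recognize this as a routine Cram\'er--Chernoff tail bound for a sum of i.i.d.\ exponentials, once the deterministic part is isolated. By Proposition~\ref{prop:min-rate} we write $X_d=\Delta+Y_d$ with $Y_d$ i.i.d.\ exponential of rate $\lambda_h$, so, setting $N:=n+1$ for the number of summands,
\[
\sum_{d=a}^{n+a}X_d \;=\; N\Delta+\sum_{d=a}^{n+a}Y_d,\qquad \mathbb{E}\Big[\sum_{d=a}^{n+a}Y_d\Big]=\frac{N}{\lambda_h}.
\]
Since $N\Delta$ is deterministic, the event $\{\sum X_d>n(\Delta+\tfrac1{\lambda_h})(1+\delta)\}$ is exactly $\{\sum Y_d>(1+\eta)N/\lambda_h\}$, where a one-line computation gives
\[
\eta \;=\; \frac{n\delta-1}{n+1}\,(1+\Delta\lambda_h).
\]
The point of this step is the identification $\eta\asymp\delta(1+\Delta\lambda_h)$: the randomness in $X_d$ has relative size $(1/\lambda_h)/(\Delta+1/\lambda_h)=1/(1+\Delta\lambda_h)$ because of the deterministic offset $\Delta$, so a relative-$\delta$ fluctuation of $\sum X_d$ is a relative-$\delta(1+\Delta\lambda_h)$ fluctuation of $\sum Y_d$; this is precisely why the $(1+\Delta\lambda_h)^2$ appears in the exponent. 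For $n\delta\ge 3$ (say) one has $\eta\ge\tfrac12\delta(1+\Delta\lambda_h)>0$.

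Next I would apply the standard exponential Markov bound to the Gamma variable $\sum Y_d$. For $0<\theta<\lambda_h$, $\mathbb{E}e^{\theta Y_d}=\lambda_h/(\lambda_h-\theta)$, so
\[
P\Big(\sum Y_d>(1+\eta)\tfrac{N}{\lambda_h}\Big)\;\le\;\Big(\tfrac{\lambda_h}{\lambda_h-\theta}\Big)^{N}e^{-\theta(1+\eta)N/\lambda_h},
\]
and optimizing at $\theta=\lambda_h\,\eta/(1+\eta)$ collapses this to
\[
P\Big(\sum Y_d>(1+\eta)\tfrac{N}{\lambda_h}\Big)\;\le\;\big((1+\eta)e^{-\eta}\big)^{N}\;=\;e^{-N\,(\eta-\log(1+\eta))}.
\]
To finish, I would use $\eta-\log(1+\eta)\ge c_0\,\eta^2$, valid with a constant $c_0=c_0(M)>0$ whenever $\eta\le M$; this covers the regime relevant to the bootstrapping application, where $\delta\to 0$ (equivalently $\delta(1+\Delta\lambda_h)$ is bounded). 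Plugging in $N\ge n$ and $\eta\ge\tfrac12\delta(1+\Delta\lambda_h)$ yields
\[
P\Big(\sum_{d=a}^{n+a}X_d>n(\Delta+\tfrac1{\lambda_h})(1+\delta)\Big)\;\le\;e^{-c_0 n\eta^2}\;\le\;e^{-\frac{c_0}{4}\,n\,\delta^2(1+\Delta\lambda_h)^2}\;=\;e^{-n\Omega(\delta^2(1+\Delta\lambda_h)^2)}.
\]

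The only thing requiring any care — and this is bookkeeping rather than a genuine obstacle — is the edge cases: the off-by-one between the $n+1$ summands in $\sum_{d=a}^{n+a}X_d$ and the $n$ in the bound, and the complementary small-$n\delta$ range (where one simply enlarges the implied constant or absorbs it into the $\Omega$). The substantive content of the proposition is entirely in the computation $\eta=\tfrac{n\delta-1}{n+1}(1+\Delta\lambda_h)$; the tail estimate itself is the textbook Chernoff bound for a sum of i.i.d.\ exponentials.
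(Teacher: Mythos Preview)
Your proposal is correct and is exactly the approach the paper indicates: the paper's entire proof is the one-line remark that the bound follows from a Chernoff analysis together with Proposition~\ref{prop:min-rate}, and you have carried out precisely that computation, including the reduction $X_d=\Delta+Y_d$ and the identification $\eta=\tfrac{n\delta-1}{n+1}(1+\Delta\lambda_h)$ that explains the $(1+\Delta\lambda_h)^2$ in the exponent. Your treatment is in fact more detailed than what the paper provides.
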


Proposition \ref{prop:bound-1} is proven using a Chernoff bound analysis and Proposition \ref{prop:min-rate}.

\begin{proposition}
\label{prop:bound-2}
Probability that there are less than
$$n\frac{\lambda_a(1-\delta)}{\lambda_h}$$ 
adversary arrival events from time $\tau^h_0$ to $\tau^h_{n+1}$ is upper bounded by
$$e^{-n\Omega(\delta^2\frac{\lambda_a}{\lambda_h})}.$$
\end{proposition}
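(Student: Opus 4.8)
The plan is to exploit that the adversary arrival process is an exogenous Poisson process of rate $\lambda_a$, \emph{independent} of the honest arrival process (by \textbf{M1} and \textbf{M2}), so that the count of adversary arrivals in the random interval $[\tau^h_0,\tau^h_{n+1}]=[0,T]$ with $T:=\tau^h_{n+1}$ can be analyzed by first conditioning on $T$. Here $T$ is a sum of $n+1$ i.i.d.\ $\mathrm{Exp}(\lambda_h)$ variables, hence Gamma-distributed with mean $(n+1)/\lambda_h$, and conditionally on $T$ the number $N_a$ of adversary arrivals in $[0,T]$ is $\mathrm{Poisson}(\lambda_a T)$. Note that the adversary's \emph{strategy} is irrelevant here: only the raw Poisson arrival count enters, not where blocks are placed or when they are released.

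First I would fix an intermediate threshold, say $T_0:=(n+1)(1-\delta/3)/\lambda_h$, and split
\[
  P\!\left(N_a < \tfrac{n\lambda_a(1-\delta)}{\lambda_h}\right)\ \le\ P(T<T_0)\ +\ \sup_{t\ge T_0} P\!\left(\mathrm{Poisson}(\lambda_a t) < \tfrac{n\lambda_a(1-\delta)}{\lambda_h}\right).
\]
The first term is a lower-tail bound for a $\mathrm{Gamma}(n+1,\lambda_h)$ variable; a standard Chernoff argument (of the same flavor as the proof of Proposition~\ref{prop:bound-1}) gives $P(T<T_0)\le e^{-\Omega(\delta^2 n)}$. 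For the second term, for $t\ge T_0$ the Poisson mean satisfies $\lambda_a t\ge \lambda_a(n+1)(1-\delta/3)/\lambda_h$, and for $n$ large the ratio of this mean to the target value $n\lambda_a(1-\delta)/\lambda_h$ is at least $(1-\delta/3)/(1-\delta)=1+\Omega(\delta)$; hence we are genuinely in the lower tail with relative margin $\Omega(\delta)$, and the Poisson lower-tail Chernoff bound yields $e^{-\Omega(\delta^2\lambda_a n/\lambda_h)}$.

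Combining, since $\lambda_a/\lambda_h<1$ in the regime of interest (indeed $\lambda_a<\lambda_h/(1+\Delta\lambda_h)$, though only $\lambda_a\le\lambda_h$ is needed), the Gamma term $e^{-\Omega(\delta^2 n)}$ is no larger than the Poisson term, so the sum is $e^{-n\Omega(\delta^2\lambda_a/\lambda_h)}$, as claimed. An even more direct route avoids conditioning altogether: in the merged honest/adversary arrival stream, the number of adversary arrivals falling strictly before the $(n+1)$-st honest arrival is a sum of $n+1$ i.i.d.\ geometric random variables, each with mean $\lambda_a/\lambda_h$, to which a single Chernoff bound for lower deviations applies, the per-summand rate function at a $(1-\delta)$-fraction of the mean being $\Omega(\delta^2\lambda_a/\lambda_h)$. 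Either way there is no real obstacle: the only care needed is the bookkeeping of the $n$-versus-$(n+1)$ mismatch and checking that, after removing the ``$T$ too small'' contribution, the residual event is still a true lower-tail event with margin proportional to $\delta$; the remainder is a routine Chernoff computation.
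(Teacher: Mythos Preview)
Your proposal is correct and is essentially a fleshed-out version of the paper's own argument: the paper merely states that the proposition ``is proven using the Poisson tail bounds'' without further detail, and your conditioning-on-$T$ decomposition (Gamma lower tail for $T$, then Poisson lower tail for $N_a$ given $T\ge T_0$) is precisely the standard way to make that one-liner rigorous. The alternative route via a sum of $n+1$ i.i.d.\ geometrics in the merged arrival stream is also valid and arguably cleaner, though not what the paper alludes to.
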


Proposition \ref{prop:bound-2} is proven using the Poisson tail bounds.

\begin{proposition}
\label{prop:bound-3}

Define $B_n$ as the event that there are at least $n$ adversary arrivals while $D_h$ grows from depth $0$ to $n$:
$$B_n = \{\sum_{i=1}^n X_i \geq \sum_{i=0}^n \delta^a_i \}$$
If 
$$\lambda_a < \frac{\lambda_h}{1+\lambda_h \Delta},$$
then,
$$P(B_n) \leq e^{- A_0 n},$$
where,
$$A_0 = s\Delta + \ln(\frac{\lambda_a\lambda_h}{(\lambda_h-s)(\lambda_a+s)})>0$$
and,
$$s = \frac{\lambda_h - \lambda_a}{2}+\frac{2-\sqrt{4+\Delta^2(\lambda_a+\lambda_h)^2}}{2\Delta}.$$
\end{proposition}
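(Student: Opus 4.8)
The plan is to recognize that $B_n$ is a large-deviations event for a sum of i.i.d.\ increments and to bound it by an exponential (Chernoff) argument. Write $\sum_{i=1}^n X_i - \sum_{i=0}^n \delta^a_i$; by Proposition \ref{prop:min-rate} we have $X_i = \Delta + Y_i$ with $Y_i$ i.i.d.\ $\mathrm{Exp}(\lambda_h)$, and the $\delta^a_i$ are i.i.d.\ $\mathrm{Exp}(\lambda_a)$ (inter-arrival times of the adversary Poisson process), independent of the $Y_i$'s. So, up to the harmless inclusion of $\delta^a_0$ (which only makes the event less likely, so we may drop it or absorb it), $B_n$ is contained in the event $\sum_{i=1}^n (\Delta + Y_i - \delta^a_i) \ge 0$, i.e.\ $\sum_{i=1}^n Z_i \ge -n\Delta$ where $Z_i = Y_i - \delta^a_i$ has mean $1/\lambda_h - 1/\lambda_a < 0$ precisely in the regime $\lambda_a < \lambda_h/(1+\lambda_h\Delta)$ — indeed the hypothesis is exactly $\mathbb{E}[\Delta + Y_i - \delta^a_i] = \Delta + 1/\lambda_h - 1/\lambda_a < 0$.

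The core step is the standard exponential Markov bound: for $s > 0$,
\begin{equation}
P(B_n) \le \mathbb{E}\Big[\exp\Big(s\sum_{i=1}^n(\Delta + Y_i - \delta^a_i)\Big)\Big] = \Big(e^{s\Delta}\,\mathbb{E}[e^{sY_1}]\,\mathbb{E}[e^{-s\delta^a_1}]\Big)^n.
\end{equation}
Using the moment generating functions of the exponential distribution, $\mathbb{E}[e^{sY_1}] = \lambda_h/(\lambda_h - s)$ (valid for $0 < s < \lambda_h$) and $\mathbb{E}[e^{-s\delta^a_1}] = \lambda_a/(\lambda_a + s)$, we get
\begin{equation}
P(B_n) \le \exp\Big(-n\Big[-s\Delta - \ln\tfrac{\lambda_h}{\lambda_h-s} - \ln\tfrac{\lambda_a}{\lambda_a+s}\Big]\Big) = \exp\Big(-n\Big[s\Delta + \ln\tfrac{(\lambda_h-s)(\lambda_a+s)}{\lambda_h\lambda_a}\Big]\cdot(-1)\Big),
\end{equation}
which after rearranging matches the claimed form with $A_0 = s\Delta + \ln\big(\lambda_a\lambda_h/((\lambda_h-s)(\lambda_a+s))\big)$, bearing in mind the sign: we want the exponent $\big[s\Delta + \ln\big(\lambda_h\lambda_a/((\lambda_h-s)(\lambda_a+s))\big)\big]$ to be negative of what appears, so the statement's $A_0$ should be read as the negative of the exponent in my display — i.e.\ one picks $s$ to \emph{minimize} $e^{s\Delta}\cdot\lambda_h/(\lambda_h-s)\cdot\lambda_a/(\lambda_a+s)$, and $A_0$ is minus the log of that minimum, which is positive by the mean-drift condition.

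The remaining task is to verify that the stated $s = \frac{\lambda_h-\lambda_a}{2} + \frac{2 - \sqrt{4 + \Delta^2(\lambda_a+\lambda_h)^2}}{2\Delta}$ is the optimal choice and that it lies in the admissible range $(0,\lambda_h)$, and that $A_0 > 0$ there. Differentiating $g(s) := s\Delta - \ln(\lambda_h-s) - \ln(\lambda_a+s)$ and setting $g'(s) = \Delta + \frac{1}{\lambda_h - s} - \frac{1}{\lambda_a + s} = 0$ gives a quadratic in $s$ whose relevant root is exactly the stated expression; this is a direct computation. Positivity of $A_0 = g(s^\star) + \ln(\lambda_h\lambda_a) - \ln(\lambda_h\lambda_a) = -g(s^\star)+$const — more precisely, since $g(0) = -\ln(\lambda_h\lambda_a)$ and $g'(0) = \Delta + 1/\lambda_h - 1/\lambda_a < 0$ under the hypothesis, the minimum value $g(s^\star) < g(0)$, giving $A_0 = -(g(s^\star) - g(0)) > 0$. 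The main obstacle is purely bookkeeping: getting all the signs straight in the Legendre-transform computation and confirming $0 < s^\star < \lambda_h$ so the MGFs are finite; there is no conceptual difficulty beyond the standard Cramér bound once Propositions \ref{prop:min-rate} is invoked to identify the increment distributions.
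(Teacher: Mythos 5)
Your proof is correct and follows the same Chernoff-bound approach that the paper only sketches ("Proof is by using Chernoff bound, and, optimizing for the value of $s$. It also uses Proposition \ref{prop:min-rate}."). You correctly reduce $B_n$ to the equal-length sums by dropping $\delta^a_0$, identify the moment generating functions, optimize over $s$, and observe that the hypothesis $\lambda_a<\lambda_h/(1+\lambda_h\Delta)$ is precisely the negative-drift condition $\Delta+1/\lambda_h-1/\lambda_a<0$.

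You are also right that the stated formula for $A_0$ carries the wrong sign: $s\Delta+\ln\bigl(\lambda_a\lambda_h/((\lambda_h-s)(\lambda_a+s))\bigr)$ is the per-step Chernoff exponent itself, which is negative at the minimizing $s$ under the drift condition, so as written $A_0<0$, inconsistent with the claim $A_0>0$ and with $e^{-A_0 n}$ being a useful upper bound. The correct constant is its negation, $A_0=-s\Delta+\ln\bigl((\lambda_h-s)(\lambda_a+s)/(\lambda_a\lambda_h)\bigr)$, which is positive by the argument you gave: $g(s^\star)<g(0)$ because $g'(0)<0$ and $g$ is strictly convex on $(0,\lambda_h)$. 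With that correction the proposition and your proof agree; the optimizing $s$ you obtain from $g'(s)=0$ does match the stated closed-form expression (the smaller root of the resulting quadratic, which lies in $(0,\lambda_h)$).
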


Proof is by using Chernoff bound, and, optimizing for the value of $s$. It also uses Proposition \ref{prop:min-rate}.


\subsection{Proof of Lemma \ref{lem:infinite_many_F}}
\label{sec:lem:infinite_many_F_2}

The proof is based on random walk theory.

\begin{proof}
We would like to lower bound the probability that the $j$-th honest block is a loner and $F_j$ happens. Since the $j$-th honest block is a loner with probability $e^{-2\lambda_h\Delta}>0$ for all $j$, the probability that it is a Nakamoto block can be expressed as
$$P(F_j\ |\ \text{j-th honest block is a loner})\cdot e^{-2\lambda_h\Delta}$$
Then, the proof is reduced to obtaining a lower bound on 
$$P(F_j\ |\ \text{j-th honest block is a loner}).$$ 
For this purpose, we assume that the $j$-th honest block is a loner, and, proceed to obtain a lower bound on the probability of the event $F_j$: 

For any adversary tree $\tt_i$, $i<j$;
$$D_i(t)<D_h(t-\Delta)-D_h(\tau^h_i+\Delta)$$
for all times $t>\tau^h_j+\Delta$, which is equivalent to 
$$D_i(t+\Delta)<D_h(t)-D_h(\tau^h_i+\Delta)$$
for all times $t>\tau^h_j$. 

Let $U_j$ be the event that the $j$-th honest block is a loner. Let $G_j$ be the event that no adversary block is mined within the time period $[\tau^h_j,\tau^h_j+\Delta]$. Then, $P(G_j) = e^{-\lambda_a\Delta}$, and, we can lower bound $P(F_j|U_j)$ in the following way:
$$P(F_j|U_j) \geq P(F_j \cap G_j|U_j) = e^{-\lambda_a\Delta} P(F_j|U_j,G_j)$$
Since the events $G_j$, $j=1,2,..$ are shift invariant, the probability $P(F_j|U_j,G_j)$ is equal to the probability of the following event $\hat{F}_j$:

For any adversary tree $\tt_i$, $i<j$;
$$D_i(t)<D_h(t)-D_h(\tau^h_i+\Delta)$$
for all times $t>\tau^h_j$. Now, define $D^*(t)$ as the depth of the deepest adversary tree at time $t$ for $t\geq \tau^h_j$: 
$$D^*(t) := \max_{0\leq i < j} D_i(t)+D_h(\tau^h_i+\Delta)$$
Then, $\hat{F}_j$ basically represents the event that $D^*$ is behind $D_h$ for all times $t\geq \tau^h_j$. 
\newline

We next express $\hat{F}_j$ in terms of the following events:
$$E_1 := \{D^*(\tau^h_j) < D_h(\tau^h_j)\}$$
$E_1$ is the event that the tip of the deepest adversary tree, $D^*$, is behind the tip of the honest tree, $D_h$ at the arrival time of the $j$-th honest block.

$E_2$ is the event that $D_h(t)-D_h(\tau^h_j)$ is greater than the number of adversary arrivals during the time period $[\tau^h_j,t]$ for all $t$, $t>\tau^h_j$.

$E_3$ is the event that $D_h(\tau^h_j)-D_h(\tau^h_i+\Delta)$ is greater than the number of adversary arrivals during the time period $[\tau^h_i,\tau^h_j]$ for all $i$, $0 \leq i<j$. 

We can now express $\hat{F}_j$ in terms of $E_1$ and $E_2$: 
$$E_1 \cap E_2 \subseteq \hat{F}_j$$
Moreover, when a new adversary block is mined, depth of any of the trees $\tt_i$, $i<j$, increases by at most $1$. Hence, $E_3$ implies that none of the trees $\tt_i$, $i<j$, has depth greater than or equal to $D_h(\tau^h_j)$ at time $\tau^h_j$. Consequently, 
$$E_3 \subseteq E_1,$$
which further implies
$$E_3 \cap E_2 \subseteq \hat{F_j}$$
Observing that $E_3$ and $E_2$ are independent events, we can express the probability of $\hat{F_j}$ as;
$$P(\hat{F_j}) \geq P(E_3) P(E_2)$$

Now, define $E'_2$ as the event that $D_h(t-\Delta)-D_h(\tau^h_j)$ is greater than the number of adversary arrivals during the time period $[\tau^h_j,t]$ for all $t$, $t>\tau^h_j+\Delta$. Let $G'_j$ be the event that there is no adversary arrival during the time interval $[\tau^h_j,\tau^h_j+\Delta]$. Observe that again, $P(G'_j)=e^{-\lambda_a\Delta}$, and, the events $G'_j$, $j=1,2,..$ are shift invariant. Hence, we can do a similar trick as was done for the probabilities of $F_j$ and $G_j$ to obtain
$$P(E'_2) \geq e^{-\lambda_a\Delta} P(E_2).$$

Since the increase times of $D_h$ and the inter-arrival times of adversary arrivals are i.i.d, the growth processes of $D_h$ and the number of adversary blocks are time reversible. Hence, probability of $E_3$ approaches that of $E'_2$ from above as $j \to \infty$. Then, for all $j$, we can write
$$P(\hat{F_j}) \geq P(E_3) P(E_2) \geq P(E'_2)P(E_2) \geq e^{-\lambda_a\Delta} P(E_2)^2 $$
\newline

We now calculate the probability of the event $E_2$. To aid us in the calculation of $P(E_2)$, we construct a random walk $S[n]$. Here, the random walk is parametrized by the total number of adversary arrivals and increases in $D_h$ since time $\tau^h_j$. $S[n]$ stands for the difference between the increase in $D_h$ and the number of adversary arrivals when there has been, in total, $n$ number of increases in $D_h$ or adversary arrivals since time $\tau^h_j$. Notice that when $\Delta=0$, $D_h$ increases by one whenever there is an honest arrival. Hence, $S[n]$ simply counts the difference between the number of honest and adversary arrivals when there are $n$ arrivals in total. In this case, $S[n]$ jumps up by $1$ when there is an honest arrival, and, goes down by $1$ when there is an adversary arrival. Since the event that whether the next arrival is honest or adversary is independent of the past arrivals, $S[n]$ is a random walk when $\Delta=0$.
\newline

On the other hand, when $\Delta>0$, we have to construct a slight different random walk $S[n]$ for the difference between the increase in $D_h$ and the number of adversary arrivals due to the $\Delta$ dependence. Although this random walk has non-intuitive distributions for the jumps, we observe that 
\begin{enumerate}
    \item Expectation of these jumps is positive as long as 
    $$\lambda_a < \frac{\lambda_h}{1+\lambda_h\Delta}$$
    \item Expectation of the absolute value of the jumps is finite.
\end{enumerate}   
Then, due to the Strong Law of Large Numbers, every state of this random walk is transient, and, the random walk has a positive drift. This implies that starting at $S[0]=1$, the probability of $S[n]$ hitting or falling below $0$ is equal to some number $1-c$, where $1 \geq c>0$. 
\newline

Finally, observe that the probability of $S[n]$ hitting or falling below $0$ is exactly the probability of the event $E^c_2$. Hence, $P(E_2)=c>0$. Combining this observation with previous findings yields the following lower bound for $P(F_j|U_j)$: 
$$P(F_j|U_j) \geq e^{-\lambda_a\Delta} P(F_j|U_j,G_j) \geq e^{-2\lambda_a\Delta} P(E_2)^2 = e^{-2\lambda_a\Delta} c^2 = p > 0 $$
where $p>0$ does not depend on $j$. This concludes the proof.

\end{proof}

\subsection{Alternative Proof of Lemma \ref{lem:infinite_many_F}}
\label{sec:lem:infinite_many_F}

In this subsection, we give an alternative proof of Lemma \ref{lem:infinite_many_F}. We first present a proof sketch below:

First, in the Praos/SnowWhite model, each arrival of an adversary block can increment the depth of each adversary tree by exactly one although many copies of this block are mined. Hence, the adversary trees grow with the same rate as the adversary's mining rate, namely $\lambda_a$. Second, we observe that in the long run, the honest tree $T_h$ grows with rate $\lambda_h/(1+\lambda_h \Delta)$. See Figure \ref{fig:nakamoto_blocks} for the relation between adversary trees and the honest tree. Then, if $\lambda_a<\lambda_h/(1+\lambda_h \Delta)$, the gap between the depths of the honest tree and an adversary tree is expected to increase over time. This implies that the adversary trees built on blocks far from the tip of the honest tree will fall behind and not be able to catch-up with the honest tree. Hence, in order to analyze the probability that the $j$-th honest block $h_j$ is a Nakamoto block for any $j$, it is sufficient to focus on the adversary trees that have been built on honest blocks immediately preceding $h_j$. However, as there is only a small number of honest blocks immediately preceding $h_j$, there is a non-zero probability that none of the adversary trees built on them will be able to catch-up with the honest tree. Hence, when $\lambda_a < \lambda_h/(1+\lambda_h \Delta)$, $h_j$ becomes a Nakamoto block with non-zero probability for any $j$. 

We now proceed with a complete proof of lemma \ref{lem:infinite_many_F}.

\begin{proof}

We first observe that $(F_j \cap U_j)^c = F^c_j \cup U^c_j$ can be expressed as the union of the following disjoint events: (i) $h_j$ is not a loner. (ii) $h_j$ is a loner, and, $F^c_j$ happens:
\begin{IEEEeqnarray}{rCl}
\label{eq:p1}
P(F^c_j) = P(U^c_j) + P(F^c_j|U_j) P(U_j)
\end{IEEEeqnarray}
Now, since there exists a constant $c_1$ such that $0<c_1 \leq P(U_j)$ for any $j$, to prove the lemma, it is sufficient to find an upper bound $c_2$ on $P(F^c_j|U_j)$ such that $P(F^c_j|U_j) \leq c_2<1$. Hence, from now on, we assume that $h_j$ is a loner and given this fact, analyze the event $F^c_j$.

Note that the `catch-up' event $F^c_j$ implies the existence of a {\em minimum} time $t_j \geq \tau^h_j + \Delta$ such that there exists an adversary tree $\tt_i$ extending from some honest block $h_i$, $i<j$, for which, the following holds: 
$$D_i(t_j) \geq D_h(t_j-\Delta)-D_h(\tau^h_i+\Delta).$$
Define $l_j=D_h(\tau^h_i)+1$, and, $r_j = D_h(t_j-\Delta)$. Since $D_h$ is monotonically increasing and $t_j \geq \tau^h_j + \Delta$, and, $D(\tau^h_i) < D(\tau^h_j)$ as $h_j$ is a loner; $r_j = D_h(t_j-\Delta) \geq D_h(\tau^h_j) = d^h_j$, and, $l_j = D_h(\tau^h_i)+1 \leq D_h(\tau^h_j) = d^h_j$. Combining these facts, we infer that $l_j \leq d^h_j \leq r_j$. Consequently, at each depth within the interval $[l_j,r_j]$, there exists at least two blocks, at least one of which is an honest block in $\T_h$ and one of which is an adversary block in $\tt_i$. 

We next focus on the depth interval $[l_j,r_j]$, and, fix some constant and large integer $B$. Let $B_1$ and $B_2$ be the disjoint events that the `catch-up' event happens before and after depth $d^h_j+B$ respectively:
$$B_1 = \{r_j < d^h_j+B\}$$
$$B_2 = \{r_j \geq d^h_j+B\}$$
Let $B'_2$ denote the event that there exists a time $t$ such that $D_i(t) \geq D_h(t-\Delta)-D_h(\tau^h_i+\Delta)$ and $D_h(t-\Delta) \geq d^h_j+B$ for this $t$. Let $r'_j := D_h(t-\Delta)$. Then, we observe that
$$B'_2 \cap B^c_1 = B_2$$
Using this, we can upper bound $P(F^c_j|U_j)$ in the following way:
\begin{IEEEeqnarray}{rCl}
P(F^c_j|U_j) &=& P(B_1) + P(B_2) \\
&=& P(B_1) + P(B'_2 \cap B^c_1) \\
&=& P(B_1) + P(B'_2 | B^c_1) P(B^c_1) \\
&=& P(B_1) + P(B'_2 | B^c_1) (1-P(B_1)) \\
&\leq& P(B_1) + P(B'_2) (1-P(B_1))
\end{IEEEeqnarray}
Assume that $P(B_1)<1$ for all $j$. We will later prove that this is indeed true. Moreover, note that conditioning on $B^c_1$ decreases the probability of the event $B'_2$ since, (i) $B^c_1$ requires $\tt_i$ to be behind $\T_h$ while $\T_h$ increases through depths $d^h_j$ to $d^h_j+B$, (ii) Given any initial depths for $\tt_i$ and $\T_h$, catch-up events are ergodic. Then, proving that $P(F^c_j|U_j) \leq c_2$ for some $c_2<1$ reduces to proving that $P(B'_2) \leq c_4$ for some $c_4<1$ for a fixed $B$.

We next calculate an upper bound for $P(B'_2)$. For this purpose, we first define $B'_{a,b}$ as the event that at least $b-a$ adversary arrival events have occurred during the time interval $[\sum_{n=0}^a X_n-\Delta, \sum_{n=0}^b X_n+\Delta]$. Via the ergodicity of the processes $X_n$, using Proposition \ref{prop:bound-3}, we can write the following upper bound for $P(B'_{a,b})$ for $b-a$ sufficiently large:
$$P(B'_{a,b}) = P(\sum_{n=0}^{b-a} X_n+2\Delta \geq \sum_{n=0}^{b-a} \delta^a_n) \leq A_1 e^{- A_0 (b-a)}$$
where $A_1$ is a constant that is a function of $\Delta$, and, $A_0$ is the constant given in Proposition \ref{prop:bound-3}. Now, $B'_2$ requires at least $r'_j-l_j \geq B$ adversary blocks to be mined at the tip of $\tt_i$ from time $\tau^h_i+\Delta$ until some time $t-\Delta$, during which the depth of $\T_h$, $D_h$, grows by exactly $r'_j-l_j$. However, this is only possible if the adversary has at least $r'_j-l_j$ arrival events during this time interval. Hence, we can express $B'_2$ as a subset of the union of the events $B'_{a,b}$ in the following way:
$$B'_2 \subseteq \bigcup_{a < d^h_j, b \geq d^h_j+B} B'_{a,b}$$
Then, via union bound, its probability is upper bounded as shown below:
\begin{IEEEeqnarray}{rCl}
\label{eq:p3}
P(B'_2) &\leq& \sum_{0 \leq a < d^h_j} \sum_{b \geq d^h_j + B} P(B_{a,b}) \\
&\leq& A_1 e^{- A_0 B} \sum_{a=0}^{\infty} \sum_{b=0}^{\infty} e^{-A_0 (a+b)} \\
&=& A_1 \frac{1}{(1-e^{-A_0})^2} e^{- A_0 B} \\
&<& c_4
\end{IEEEeqnarray}
for sufficiently large $B$ and any $j$. Here, any positive constant $c_4$ smaller than $1$ can be achieved by making $B$ large enough.

Finally, for the $B$ fixed above, we prove that there exists a constant $c_3<1$ such that $P(B_1)\leq c_3$ for all $j$. Note that if $l_j \geq d^h_j-B$, there is a non-zero probability that no adversary block is mined from time $\tau^h_i$ to the time $B_h$ reaches depth $d^h_j+B$. Then there exists a constant $c_{31}<1$ such that $P(B_1|l_j \geq d^h_j-B)\leq c_{31}$ for all $j$. On the other hand, if $l_j < d^h_j-B$, we know from the calculations above that 
$$P(B_1|l_j < d^h_j-B) \leq A_1\frac{1}{(1-e^{-A_0})^2}e^{-A_0 B}$$
for all $j$. We further know that, for the large $B$ fixed above, there exists a constant $c_4<1$ such that this expression is below $c_4$. Consequently, for any $j$,
\begin{IEEEeqnarray}{rCl}
P(B_1) &\leq& P(l_j \geq d^h_j-B) c_{31} + (1-P(l_j \geq d^h_j-B))c_4 \\
&\leq& \max(c_{31},c_4) = c_3 < 1
\end{IEEEeqnarray}
Then, for any $j$, 
$$P(F^c_j|U_j) \leq c_3 + c_4 (1-c_3) < 1$$
This concludes the proof.

\end{proof}

\subsection{Proof of Lemma \ref{lem:pow-decay-delta-pos}}
\label{sec:lem:pow-decay-delta-pos}

We first state the following lemma which will be used in the proof of Lemma \ref{lem:pow-decay-delta-pos}. Recall that we have defined event $\hat{B}_{ik}$ in \S \ref{sec:defin} as:
\begin{equation}
    \hat{B}_{ik} = \mbox{event that $D_i(\tau^h_k+\Delta) \ge  D_h(\tau^h_{k-1}) - D_h(\tau^h_i+\Delta)$}.
\end{equation}

\begin{lemma}
\label{lem:PBik_pow}
There exists a constant $c>0$ such that
$$P(\hat{B}_{ik}) \leq e^{-c (k-i-1)}$$
\end{lemma}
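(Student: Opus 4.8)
The plan is to reduce the catch-up event $\hat{B}_{ik}$ to the one-sided renewal-race estimate already established in Proposition~\ref{prop:bound-3}. As noted at the start of \S\ref{sec:powpos_analysis}, it suffices to work in the Praos/SnowWhite model, where the adversary is the stronger of the two. There the first move is to eliminate the adversary tree in favour of a counting process: since each adversary arrival after time $\tau^h_i$ lengthens the deepest chain of $\T_i$ by at most one, $D_i(\tau^h_k+\Delta)$ is bounded by the number $N_a$ of adversary arrivals in $(\tau^h_i,\tau^h_k+\Delta]$, so $\hat{B}_{ik}\subseteq\{N_a\ge D_h(\tau^h_{k-1})-D_h(\tau^h_i+\Delta)\}$ (the same containment holds in the PoW model, with $D_i$ only smaller). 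The lemma thus becomes the statement that the fictitious honest tree typically out-paces the adversary arrival stream over the window delimited by the mining of $b_i$ and $b_k$, with a failure probability exponentially small in $k-i-1$.

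To get a bound that is tight at the private-attack threshold, the race must be phrased through the renewal increments appearing in Proposition~\ref{prop:bound-3}, namely the honest level-advance times $X_d=\Delta+Y_d$ with $Y_d$ i.i.d.\ $\mathrm{Exp}(\lambda_h)$ (Proposition~\ref{prop:min-rate}) and the adversary inter-arrival times $\delta^a_d$, i.i.d.\ $\mathrm{Exp}(\lambda_a)$, rather than through a comparison of growth \emph{rates}. Write $\ell_0=D_h(\tau^h_i)$ and $\ell_1=D_h(\tau^h_{k-1})$ and fix a small constant $c>0$. If $\ell_1-\ell_0<c(k-i)$, then, since any honest inter-arrival $\delta^h_j>\Delta$ forces $b_j$ onto a strictly deeper level of $\T_h$ than $b_{j-1}$, fewer than $c(k-i)$ of the $k-1-i$ gaps $\delta^h_{i+1},\dots,\delta^h_{k-1}$ can exceed $\Delta$; for $c<e^{-\lambda_h\Delta}$ this is a binomial large-deviation event of probability $e^{-\Omega(k-i)}$. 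On the complementary event $\ell_1-\ell_0\ge c(k-i)$, note that $D_h(\tau^h_i+\Delta)-\ell_0\le 2$ because a time window of length $\Delta$ meets at most three consecutive levels of $\T_h$, so $\hat{B}_{ik}$ forces at least $n':=(\ell_1-\ell_0)-2$ adversary arrivals during a window over which $D_h$ advances $n'$ levels; by Proposition~\ref{prop:bound-3}, and by the time-reversibility of the two renewal processes used in the alternative proof of Lemma~\ref{lem:infinite_many_F} to move the window around, this has probability at most $A_1 e^{-A_0 n'}$ for a constant $A_1$ depending only on $\Delta$.

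It remains to collect terms: union-bounding over the $O((k-i)^2)$ admissible pairs of window endpoints and over the two cases gives $P(\hat{B}_{ik})\le\mathrm{poly}(k-i)\,e^{-\Omega(k-i-1)}$, and absorbing the polynomial prefactor into the exponent at the cost of a smaller rate, together with the trivial bound $P(\hat{B}_{ik})\le 1$ for $k\le i+1$, yields $P(\hat{B}_{ik})\le e^{-c(k-i-1)}$ for a suitable $c>0$. The only delicate point is the second paragraph: converting both sides of $N_a\ge D_h(\tau^h_{k-1})-D_h(\tau^h_i+\Delta)$ into elapsed wall-clock time before comparing would cost a constant factor — giving security only up to roughly half the private-attack rate — so tightness hinges on keeping the comparison at the level of $X_d$ versus $\delta^a_d$ (i.e.\ reducing to Proposition~\ref{prop:bound-3}) and on carefully accounting for the $O(1)$ corrections produced by the $\Delta$-windows at the two ends.
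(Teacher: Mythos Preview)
Your proposal is correct and lands on the same core estimate (Proposition~\ref{prop:bound-3}) as the paper, but the decomposition is genuinely different. The paper conditions on the number of \emph{adversary} arrivals $N_a$ in $[\tau^h_i,\tau^h_k+\Delta]$: it first invokes Proposition~\ref{prop:bound-2} to show $N_a\gtrsim (1-\delta)(k-i)\lambda_a/\lambda_h$ except with probability $e^{-\Omega(k-i)}$, and then, on that typical event, union-bounds $P(\hat B_{ik})$ by the geometric sum $\sum_{n\ge n^*} e^{-A_0 n}$. You instead condition on the number of \emph{honest} level advances $\ell_1-\ell_0$: your Case~A (a binomial large deviation on the loner indicators $\{\delta^h_j>\Delta\}$) plays the role of the paper's Proposition~\ref{prop:bound-2} step, and in Case~B you apply Proposition~\ref{prop:bound-3} with $n'=\ell_1-\ell_0-O(1)$. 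Both routes work; the paper's is marginally shorter because the single geometric sum over $n$ replaces your separate loner-count estimate plus the sum over $m=\ell_1-\ell_0$. Your closing remark---that one must keep the comparison at the level of the renewal increments $X_d$ versus $\delta^a_d$ rather than pass through wall-clock time---is exactly the point, and is why both decompositions recover the sharp rate $A_0$.

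Two minor issues that do not affect correctness. First, the union bound ``over the $O((k-i)^2)$ admissible pairs of window endpoints'' is unnecessary: the events $\{\ell_1-\ell_0=m\}$ are disjoint and the bound from Proposition~\ref{prop:bound-3} is uniform in $\ell_0$ by stationarity of the $X_d$'s, so a single sum over $m\ge c(k-i)$ suffices; the spurious polynomial factor is harmless since it gets absorbed into the exponent anyway. Second, both your argument and the paper's are equally informal about the boundary slack: the adversary window $[\tau^h_i,\tau^h_k+\Delta]$ is longer than the honest-advance window $[\tau^h_i+\Delta,\tau^h_{k-1}]$ by $\delta^h_k+2\Delta$, and neither proof spells out how this random offset (with exponential tail) is traded for the multiplicative constant $A_1$. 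Your appeal to ``time-reversibility\ldots to move the window around'' gestures at this but is not quite the right mechanism; what is really needed is that the $O(1)$ extra $X_d$'s and the single $\delta^h_k$ can be absorbed into the Chernoff bound at the cost of a constant prefactor, exactly as in the $B'_{a,b}$ estimate in the alternative proof of Lemma~\ref{lem:infinite_many_F}.
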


\begin{proof}
We know from Proposition \ref{prop:bound-2} that there are more than $(1-\delta)(k-i)\lambda_a/\lambda_h$ adversary arrival events in the time period $[\tau^h_i,\tau^h_{k}+\Delta]$ except with probability $e^{-\Omega((k-i)\delta^2\lambda_a/\lambda_h)}$. Moreover, Proposition \ref{prop:bound-3} states that
$$P(\sum_{i=1}^n X_i \geq \sum_{i=0}^n \delta^a_i) \leq e^{-A_0 n}$$
for large $n$. Then, using the union bound, we observe that for any fixed $\delta$, probability of $\hat{B}_{ik}$ when there are more than $(1-\delta)(k-i)\lambda_a/\lambda_h$ adversary arrival events in the time period $[\tau^h_i,\tau^h_{k}+\Delta]$ is upper bounded by the following expression:
$$\frac{1}{1-e^{-C_1}} e^{-C_1(k-i)}$$
where 
$$C_1 = \frac{A_0(1-\delta)\lambda_a}{\lambda_h}.$$
Hence,
$$P(\hat{B}_{ik}) < \frac{1}{1-e^{-C_1}} e^{-C_1(k-i)} + e^{-\Omega((k-i)\delta^2\frac{\lambda_a}{\lambda_h})} \leq C_2 e^{-C_3(k-i)}$$
for any $k,i$, $k>i+1$, and appropriately chosen constants $C_2,C_3>0$ as functions of the fixed $\delta$. Finally, since $P(\hat{B}_{ik})$ decreases as $k-i$ grows and is smaller than $1$ for all $k>i+1$, we obtain the desired inequality for a sufficiently small $c\leq C_3$.

\end{proof}

We can now proceed with the main proof.

We divide the proof in to two steps. In the first step, we prove for $\varepsilon = 1/2$. By Lemma \ref{lem:F_j}, we have
\begin{equation}
    \hat{F}_j^c = F_j^c \cup U_j^c = \left(\bigcup_{(i,k): i < j <k} \hat{B}_{ik}\right) \cup U_j^c.
\end{equation}
Divide $[s,s+t]$ into $\sqrt{t}$ sub-intervals of length $\sqrt{t}$, so that the $r$ th sub-interval is:
$$\J_r : = [s+  (r-1) \sqrt{t}, s+ r\sqrt{t}].$$

Now look at the first, fourth, seventh, etc sub-intervals, i.e. all the $r = 1 \mod 3$ sub-intervals. Introduce the event that in the $\ell$-th $1 \mod 3$th sub-interval,
an adversary tree that is rooted at a honest block arriving in that sub-interval or in the previous ($0 \mod 3$) sub-interval catches up with a honest block in that sub-interval or in the next ($2 \mod 3$) sub-interval. 
Formally,
$$C_{\ell}=\bigcap_{j: \tau^h_j \in \J_{3\ell+1}}
U_j^c \cup \left(\bigcup_{(i,k): \tau^h_j - \sqrt{t} < \tau^h_i < \tau^h_j, \tau^h_j < \tau^h_k +\Delta < \tau^h_j +\sqrt{t} } \hat{B}_{ik} \right).$$
Note that for distinct $\ell$, the events $C_\ell$'s  are independent. Also, we have
\begin{equation}
    \label{eqn:powqq2}
    P(C_{\ell})\leq P(\mbox{no arrival in $\J_{3\ell+1}$}) + 1-p < 1
\end{equation}
for large enough $t$, where $p$ is a uniform lower bound such that $P(\hat{F}_j) \ge p$ for all $j$ provided by Lemma \ref{lem:infinite_many_F}. 

Introduce the atypical events:
\begin{eqnarray}
    B &=& \bigcup_{(i,k): \tau^h_i \in [s,s+t] \mbox{~or~} \tau^h_k + \Delta \in [s,s+t], i < k, \tau^h_k + \Delta - \tau^h_i >  \sqrt{t}} \hat{B}_{ik}, \nonumber \\ \text{ and } \nonumber\\
    \tilde{B} &=& \bigcup_{(i,k):\tau^h_i<s,s+t<\tau^h_k+\Delta} \hat{B}_{ik}. \nonumber
\end{eqnarray}
The events $B$ and $\tilde{B}$ are the events  that an adversary tree catches up with an honest block far ahead. Then we have
\begin{eqnarray}
\label{eqn:powqqq2}P(B_{s,s+t}) &\leq& P(\bigcap_{j: \tau^h_j \in [s,s+t]} U_j^c) + P(B)+P(\tilde B)+P(\bigcap_{\ell=0}^{\sqrt{t}/3} C_{\ell})\nonumber \\
&=& P(\bigcap_{j: \tau^h_j \in [s,s+t]} U_j^c) + P(B)+ P(\tilde B)+(P(C_{\ell}))^{\sqrt{t}/3}\nonumber \\
&\leq&  e^{-c_2 t}+ P(B)+ P(\tilde B) + (P(C_\ell))^{\frac{\sqrt{t}}{3}}
\end{eqnarray}
for some positive constant $c_2$ when $t$ is large, where the equality is due to independence. Next we will bound the atypical events $B$ and $\tilde{B}$.
Consider the following events
\begin{eqnarray*}
D_1&=&\{\#\{i: \tau^h_i\in (s-\sqrt{t}-\Delta,s+t+\sqrt{t}+\Delta)\} >2\lambda_h t\} \label{eqn:powD1}\\
D_2&=&\{ \exists i,k: \tau^h_i  \in (s,s+t), (k-i)<\frac{\sqrt{t}}{2\lambda_h}, \tau^h_k-\tau^h_i+\Delta>\sqrt{t}\} \label{eqn:powD2}\\
D_3&=&\{ \exists i,k: \tau^h_k+\Delta \in (s,s+t), (k-i)<\frac{\sqrt{t}}{2\lambda_h}, \tau^h_k-\tau^h_i+\Delta>\sqrt{t}\} \label{eqn:powD3}
\end{eqnarray*}
In words, $D_1$ is the event of atypically many honest arrivals in $(s-\sqrt{t}-\Delta,s+t+\sqrt{t}+\Delta)$ while $D_2$ and $D_3$ are the events that there exists an interval of length $\sqrt{t}$ with at least one endpoint inside $(s,s+t)$ with atypically small number of arrivals. Since the number of honest arrivals in $(s,s+t)$ is Poisson with parameter $\lambda_h t$, we have from the memoryless property of the Poisson process that  $P(D_1)\leq e^{-c_0t}$ for some constant $c_0=c_0(\lambda_a,\lambda_h)>0$ when $t$ is large.  
On the other hand, using the memoryless property and a union bound, and decreasing $c_0$ if needed, we have that $P(D_2)\leq e^{-c_0 \sqrt{t}}$. Similarly, using time reversal, $P(D_3)\leq e^{-c_0\sqrt{t}}$.
Therefore, again using the memoryless property of the Poisson process,
\begin{eqnarray}
P(B)&\leq & P(D_1\cup D_2\cup D_3)+ P(B\cap D_1^c\cap D_2^c\cap D_3^c)\nonumber\\
&\leq & e^{-c_0 t} + 2e^{-c_0\sqrt{t}}+\sum_{i=1}^{2\lambda_h t} \sum_{k: k-i>\sqrt{t}/2\lambda_h} P(\hat{B}_{ik}) \\
&\leq & e^{-c_3\sqrt{t}},
\label{eqn:powPB}
\end{eqnarray}
for large $t$, where $c_3>0$ are constants that may depend on $\lambda_a,\lambda_h$ and the last inequality is due to Lemma \ref{lem:PBik_pow} . 
We next claim that there exists a constant $\alpha>0$ such that, for all $t$ large,
\begin{equation}
    \label{eqn:powPtB}
    P(\tilde B)\leq e^{- \alpha t}.
\end{equation}
Indeed, we have that
\begin{eqnarray}
&&P(\tilde B) \nonumber\\
&=& \sum_{i<k} \int_0^s P(\tau^h_i\in d\theta) P(\hat{B}_{ik}, \tau^h_k-\tau^h_i+\Delta>s + t-\theta)\nonumber \\
&\leq & \sum_i \int_0^s P(\tau^h_i\in d\theta) \sum_{k:k>i} P(\hat{B}_{ik})^{1/2} P(\tau^h_k-\tau^h_i+\Delta>s + t-\theta)^{1/2}.\nonumber\\
\label{eqn:powPtB1}
\end{eqnarray}
The tails of the Poisson distribution yield the existence of constants $c,c'>0$ so that
\begin{eqnarray}
    \label{eqn:powPoisson_tail}
    &&P(\tau^h_k-\tau^h_i + \Delta>s+t-\theta)\\
    &\leq& \left\{
    \begin{array}{ll}
    1,& (k-i)>c(s+t-\theta-\Delta)\\
    e^{-c'(s+t-\theta-\Delta)},& (k-i)\leq c(s+t-\theta-\Delta).
    \end{array}\right.
\end{eqnarray} 
Lemma \ref{lem:PBik_pow} and \eqref{eqn:powPoisson_tail} yield that there exists a constant $\alpha>0$ so that
\begin{equation}
    \label{eqn:powPtB2}
    \sum_{k: k>i} P(\hat{B}_{i,k})^{1/2}P(\tau^h_k-\tau^h_i>s+t-\theta-\Delta)^{1/2} \leq e^{-2\alpha(s+t-\theta-\Delta)}.
\end{equation}
Substituting this bound in \eqref{eqn:powPtB1} and using that $\sum_i P(\tau^h_i\in d\theta)=d\theta$ gives
\begin{eqnarray}
\label{eqn:powPtB3}
P(\tilde B)&\leq &
\sum_{i} \int_0^s 
P(\tau^h_i\in d\theta) e^{-2\alpha (s+t-\theta-\Delta)}\nonumber\\
&\leq& \int_0^s e^{-2\alpha (s+t-\theta-\Delta)} d\theta
\leq \frac{1}{2\alpha} e^{-2\alpha(t-\Delta)} \leq e^{-\alpha t},
\end{eqnarray}
for $t$ large, proving \eqref{eqn:powPtB}.

Combining (\ref{eqn:powPB}), (\ref{eqn:powPtB3}) and (\ref{eqn:powqqq2}) concludes the proof of step 1.

In step two, we prove for any $\varepsilon > 0$ by recursively applying the bootstrapping procedure in step 1.
Assume the following statement is true: for any $\theta \geq m$ there exist constants $\bar a_\theta,\bar A_\theta$ so that for all $s,t\geq 0$,
\begin{equation}
\label{eqn:powqst_basic}
\tilde{q}[s,s+t] \leq \bar A_\theta \exp(-\bar a_\theta t^{1/\theta}).
\end{equation}
By step 1, it holds for $m = 2$.

Divide $[s,s+t]$ into $t^{\frac{m-1}{2m-1}}$ sub-intervals of length $t^{\frac{m}{2m-1}}$, so that the $r$ th sub-interval is:
$$\J_r : = [s+  (r-1) t^{\frac{m}{2m-1}}, s+ rt^{\frac{m}{2m-1}}].$$

Now look at the first, fourth, seventh, etc sub-intervals, i.e. all the $r = 1 \mod 3$ sub-intervals. Introduce the event that in the $\ell$-th $1 \mod 3$th sub-interval,
an adversary tree that is rooted at a honest block arriving in that sub-interval or in the previous ($0 \mod 3$) sub-interval catches up with a honest block in that sub-interval or in the next ($2 \mod 3$) sub-interval. 
Formally,
$$C_{\ell}=\bigcap_{j: \tau^h_j \in \J_{3\ell+1}}
U_j^c \cup \left(\bigcup_{(i,k): \tau^h_j - t^{\frac{m}{2m-1}} < \tau^h_i < \tau^h_j, \tau^h_j < \tau^h_k +\Delta < \tau^h_j +t^{\frac{m}{2m-1}} } \hat{B}_{ik} \right).$$
Note that for distinct $\ell$, the events $C_\ell$'s  are independent. Also by (\ref{eqn:powqst_basic}), we have
\begin{equation}
    \label{eqn:powpcl}
    P(C_{\ell})\leq  A_m \exp(-\bar a_m t^{1/(2m-1)}).
\end{equation}

Introduce the atypical events:
\begin{eqnarray}
    B &=& \bigcup_{(i,k): \tau^h_i \in [s,s+t] \mbox{~or~} \tau^h_k + \Delta \in [s,s+t], i < k, \tau^h_k + \Delta - \tau^h_i >  t^{\frac{m}{2m-1}}} \hat{B}_{ik}, \nonumber \\ \text{ and } \nonumber\\
    \tilde{B} &=& \bigcup_{(i,k):\tau^h_i<s,s+t<\tau^h_k+\Delta} \hat{B}_{ik}.  \nonumber
\end{eqnarray}
The events $B$ and $\tilde{B}$ are the events  that an adversary tree catches up with an honest block far ahead. Following the calculations in step 1, we have
\begin{eqnarray}
    P(B) &\leq& e^{-c_1 t^{\frac{m}{2m-1}}}\\
    P(\tilde{B}) &\leq& e^{- \alpha t},
\end{eqnarray}
for large $t$, where $c_1$ and $\alpha$ are some positive constant.

Then we have
\begin{eqnarray}
\label{eqn:powqst_induc}
\tilde{q}[s,s+t] &\leq& P(\bigcap_{j: \tau^h_j \in [s,s+t]} U_j^c) + P(B)+P(\tilde B)+P(\bigcap_{\ell=0}^{t^{\frac{m-1}{2m-1}}/3} C_{\ell})\nonumber \\
&=& P(\bigcap_{j: \tau^h_j \in [s,s+t]} U_j^c) + P(B)+ P(\tilde B)+(P(C_{\ell}))^{t^{\frac{m-1}{2m-1}}/3}\nonumber \\
&\leq&  e^{-c_2 t}+ e^{-c t^{\frac{m}{2m-1}}}+ e^{- \alpha t}  \nonumber \\
&& \;+\; ( A_m \exp(-\bar a_m t^{1/(2m-1)}))^{t^{\frac{m-1}{2m-1}}/3} \nonumber \\
&\leq& \bar A'_m \exp(-\bar a'_m t^{\frac{m}{2m-1}})
\end{eqnarray}
for large $t$, where $A'_m$ and $a'_m$ are some positive constant.

So we know the statement in (\ref{eqn:powqst_basic}) holds for all $\theta \geq \frac{2m-1}{m}$. Start with $m_1=2$, we have a recursion equation $m_k = \frac{2m_{k-1}-1}{m_{k-1}}$ and we know (\ref{eqn:powqst_basic}) holds for all $\theta \geq m_k$. It is not hard to see that $m_k = \frac{k+1}{k}$ and thus $\lim_{k\rightarrow\infty} m_k = 1$, which concludes the lemma.

\section{Proofs for Section \ref{sec:chia_analysis}}

Notations used in this section are defined in \S \ref{sec:defin}.

\subsection{The adversary tree via branching random walks}
\label{app:brw}
We first give a description of the (dual of the)
adversary tree in terms of a 
Branching Random Walk (BRW). Such a representation appears already in \cite{pittel94,Drmota}, but we use here 
the standard language from, e.g., \cite{aidekon,shi}.

Consider the collection of $k$ tuples of positive integers, 
$\I_k =\{(i_1,\ldots,i_k)\}$, and  set $\I=\cup_{k>0} 
\I_k$. We consider elements of $\I$ as labelling the vertices of a rooted 
infinite tree, with $\I_k$ labelling the vertices at generation $k$ as follows:
the vertex $v = (i_1,\ldots,i_k)\in \I_k$ 
is the $i_k$-th child of vertex $(i_1,\ldots,i_{k-1})$
at level $k-1$.
An example of labelling is given in Figure~\ref{fig:label_tree}.
For such $v$ we also let $v^j=(i_1,\ldots,i_j)$, 
$j=1,\ldots,k$, denote the 
ancestor of $v$ at level $j$, with $v^k=v$. 
For notation convenience, we set $v^0=0$ as the root of the tree.

\begin{figure}[h]
\centering
\includegraphics[width=0.4\textwidth]{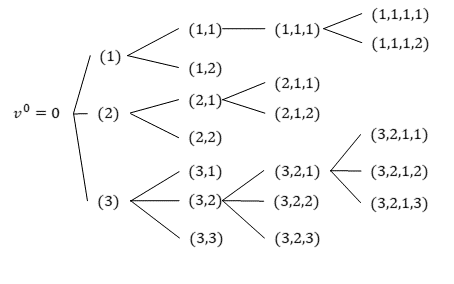}
\caption{Labelling the vertices of a rooted infinite tree.} \label{fig:label_tree}
\end{figure}

Next, let $\{\E_v\}_{v\in \I}$ be an i.i.d. family of exponential 
random variables
of parameter $\lambda_a$. For $v=(i_1,\ldots,i_k)\in \I_k$,
let $\W_v=\sum_{j\leq i_k} \E_{(i_1,\ldots,i_{k-1},j)}$ and
let $S_v=\sum_{j\leq k} \W_{v^j}$. This creates a labelled tree, with the following
interpretation: for $v=(i_1,\ldots,i_j)$,  the $W_{v^j}$ are the waiting
for $v^j$ to appear, measured from the appearance of $v^{j-1}$, and 
$S_v$ is the appearance time of $v$. A moments thought ought to convince the 
reader that the tree $S_v$ is a description of the adversary tree,
sorted by depth. 

Let $S^*_k=\min_{v\in \I_k} S_v$. Note that $S^*_k$ is the time of appearance
of a block at level $k$ and therefore we have
\begin{equation}
  \label{eq-Ofer1}
  \{D_0(t)\leq k\}=\{S^*_k\geq  t\}.
\end{equation}

$S^*_k$ is the minimum of a standard BRW. Introduce, for $\theta<0$,
the moment generating 
function
\begin{eqnarray*}
\Lambda(\theta)&=&\log \sum_{v\in \I_1} E(e^{\theta S_v})=
\log \sum_{j=1}^\infty E (e^{\sum_{i=1}^j \theta \E_i})\\&=& 
\log \sum_{j=1}^\infty (E(e^{\theta \E_1}))^j=
\log \frac{E(e^{\theta \E_1})}{1-E(e^{\theta\E_1})}.
\end{eqnarray*}
Due to the exponential law of $\E_1$,
$E(e^{\theta \E_1})= \frac{\lambda_a}{\lambda_a-\theta}$ and therefore
$\Lambda(\theta)=\log(-\lambda_a/\theta)$. 

An important role is played by $\theta^*=-e\lambda_a$, for which 
$\Lambda(\theta^*)=-1$ and 
$$\sup_{\theta<0} \left(\frac{\Lambda(\theta)}{\theta}\right)= \frac{\Lambda(\theta^*)}{\theta^*}=\frac{1}{\lambda_a e}=\frac{1}{|\theta^*|}.$$
Indeed, 
see 
e.g \cite[Theorem 1.3]{shi}, we have the following.
\begin{lemma}
  \label{prop-1}
$$\lim_{k\to\infty} \frac{S^*_k}{k}= \sup_{\theta<0} \left(\frac{\Lambda(\theta)}{\theta}\right)=\frac{1}{|\theta^*|}, \quad a.s.$$
\end{lemma}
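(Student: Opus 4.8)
The statement is the strong law of large numbers for the minimal displacement $S^*_k=\min_{v\in\I_k}S_v$ of a branching random walk — the first-birth theorem of Hammersley and Kingman (see \cite[Theorem~1.3]{shi}). Write $\gamma:=\sup_{\theta<0}\Lambda(\theta)/\theta=1/|\theta^*|$. The plan is to prove the two inequalities $\liminf_k S^*_k/k\ge\gamma$ and $\limsup_k S^*_k/k\le\gamma$ separately. Throughout I would use the elementary facts already checked in the excerpt: $\Lambda(\theta)=\log(-\lambda_a/\theta)$ is finite, smooth and strictly convex on $(-\infty,0)$, the supremum defining $\gamma$ is attained at the interior point $\theta^*=-e\lambda_a$, and $\theta^*$ is exactly the \emph{critical} exponent in the sense that $\theta^*\Lambda'(\theta^*)-\Lambda(\theta^*)=0$ (hence $\Lambda'(\theta^*)=\Lambda(\theta^*)/\theta^*=\gamma$).

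\emph{Lower bound.} This is immediate from Lemma~\ref{theo-tail} and the identity \eqref{eq-Ofer1}: for any $a<\gamma=1/(e\lambda_a)$,
$$P(S^*_k\le ak)=P(D_0(ak)\ge k)\le\left(\frac{e\lambda_a\,ak}{k}\right)^k=(e\lambda_a a)^k,$$
which is summable in $k$. By Borel–Cantelli, a.s.\ $S^*_k>ak$ for all large $k$; letting $a\uparrow\gamma$ along a countable sequence gives $\liminf_k S^*_k/k\ge\gamma$ a.s. (Equivalently this is the first-moment/Chernoff bound $P(S^*_k\le ak)\le e^{k(\Lambda(\theta)-a\theta)}$, using the branching identity $E\sum_{v\in\I_k}e^{\theta S_v}=e^{k\Lambda(\theta)}$ and optimizing over $\theta<0$.)

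\emph{Upper bound.} This is the substantive direction, and I would organize it in three steps. (i) \emph{Truncation to finite degree.} Replace each vertex's offspring by its leftmost $N$, obtaining a BRW of degree exactly $N$ with $\Lambda_N(\theta)=\log\sum_{j=1}^N (E(e^{\theta\E_1}))^j\uparrow\Lambda(\theta)$ and speed constant $\gamma_N:=\sup_{\theta<0}\Lambda_N(\theta)/\theta$; an elementary argument (the maximizer $\theta^*_N$ converges to $\theta^*$) gives $\gamma_N\downarrow\gamma$. Since deleting vertices only moves the leftmost particle to the right, $S^*_k\le S^{*,(N)}_k$, so it suffices to prove $\limsup_k S^{*,(N)}_k/k\le\gamma_N$ for each fixed $N$ and then let $N\to\infty$. (ii) \emph{A positive-probability foothold.} Fix $\epsilon>0$; I would show there is an integer $k_0$ with $P\big(S^{*,(N)}_{k_0}\le(\gamma_N+\epsilon)k_0\big)\ge p_0>0$, by a truncated second-moment argument: let $N_{k_0}$ count the level-$k_0$ vertices $v$ whose \emph{entire} ancestral path $S_{v^1},\dots,S_{v^{k_0}}$ stays below the line $\ell\mapsto(\gamma_N+\epsilon)\ell+C$; a many-to-one computation plus a ballot-type estimate shows $E[N_{k_0}]$ still grows exponentially in $k_0$, while the path restriction makes $E[N_{k_0}^2]\le C'\,E[N_{k_0}]^2$ (pairs of such vertices split at a most-recent common ancestor and are controlled there), so Paley–Zygmund yields $P(N_{k_0}\ge1)\ge p_0>0$. (iii) \emph{Bootstrapping along the tree.} Each vertex roots an independent copy of the BRW, so starting from such a foothold and iterating on the subtrees that realize it, one constructs a.s.\ an infinite descending ray along which the position increases by at most $(\gamma_N+\epsilon)k_0+O(1)$ every $k_0$ generations; this is a standard supercritical-branching/renewal comparison (a ``breeding'' argument) and gives $\limsup_k S^{*,(N)}_k/k\le\gamma_N+\epsilon$ a.s. Letting $\epsilon\downarrow0$ and then $N\to\infty$ finishes.

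\emph{Main obstacle.} The crux is step (ii), the second moment. Two features make it delicate. First, the offspring point process has unbounded support (infinitely many atoms), which is exactly why step (i) performs the truncation first. Second — and this is the structurally interesting point — the extremal exponent $\theta^*=-e\lambda_a$ is the \emph{critical} one: the plain additive martingale $W_k(\theta^*)=e^{-k\Lambda(\theta^*)}\sum_{v\in\I_k}e^{\theta^* S_v}$ converges to $0$ a.s., so, unlike in the $\liminf$ direction, it cannot by itself exhibit a particle near $\gamma k$; this is what forces the barrier/tube restriction in the second moment and the subsequent bootstrap. As an alternative to steps (ii)–(iii) one may simply invoke the classical first-birth theorem of Hammersley–Kingman, stated as \cite[Theorem~1.3]{shi}, which gives $S^*_k/k\to\gamma$ directly; the three-step argument above is the self-contained route.
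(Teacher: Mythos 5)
The paper does not actually prove this lemma: the sentence immediately preceding it reads ``Indeed, see e.g.~\cite[Theorem~1.3]{shi}, we have the following,'' i.e.\ the authors simply cite Shi's minimal-displacement (first-birth) theorem for branching random walks. Your closing remark — that one may ``simply invoke'' \cite[Theorem~1.3]{shi} — is therefore exactly what the paper does, so you and the paper agree on the fallback. The self-contained three-step argument you outline is a genuinely different (and more informative) route: your lower bound is just the Chernoff/Borel--Cantelli consequence of Lemma~\ref{theo-tail} and \eqref{eq-Ofer1}, both of which the paper proves independently later in the appendix, so that direction comes for free from ingredients the paper already develops; the upper bound is the hard direction and you correctly isolate the two obstructions — unbounded offspring (handled by truncating to degree $N$ with $\gamma_N\downarrow\gamma$) and the criticality of $\theta^*$ (the additive martingale $W_k(\theta^*)$ degenerates, so a plain first-moment bound cannot exhibit a near-minimal particle, forcing the barrier-restricted second moment plus a branching/renewal bootstrap). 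If you were to flesh this out, two details need saying: in step~(i), the finite-degree tree must stay supercritical ($\Lambda_N(0^-)=\log N>0$ for $N\ge2$), and in step~(iii) the foothold probability $p_0$ must dominate the extinction probability of the embedded Galton--Watson process, which is arranged by taking $k_0$ large; neither is a gap in the idea. Net: your argument is correct and self-contained and makes the dependence on Lemma~\ref{theo-tail} explicit, whereas the paper opts for a bare citation to the classical result.
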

In fact, much more is known, see e.g. \cite{hushi}. 
\begin{lemma}
  \label{prop-2}
  There exist explicit constants $c_1>c_2>0$ 
so that the sequence
$ S^*_k-k/\lambda_a e-c_1\log k$ is tight,
and
$$\liminf_{k\to \infty} S^*_k-k/\lambda_a e-c_2\log k=\infty, a.s.$$
\end{lemma}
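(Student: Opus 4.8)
The plan is to read off both assertions from the now-classical fine asymptotics for the minimal position of a branching random walk, applied to the BRW $\{S_v\}_{v\in\I}$ constructed in \S\ref{app:brw}; the only preparatory work is to put it into the ``boundary case'' normalization used in that literature. To this end, set $V_v := e\lambda_a\, S_v - |v|$ for $v\in\I$. This is again a BRW: the displacement of a child over its parent is $e\lambda_a\,\W_v-1$. Since $V_v = e\lambda_a S_v - k$ for $v\in\I_k$ and $x\mapsto e\lambda_a x - k$ is increasing, $\inf_{|v|=k} V_v = e\lambda_a\, S^*_k - k$, i.e.
$$S^*_k \;=\; \frac{k}{\lambda_a e} \;+\; \frac{1}{\lambda_a e}\,\inf_{|v|=k} V_v ,$$
so it suffices to control the BRW minimum $\inf_{|v|=k} V_v$.

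First I would verify the standard hypotheses for $\{V_v\}$. Writing $\psi(t):=\log E\big[\sum_{v\in\I_1} e^{-t V_v}\big]$, the first-generation displacements are $e\lambda_a(\E_1+\cdots+\E_i)-1$ for $i\geq 1$; using $E[e^{-s\E_1}]=\lambda_a/(\lambda_a+s)$ and summing the resulting geometric series one gets $\psi(t)=(t-1)-\log t$ for all $t>0$. Hence $\psi(1)=0$ and $\psi'(1)=0$ (the boundary case), $\psi''(1)=1\in(0,\infty)$, and $\psi$ is finite in a neighbourhood of $1$. The first-child displacement law $e\lambda_a\E_1-1$ is absolutely continuous, hence non-lattice; and the total weight $X=\sum_{v\in\I_1}e^{-V_v}=e\sum_{i\geq 1}e^{-e\lambda_a(\E_1+\cdots+\E_i)}$ (a functional of a rate-$\lambda_a$ Poisson process) together with $\tilde X=\sum_{v\in\I_1}V_v^+ e^{-V_v}$ have moments of all orders (in fact finite exponential moments), so the integrability conditions $E[X(\log_+X)^2]<\infty$, $E[\tilde X\log_+\tilde X]<\infty$ hold trivially; non-extinction is automatic since every vertex has children. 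Granting these, the minimal-displacement theory for boundary-case BRW gives (i) tightness of $\inf_{|v|=k} V_v-\tfrac32\log k$ (\cite{aidekon,hushi}, see also \cite{shi}), and (ii) the almost sure statement $\liminf_{k\to\infty}\big(\inf_{|v|=k}V_v\big)\big/\log k=\tfrac12$ (\cite{hushi}). Translating through the displayed identity: $S^*_k-\tfrac{k}{\lambda_a e}-\tfrac{3}{2\lambda_a e}\log k=\tfrac1{\lambda_a e}\big(\inf_{|v|=k}V_v-\tfrac32\log k\big)$ is tight, so $c_1:=\tfrac{3}{2\lambda_a e}$ works; and for $c_2:=\tfrac{1}{4\lambda_a e}\in\big(0,\tfrac{1}{2\lambda_a e}\big)$ we have $S^*_k-\tfrac{k}{\lambda_a e}-c_2\log k=\big(\tfrac{1}{\lambda_a e}\cdot\tfrac{\inf_{|v|=k}V_v}{\log k}-c_2\big)\log k$, whose $\liminf$ is $+\infty$ a.s.\ because $\tfrac{1}{2\lambda_a e}-c_2>0$ while $\log k\to\infty$. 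Since $c_1>c_2$, this is the claim.

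The main obstacle is not the reduction — that is routine sign/constant bookkeeping — but the fact that the two quoted inputs are themselves deep: the ``$\tfrac32\log k$'' tightness/convergence, and especially the almost sure $\liminf$ with the extra ``$\tfrac12\log k$'' barrier correction, rest on many-to-one and ballot-type estimates for the BRW and are not elementary. If a self-contained argument were required one could still recover the easy halves — the upper tightness bound $S^*_k\le \tfrac{k}{\lambda_a e}+\tfrac{3}{2\lambda_a e}\log k+O_P(1)$ by a truncated second-moment computation, and $P\big(S^*_k\le \tfrac{k}{\lambda_a e}-y\big)\le e^{-\lambda_a e\,y}$ by the exponential Markov bound at $\theta^*=-e\lambda_a$ — but the matching lower tightness bound and the sharp a.s.\ $\liminf$ genuinely need the barrier refinement of \cite{hushi}, which is why I would invoke it directly.
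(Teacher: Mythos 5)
Your proposal is correct and follows essentially the same route as the paper, which itself gives no argument for this lemma beyond pointing to the branching-random-walk literature (``much more is known, see e.g.\ \cite{hushi}''): you recentre $S_v$ into the boundary-case normalization, verify $\psi(t)=t-1-\log t$ so that $\psi(1)=\psi'(1)=0$, and then quote the $\tfrac{3}{2}\log k$ tightness and the Hu--Shi almost-sure $\liminf=\tfrac12$ results, exactly the inputs the paper intends. The explicit constants $c_1=\tfrac{3}{2\lambda_a e}$ and $c_2=\tfrac{1}{4\lambda_a e}$ and the sign/constant bookkeeping in your reduction are correct.
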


Note that Lemmas \ref{prop-1}, \ref{prop-2}
and \eqref{eq-Ofer1} imply in particular
that  $D_0(t)\leq e\lambda_a t$ for all large $t$, a.s., and also that
\begin{equation}
  \label{eq-Ofer2}
\mbox{\rm  if
$e\lambda_a >\lambda_h$ then } D_0(t)> \lambda_ht\; \mbox{\rm for all large $t$, a.s.}.
\end{equation}

With all these preparations, we can give a simple proof for Lemma \ref{theo-tail}.
\begin{proof}
  We use a simple upper bound.
  Note that by \eqref{eq-Ofer1},
  \begin{equation}
    \label{eq-Ofer22}
    P(D_0(t)\geq m)=P(S^*_{ m}\leq t)
  \leq \sum_{v\in \I_{m}} P(S_v\leq t).
\end{equation}
  For $v=(i_1,\ldots,i_k)$, set $|v|=i_1+\cdots+i_k$. Then,
  we have that $S_v$ has the same law as 
  $\sum_{j=1}^{|v|} \E_j$. Thus, by Chebycheff's inequality,
  for $v\in \I_{m}$,
  \begin{equation}
    \label{eq-Ofer3}
    P(S_v\leq  t)\leq Ee^{\theta S_v} e^{-\theta t}=\left(\frac{\lambda_a}{\lambda_a-\theta} \right)^{|v|} e^{-\theta t}.
\end{equation}
  But 
\begin{align}
    \label{eq-Ofer4}
    \sum_{v\in \I_{m}}\left(\frac{\lambda_a}{\lambda_a-\theta}\right)^{|v|}
  &=\sum_{i_1\geq 1,\ldots,i_{m}\geq 1} \left(\frac{\lambda_a}{\lambda_a-\theta}\right)^{\sum_{j=1}^m i_j} \\
  &=\left(\sum_{i\geq 1}\left(\frac{\lambda_a}{\lambda_a-\theta}\right)^i\right)^m= \left(-\frac{\theta}{\lambda_a}\right)^{-{m}}.
\end{align}
Combining \eqref{eq-Ofer3}, \eqref{eq-Ofer4}, we have
\begin{equation*}
    P(D_0(t)\geq m) \leq \left(-\frac{\theta}{\lambda_a}\right)^{-m}e^{-\theta t},
\end{equation*}
and optimizing over $\theta$ we have when $\theta = -m/t$,
\begin{equation*}
    P(D_0(t)\geq m) \leq \left(\frac{e\lambda_a t}{m}\right)^m.
\end{equation*}
\end{proof}

\subsection{Proof of Lemma \ref{lem:infinite_many_F_chia}}
\label{app:proof_delay}

In this proof, let $r_h := \frac{\lambda_h}{1+\lambda_h \Delta}$.

The random processes of interest start from time $0$. To look at the system in stationarity, let us extend them to $-\infty < t < \infty$. More specifically, define $\tau^h_{-1}, \tau^h_{-2}, \ldots$ such that together with $\tau^h_0, \tau^h_1, \ldots$ we have a double-sided infinite Poisson process of rate $\lambda_h$. Also, for each $i < 0$, we define an independent copy of a random adversary tree $ \tt_i$ with the same distribution as $\tt_0$.  And we extend the definition of $\T_h(t)$ and $D_h(t)$ to $t < 0$: the last honest block mined at $\tau^h_{-1} < 0$ and all honest blocks mined within $(\tau^h_{-1}-\Delta,\tau^h_{-1})$ appear in $\T_h(t)$ at their respective mining times to form the level $-1$, and the process repeats for level less than $-1$; let $D_h(t)$ be the level of the last honest arrival before $t$ in $\T_h(t)$, i.e., $D_h(t) = \ell$ if $\tau^h_i \leq t < \tau^h_{i+1}$ and the $i$-th honest block appears at level $\ell$ of $\T_h(t)$.

These extensions allow us to extend the definition of $E_{ij}$ to all $i,j$, $-\infty < i< j < \infty$, and define $E_j$ and $\hat{E}_j$ to be:
$$ E_j = \bigcap_{i < j} E_{ij}$$ and 
$$ \hat{E}_j = E_j \cap U_j.$$

Note that $\hat{E}_j \subset \hat{F}_j$, so to prove that $\hat{F}_j$ has a probability bounded away from $0$ for all $j$, all we need is to prove that $\hat{E}_j$ has a non-zero probability.

Recall that we have defined the event $\hat{B}_{ik}$ in \S \ref{sec:defin} as:
\begin{equation}
\label{eqn:hat_B}
    \hat{B}_{ik} = \mbox{event that $D_i(\sum_{m = i}^{k-1} R_m + \Delta + \tau^h_i) \ge  D_h(\tau^h_{k-1}) - D_h(\tau^h_i+\Delta)$}.
\end{equation}

Following the idea in Lemma~\ref{lem:F_j}, we have 
$$E_j \cap U_j = \bigcap_{i < j} E_{ij} \cap U_j =\left(\bigcap_{i < j <k} \hat{B}_{ik}^c\right) \cap U_j.$$ 
Hence $E_j\cap U_j$ has a time-invariant dependence on $\{\zz_i\}$, which means that $p = P(\hat{E}_j)$ does not depend on $j$. Then we can just focus on $P(\hat{E}_0)$. This is the last step to prove.
\begin{eqnarray*}
P(\hat{E}_0) &=& P(E_0|U_0)P(U_0)  \\
 &=& P(E_0|U_0)P(R_0>\Delta)P(R_{-1}>\Delta) \\
 &=& e^{-2\lambda_h \Delta} P(E_0|U_0).
\end{eqnarray*}

It remains to show that $P(E_0|U_0)>0$. We have
\begin{eqnarray*}
    E_0 &=& \mbox{event that } D_i(\sum_{m = i}^{k-1} R_m + \Delta + \tau^h_i) < D_h(\tau^h_{k-1}) - D_h(\tau^h_i+\Delta)\\
    && \;\;\;\;\;\;\;\;\;\;\;\; \mbox{for all $k > 0$ and $i < 0$},
\end{eqnarray*}
then
\begin{equation}
    E_0^c = \bigcup_{k >0, i < 0}  \hat{B}_{ik}.
\end{equation}

Let us fix a particular $n > 2\lambda_h\Delta > 0$, and define:
\begin{eqnarray*}
    G_n &=& \mbox{event that} D_m(3n/\lambda_h + \tau^h_m) = 0 \\
    && \;\;\;\;\;\;\;\;\;\;\;\; \mbox{for $m = -n, -n+1, \ldots, -1, 0, +1, \ldots, n-1,n$}
\end{eqnarray*}
Then 
\begin{eqnarray}
P(E_0 | U_0) & \ge & P(E_0|U_0,G_n)P(G_n|U_0) \nonumber\\
& = & \left ( 1 - P(\cup_{k>0,i<0} \hat{B}_{ik}|U_0,G_n) \right) P(G_n|U_0)\nonumber\\
& \ge & \left ( 1 - \sum_{k>0,i<0} P(\hat{B}_{ik}|U_0,G_n) \right) P(G_n|U_0) \nonumber\\
& \ge &  ( 1 - a_n - b_n) P(G_n|U_0) \label{eqn:up_bound}
\end{eqnarray}
where
\begin{eqnarray}
a_n & := & \sum_{(i,k): -n \le i < 0 < k \le n} P(\hat{B}_{ik}|U_0,G_n)\\
b_n & := & \sum_{(i,k):  i < -n \text{~or~} k > n }P(\hat{B}_{ik}|U_0,G_n).
\end{eqnarray}

Using (\ref{eq:Ofertheo}), we can bound $P(\hat{B}_{ik}|U_0, G_n)$. Consider two cases:

\noindent
{\bf  Case 1:} $-n \le i <0 <  k \le n$:
\begin{eqnarray*}
P(\hat{B}_{ik}|U_0,G_n) &=& P(\hat{B}_{ik}|U_0, G_n,\sum_{m = i}^{k-1} R_m + \Delta \leq 3n/\lambda_h) \\
&& \;+\;P(\sum_{m = i}^{k-1} R_m + \Delta > 3n/\lambda_h |U_0,G_n) \\
& \leq & P(\sum_{m = i}^{k-1} R_m + \Delta > 3n/\lambda_h |U_0,G_n) \\
& \leq & P(\sum_{m = i}^{k-1} R_m > 5n/(2\lambda_h) |U_0) \\
& \leq & P(\sum_{m = i}^{k-1} R_m  > 5n/(2\lambda_h))/P(U_0)  \\
& \leq & A_1 e^{-\alpha_1 n} 
\end{eqnarray*}
for some positive  constants $A_1, \alpha_1$ independent of $n,k,i$. The last inequality follows from the fact that $R_i$'s are iid 
exponential random variables of  mean $1/\lambda_h$. Summing these terms, we have:
\begin{eqnarray*}
a_n & = & \sum_{(i,k): -n \le i < 0 < k \le n} P(B_{ik}|U_0, G_n) \\
& \leq & \sum_{(i,k): -n \le i < 0 < k \le n} A_1 e^{-\alpha_1 n} : = \bar{a}_n,
\end{eqnarray*}
which is bounded and moreover $\bar{a}_n\rightarrow 0$ as $n \rightarrow \infty$.

\noindent {\bf  Case 2:} $k>n \text{~or~} i<-n$:

For $0<\varepsilon<1$, let us define event $W^{\varepsilon}_{ik}$ to be:
\begin{equation}
\label{eq:honest_growth}
    W^{\varepsilon}_{ik} = \mbox{event that $D_h(\tau^h_{k-1}) - D_h(\tau^h_i+\Delta) \geq (1-\varepsilon)\frac{r_h}{\lambda_h}(k-i-1)$}.
\end{equation}
Then we have
\begin{equation*}
    P(\hat{B}_{ik}|U_0,G_n) \leq P(\hat{B}_{ik}|U_0, G_n,W^{\varepsilon}_{ik}) + P({W^{\varepsilon}_{ik}}^c |U_0,G_n).
\end{equation*}

We first bound $P({W^{\varepsilon}_{ik}}^c |U_0,G_n)$:
\begin{eqnarray}
P({W^{\varepsilon}_{ik}}^c |U_0,G_n) &\leq& P({W^{\varepsilon}_{ik}}^c | \tau^h_{k-1} - \tau^h_i - \Delta> \frac{k-i-1}{(1+\varepsilon)\lambda_h}) \nonumber \\
&& \;+\;P(\tau^h_{k-1} - \tau^h_i - \Delta \leq \frac{k-i-1}{(1+\varepsilon)\lambda_h}) \nonumber \\
& \leq & P({W^{\varepsilon}_{ik}}^c | \tau^h_{k-1} - \tau^h_i -\Delta > \frac{k-i-1}{(1+\varepsilon)\lambda_h}) \nonumber \\
&& \;+\;e^{-\Omega(\varepsilon^2 (k-i-1))} \nonumber \\
& \leq & e^{-\Omega(\varepsilon^4 (k-i-1))} + e^{-\Omega(\varepsilon^2 (k-i-1))}  \nonumber \\
& \leq & A_2 e^{-\alpha_2 (k-i-1)}
\label{eqn:prob_event}
\end{eqnarray}
for some positive  constants $A_2, \alpha_2$ independent of $n,k,i$, where the second inequality follows from the Erlang tail bound and the third inequality follows from Proposition \ref{prop:bound-1} .

Meanwhile, we have 
\begin{eqnarray*}
&~&P(\hat{B}_{ik}|U_0, G_n,W^{\varepsilon}_{ik})  \\
&\leq& P(D_i(\sum_{m = i}^{k-1}R_m + \Delta + \tau^h_i) \geq (1-\varepsilon)\frac{r_h}{\lambda_h}(k-i-1) | U_0,G_n,W^{\varepsilon}_{ik})  \\
& \leq & P(D_i(\sum_{m = i}^{k-1}R_m + \Delta + \tau^h_i) \geq (1-\varepsilon)\frac{r_h}{\lambda_h}(k-i-1) \\
&&\;\;\;\;\;\;| U_0,G_n,W^{\varepsilon}_{ik},\sum_{m = i}^{k-1}R_m + \Delta \leq (k-i-1)\frac{r_h+\lambda_a e}{2\lambda_a e}\frac{1}{\lambda_h}) \\
&& \;+\;P(\sum_{m = i}^{k-1}R_m + \Delta  > (k-i-1)\frac{r_h+\lambda_a e}{2\lambda_a e}\frac{1}{\lambda_h}|U_0,G_n,W^{\varepsilon}_{ik}) \\
&\leq&  P(\sum_{m = i}^{k-1}R_m + \Delta > (k-i-1)\frac{r_h+\lambda_a e}{2\lambda_a e}\frac{1}{\lambda_h}|U_0,G_n,W^{\varepsilon}_{ik})\\
&& \;+\; \left( \frac{r_h+\lambda_a e}{2(1-\varepsilon)r_h} \right)^{(1-\varepsilon)\frac{r_h}{\lambda_h}(k-i-1)}
\end{eqnarray*}
where the first term in the last inequality follows from (\ref{eq:Ofertheo}), and the second term can also be bounded:
\begin{eqnarray*}
&~&P(\sum_{m = i}^{k-1}R_m + \Delta > (k-i-1)\frac{r_h+\lambda_a e}{2\lambda_a e}\frac{1}{\lambda_h}|U_0,G_n,W^{\varepsilon}_{ik})\\
&=& P(\sum_{m = i}^{k-1}R_m + \Delta > (k-i-1)\frac{r_h+\lambda_a e}{2\lambda_a e}\frac{1}{\lambda_h}|U_0, W^{\varepsilon}_{ik}) \\
&\leq& P(\sum_{m = i}^{k-1}R_m + \Delta > (k-i-1)\frac{r_h+\lambda_a e}{2\lambda_a e}\frac{1}{\lambda_h})/P(U_0, W^{\varepsilon}_{ik}) \\
&\leq& A_3 e^{-\alpha_3(k-i-1)}
\end{eqnarray*}
for some positive  constants $A_3, \alpha_3$ independent of $n,k,i$. The last inequality follows from the fact that $(r_h+\lambda_a e)/(2\lambda_a e) > 1$ and the $R_i$'s have mean $1/\lambda_h$, while $P(U_0, W^{\varepsilon}_{ik})$ is a event with high probability as we showed in (\ref{eqn:prob_event}).

Then we have 
\begin{eqnarray}
\label{eqn:B_hat_bound}
     P(\hat{B}_{ik}|U_0,G_n) &\leq& A_2 e^{-\alpha_2 (k-i-1)} + \left( \frac{r_h+\lambda_a e}{2(1-\varepsilon)r_h} \right)^{(1-\varepsilon)\frac{r_h}{\lambda_h}(k-i-1)} \nonumber \\
     &+& A_3 e^{-\alpha_3(k-i-1)}.
\end{eqnarray}
Summing these terms, we have:
\begin{eqnarray*}
b_n & = & \sum_{(i,k):  i<-n \text{~or~} k > n}P(\hat{B}_{ik}|U_0,G_n) \\
& \le  &  \sum_{ (i,k): i<-n \text{~or~} k > n}  [A_2 e^{-\alpha_2 (k-i-1)} \\
&& \;\;+\; \left( \frac{r_h+\lambda_a e}{2(1-\varepsilon)r_h} \right)^{(1-\varepsilon)\frac{r_h}{\lambda_h}(k-i-1)} + A_3 e^{-\alpha_3(k-i-1)}] \\
&:=& \bar{b}_n
\end{eqnarray*}
which is bounded and moreover $\bar{b}_n \rightarrow 0$ as $n \rightarrow \infty$ when we set $\varepsilon$ to be small enough such that $\frac{r_h+\lambda_a e}{2(1-\varepsilon)r_h}<1$.

Substituting these bounds in (\ref{eqn:up_bound}) we finally get:
\begin{equation}
    P(E_0|U_0) > [1- (\bar{a}_n+ \bar{b}_n)]P(G_n|U_0)
\end{equation}
By setting $n$ sufficiently large such that $\bar{a}_n$ and $\bar{b}_n$ are sufficiently small, we conclude that $P(\hat{E}_0)> 0$.

\subsection{Proof of Lemma \ref{lem:time_strong_chia}}
\label{app:proof_time_strong}
We divide the proof in to two steps. In the first step, we prove for $\varepsilon = 1/2$.

Recall that we have defined event $\hat{B}_{ik}$ in \S \ref{sec:defin} as:
\begin{equation*}
    \hat{B}_{ik} = \mbox{event that $D_i(\sum_{m = i}^{k-1} R_m + \Delta + \tau^h_i) \ge  D_h(\tau^h_{k-1}) - D_h(\tau^h_i+\Delta)$}.
\end{equation*}

Note that from Lemma~\ref{lem:infinite_many_F_chia} and similar to inequality (\ref{eqn:B_hat_bound}), we have 
\begin{equation}
\label{eqn:PBik}
  P(\hat{B}_{ik}) \leq e^{-c_1 (k-i-1)}
\end{equation}
for some positive constants $c_1$.

And by Lemma \ref{lem:F_j}, we have
\begin{equation}
    \hat{F}_j^c = F_j^c \cup U_j^c = \left(\bigcup_{(i,k): i < j <k} \hat{B}_{ik}\right) \cup U_j^c.
\end{equation}
Divide $[s,s+t]$ into $\sqrt{t}$ sub-intervals of length $\sqrt{t}$, so that the $r$ th sub-interval is:
$$\J_r : = [s+  (r-1) \sqrt{t}, s+ r\sqrt{t}].$$

Now look at the first, fourth, seventh, etc sub-intervals, i.e. all the $r = 1 \mod 3$ sub-intervals. Introduce the event that in the $\ell$-th $1 \mod 3$th sub-interval,
an adversary tree that is rooted at a honest block arriving in that sub-interval or in the previous ($0 \mod 3$) sub-interval catches up with a honest block in that sub-interval or in the next ($2 \mod 3$) sub-interval. 
Formally,
$$C_{\ell}=\bigcap_{j: \tau^h_j \in \J_{3\ell+1}}
U_j^c \cup \left(\bigcup_{(i,k): \tau^h_j - \sqrt{t} < \tau^h_i < \tau^h_j, \tau^h_j < \tau^h_k +\Delta < \tau^h_j +\sqrt{t} } \hat{B}_{ik} \right).$$
Note that for distinct $\ell$, the events $C_\ell$'s  are independent. Also, we have
\begin{equation}
    \label{eqn:qq2}
    P(C_{\ell})\leq P(\mbox{no arrival in $\J_{3\ell+1}$}) + 1-p < 1
\end{equation}
for large enough $t$, where $p$ is a uniform lower bound such that $P(\hat{F}_j) \ge p$ for all $j$ provided by Lemma \ref{lem:infinite_many_F_chia}. 

Introduce the atypical events:
\begin{eqnarray}
    B &=& \bigcup_{(i,k): \tau^h_i \in [s,s+t] \mbox{~or~} \tau^h_k + \Delta \in [s,s+t], i < k, \tau^h_k + \Delta - \tau^h_i >  \sqrt{t}} \hat{B}_{ik}, \nonumber \\ \text{ and } \nonumber\\
    \tilde{B} &=& \bigcup_{(i,k):\tau^h_i<s,s+t<\tau^h_k+\Delta} \hat{B}_{ik}. \nonumber
\end{eqnarray}
The events $B$ and $\tilde{B}$ are the events  that an adversary tree catches up with an honest block far ahead. Then we have
\begin{eqnarray}
\label{eqn:qqq2}P(B_{s,s+t}) &\leq& P(\bigcap_{j: \tau^h_j \in [s,s+t]} U_j^c) + P(B)+P(\tilde B)+P(\bigcap_{\ell=0}^{\sqrt{t}/3} C_{\ell})\nonumber \\
&=& P(\bigcap_{j: \tau^h_j \in [s,s+t]} U_j^c) + P(B)+ P(\tilde B)+(P(C_{\ell}))^{\sqrt{t}/3}\nonumber \\
&\leq&  e^{-c_2 t}+ P(B)+ P(\tilde B) + (P(C_\ell))^{\frac{\sqrt{t}}{3}}
\end{eqnarray}
for some positive constant $c_2$ when $t$ is large, where the equality is due to independence. Next we will bound the atypical events $B$ and $\tilde{B}$.
Consider the following events
\begin{eqnarray*}
D_1&=&\{\#\{i: \tau^h_i\in (s-\sqrt{t}-\Delta,s+t+\sqrt{t}+\Delta)\} >2\lambda_h t\} \label{eqn:D1}\\
D_2&=&\{ \exists i,k: \tau^h_i  \in (s,s+t), (k-i)<\frac{\sqrt{t}}{2\lambda_h}, \tau^h_k-\tau^h_i+\Delta>\sqrt{t}\} \label{eqn:D2}\\
D_3&=&\{ \exists i,k: \tau^h_k+\Delta \in (s,s+t), (k-i)<\frac{\sqrt{t}}{2\lambda_h}, \tau^h_k-\tau^h_i+\Delta>\sqrt{t}\} \label{eqn:D3}
\end{eqnarray*}
In words, $D_1$ is the event of atypically many honest arrivals in $(s-\sqrt{t}-\Delta,s+t+\sqrt{t}+\Delta)$ while $D_2$ and $D_3$ are the events that there exists an interval of length $\sqrt{t}$ with at least one endpoint inside $(s,s+t)$ with atypically small number of arrivals. Since the number of honest arrivals in $(s,s+t)$ is Poisson with parameter $\lambda_h t$, we have from the memoryless property of the Poisson process that  $P(D_1)\leq e^{-c_0t}$ for some constant $c_0=c_0(\lambda_a,\lambda_h)>0$ when $t$ is large.  
On the other hand, using the memoryless property and a union bound, and decreasing $c_0$ if needed, we have that $P(D_2)\leq e^{-c_0 \sqrt{t}}$. Similarly, using time reversal, $P(D_3)\leq e^{-c_0\sqrt{t}}$.
Therefore, again using the memoryless property of the Poisson process,
\begin{eqnarray}
P(B)&\leq & P(D_1\cup D_2\cup D_3)+ P(B\cap D_1^c\cap D_2^c\cap D_3^c)\nonumber\\
&\leq & e^{-c_0 t} + 2e^{-c_0\sqrt{t}}+\sum_{i=1}^{2\lambda_h t} \sum_{k: k-i>\sqrt{t}/2\lambda_h} P(\hat{B}_{ik}) \\
&\leq & e^{-c_3\sqrt{t}},
\label{eqn:PB}
\end{eqnarray}
for large $t$, where $c_3>0$ are constants that may depend on $\lambda_a,\lambda_h$ and the last inequality is due to (\ref{eqn:PBik}). 
We next claim that there exists a constant $\alpha>0$ such that, for all $t$ large,
\begin{equation}
    \label{eqn:PtB}
    P(\tilde B)\leq e^{- \alpha t}.
\end{equation}
Indeed, we have that
\begin{eqnarray}
&&P(\tilde B) \nonumber\\
&=& \sum_{i<k} \int_0^s P(\tau^h_i\in d\theta) P(\hat{B}_{ik}, \tau^h_k-\tau^h_i+\Delta>s + t-\theta)\nonumber \\
&\leq & \sum_i \int_0^s P(\tau^h_i\in d\theta) \sum_{k:k>i} P(\hat{B}_{ik})^{1/2} P(\tau^h_k-\tau^h_i+\Delta>s + t-\theta)^{1/2}.\nonumber\\
\label{eqn:PtB1}
\end{eqnarray}
The tails of the Poisson distribution yield the existence of constants $c,c'>0$ so that
\begin{eqnarray}
    \label{eqn:Poisson_tail}
    &&P(\tau^h_k-\tau^h_i + \Delta>s+t-\theta)\\
    &\leq& \left\{
    \begin{array}{ll}
    1,& (k-i)>c(s+t-\theta-\Delta)\\
    e^{-c'(s+t-\theta-\Delta)},& (k-i)\leq c(s+t-\theta-\Delta).
    \end{array}\right.
\end{eqnarray} 
(\ref{eqn:PBik}) and \eqref{eqn:Poisson_tail} yield that there exists a constant $\alpha>0$ so that
\begin{equation}
    \label{eqn:PtB2}
    \sum_{k: k>i} P(\hat{B}_{i,k})^{1/2}P(\tau^h_k-\tau^h_i>s+t-\theta-\Delta)^{1/2} \leq e^{-2\alpha(s+t-\theta-\Delta)}.
\end{equation}
Substituting this bound in \eqref{eqn:PtB1} and using that $\sum_i P(\tau^h_i\in d\theta)=d\theta$ gives
\begin{eqnarray}
\label{eqn:PtB3}
P(\tilde B)&\leq &
\sum_{i} \int_0^s 
P(\tau^h_i\in d\theta) e^{-2\alpha (s+t-\theta-\Delta)}\nonumber\\
&\leq& \int_0^s e^{-2\alpha (s+t-\theta-\Delta)} d\theta
\leq \frac{1}{2\alpha} e^{-2\alpha(t-\Delta)} \leq e^{-\alpha t},
\end{eqnarray}
for $t$ large, proving \eqref{eqn:PtB}.

Combining (\ref{eqn:PB}), (\ref{eqn:PtB3}) and (\ref{eqn:qqq2}) concludes the proof of step 1.


In step two, we prove for any $\varepsilon > 0$ by recursively applying the bootstrapping procedure in step 1.
Assume the following statement is true: for any $\theta \geq m$ there exist constants $\bar a_\theta,\bar A_\theta$ so that for all $s,t\geq 0$,
\begin{equation}
\label{eqn:qst_basic}
\tilde{q}[s,s+t] \leq \bar A_\theta \exp(-\bar a_\theta t^{1/\theta}).
\end{equation}
By step 1, it holds for $m = 2$.

Divide $[s,s+t]$ into $t^{\frac{m-1}{2m-1}}$ sub-intervals of length $t^{\frac{m}{2m-1}}$, so that the $r$ th sub-interval is:
$$\J_r : = [s+  (r-1) t^{\frac{m}{2m-1}}, s+ rt^{\frac{m}{2m-1}}].$$

Now look at the first, fourth, seventh, etc sub-intervals, i.e. all the $r = 1 \mod 3$ sub-intervals. Introduce the event that in the $\ell$-th $1 \mod 3$th sub-interval,
an adversary tree that is rooted at a honest block arriving in that sub-interval or in the previous ($0 \mod 3$) sub-interval catches up with a honest block in that sub-interval or in the next ($2 \mod 3$) sub-interval. 
Formally,
$$C_{\ell}=\bigcap_{j: \tau^h_j \in \J_{3\ell+1}}
U_j^c \cup \left(\bigcup_{(i,k): \tau^h_j - t^{\frac{m}{2m-1}} < \tau^h_i < \tau^h_j, \tau^h_j < \tau^h_k +\Delta < \tau^h_j +t^{\frac{m}{2m-1}} } \hat{B}_{ik} \right).$$
Note that for distinct $\ell$, the events $C_\ell$'s  are independent. Also by (\ref{eqn:qst_basic}), we have
\begin{equation}
    \label{eqn:pcl}
    P(C_{\ell})\leq  A_m \exp(-\bar a_m t^{1/(2m-1)}).
\end{equation}

Introduce the atypical events:
\begin{eqnarray}
    B &=& \bigcup_{(i,k): \tau^h_i \in [s,s+t] \mbox{~or~} \tau^h_k + \Delta \in [s,s+t], i < k, \tau^h_k + \Delta - \tau^h_i >  t^{\frac{m}{2m-1}}} \hat{B}_{ik}, \nonumber \\ \text{ and } \nonumber\\
    \tilde{B} &=& \bigcup_{(i,k):\tau^h_i<s,s+t<\tau^h_k+\Delta} \hat{B}_{ik}.  \nonumber
\end{eqnarray}
The events $B$ and $\tilde{B}$ are the events  that an adversary tree catches up with an honest block far ahead. Following the calculations in step 1, we have
\begin{eqnarray}
    P(B) &\leq& e^{-c_1 t^{\frac{m}{2m-1}}}\\
    P(\tilde{B}) &\leq& e^{- \alpha t},
\end{eqnarray}
for large $t$, where $c_1$ and $\alpha$ are some positive constant.

Then we have
\begin{eqnarray}
\label{eqn:qst_induc}
\tilde{q}[s,s+t] &\leq& P(\bigcap_{j: \tau^h_j \in [s,s+t]} U_j^c) + P(B)+P(\tilde B)+P(\bigcap_{\ell=0}^{t^{\frac{m-1}{2m-1}}/3} C_{\ell})\nonumber \\
&=& P(\bigcap_{j: \tau^h_j \in [s,s+t]} U_j^c) + P(B)+ P(\tilde B)+(P(C_{\ell}))^{t^{\frac{m-1}{2m-1}}/3}\nonumber \\
&\leq&  e^{-c_2 t}+ e^{-c t^{\frac{m}{2m-1}}}+ e^{- \alpha t}  \nonumber \\
&& \;+\; ( A_m \exp(-\bar a_m t^{1/(2m-1)}))^{t^{\frac{m-1}{2m-1}}/3} \nonumber \\
&\leq& \bar A'_m \exp(-\bar a'_m t^{\frac{m}{2m-1}})
\end{eqnarray}
for large $t$, where $A'_m$ and $a'_m$ are some positive constant.

So we know the statement in (\ref{eqn:qst_basic}) holds for all $\theta \geq \frac{2m-1}{m}$. Start with $m_1=2$, we have a recursion equation $m_k = \frac{2m_{k-1}-1}{m_{k-1}}$ and we know (\ref{eqn:qst_basic}) holds for all $\theta \geq m_k$. It is not hard to see that $m_k = \frac{k+1}{k}$ and thus $\lim_{k\rightarrow\infty} m_k = 1$, which concludes the lemma.

\section{Proof of Persistence and Liveness}
\label{sec:persist_live}

In this section, we will prove Lemma \ref{lem:nak_secure}. Our goal is to generate a transaction ledger that satisfies persistence and liveness as defined in section \ref{sec:results}. Together, persistence and liveness guarantees {\em robust transaction ledger} \cite{backbone}; honest transactions will be adopted to the ledger and be immutable.

\begin{proof}
We first prove persistence by contradiction.
For a chain $\C_t$ with the last block mined at time $t$, let $\C_t^{\lceil\sigma}$ be the chain resulting from pruning a chain $\C_t$ up to $\sigma$, by removing the last blocks at the end of the chain that were mined after time $t-\sigma$. Note that $\C^{\lceil \sigma}$ is a prefix of $\C$, which we denote by $\C^{\lceil \sigma}\preceq \C$. 

Let $\C_t$ denote the longest chain adopted by an honest node with the last block mined at time $t$.  
Suppose there exists a longest chain $\C_t'$ adopted by some honest node with the last block mined at time $t'>t$ and $\C_t^{\lceil \sigma} \not\preceq \C_{t'}$.  
There are a number of honest blocks mined in the time interval $[t-\sigma,t]$, each of which can be in $\C_t$, $\C_{t'}$, or neither. 
We partition the set of honest blocks  generated in that interval with three sets:
$\{\H_t \triangleq\{H_j \in \C_t:\tau_j\in[t-\sigma,t]\}, \H_{t'}\triangleq\{H_j \in \C_{t'}:\tau_j\in[t-\sigma,t]\}$, and $\H_{\rm rest}\triangleq\{H_j \notin \C_t \cup \C_{t'} :\tau_j\in[t-\sigma,t]\}$, depending on which chain they belong to. 

Then we claim that $\C_t^{\lceil \sigma} \not\preceq \C_{t'}$ implies that $\hat{F}_j^c$ holds for all $j$ such that $\tau_j\in[t-\sigma,t]$. 
This in turn implies that $P(\C_t^{\lceil \sigma} \not\preceq \C_{t'}) \leq P(\cap_{j:\tau_j\in[t-\sigma,t]} \hat{F}_j^c)$. 
However, we know that the probability of this happening is as low as $q_\sigma$. 
This follows from the following facts. 
$(i)$ the honest blocks in $\C_t$ does not make it to the longest chain at time $t'$: $H_j \notin \C_{t'}$ for all  $H_j\in\H_t$, which follows from $\C_t^{\lceil \sigma} \not\preceq \C_{t'}$.
$(ii)$ the honest blocks in $\C_{t'}$ does not make it to the longest chain $\C_t$ at time $t$:  
$H_j \notin \C_{t}$ for all  $H_j\in\H_{t'}$, which also follows from $\C_t^{\lceil \sigma} \not\preceq \C_{t'}$.
$(iii)$ the rest of the honest blocks did not make it to either of the above: $H_j \notin \C_t \cup \C_{t'} $  for all $H_j\in \H_{\rm rest}$.

We next prove liveness.
Assume a transaction {\sf tx} is received by all honest nodes at time $t$, then we know that with probability at least $1-q_\sigma$, there exists one honest block $b_j$ mined at time $\tau^h_j$ with $\tau^h_j \in [t,t+\sigma]$ and event $\hat{F}_j$ occurs, i.e., the block $b_j$ and its ancestor blocks will be contained in any future longest chain. Therefore, {\sf tx} must be contained in block $b_j$ or one ancestor block of $b_j$ since {\sf tx} is seen by all honest nodes at time $t < \tau_j$. In either way, {\sf tx} is stabilized forever. Thus, liveness holds.
\end{proof}

\section{Proofs for Section \ref{sec:worst_attack}}
\label{sec:lem:worst-attack}



\subsection{Proof of Theorem \ref{lem:worst-attack}}
\label{sec:lem:worst-attack-zero}

Before presenting the full proof for theorem  \ref{lem:worst-attack}, which covers the $\Delta=0$ case, we first describe $\pi_{SZ}$, Sompolinsky and Zohar's strategy of private attack with pre-mining, focusing on some block $b$. We know that if $b=h_j$ is an honest block with index $j$, it will be mined at the tip of the public longest chain $\C$ when $\Delta=0$. In this case, $\pi_{SZ}$ consists of two phases:

\begin{itemize}
    \item \textbf{Pre-mining phase:} Starting from the genesis block, the attacker starts mining blocks in private to build a private chain. When the first honest block $h_1$
    is mined on the genesis block, the attacker does one of two things: i) If the private chain is longer than the public chain at that moment, then the adversary continues mining on the private chain; ii) if the private chain is shorter than the public chain, the attacker abandons the private chain it has been mining on and starts a new private chain on $h_1$ instead. The attacker repeats
    this process with all honest blocks $h_2$, $h_3$, . . . $h_{j-1}$.
    
    \item \textbf{Private attack phase:} After block $h_{j-1}$ is mined, the attacker starts Nakamoto’s private attack from the current private chain it is working on, whether it is off $h_{j-1}$ or the one it has been working on before $h_{j-1}$ depending on which is longer.
\end{itemize}

Note that it is possible for the adversary to attack one of its own blocks. In this case, $b$ is placed at the tip of $\C$, and, kept private until an honest block $h_{j-1}$ is mined at the same depth as $b$. Then, the adversary denotes the chain including $b$ as the longest chain for all honest miners. Hence, we can treat $b$ as if it is an honest block with index $j$, and, the strategy proceeds as described above for all other adversary blocks.  

Having presented an algorithmic description for $\pi_{SZ}$ above, we now identify certain features of $\pi_{SZ}$, which will be used in the proof of theorem \ref{lem:worst-attack}:  
\begin{enumerate}
    \item All of the adversary blocks (except $b$ when it is an adversary block) mined after the genesis block are placed at distinct depths in increasing order of their arrival times.
    \item If an adversary block (except $b$ when it is an adversary block) arrives after an honest block $h_i$ for $i<j$, it is placed at a depth larger than the depth of $h_i$.
    \item None of the paths from adversary blocks to the genesis includes block $b$.
    \item No adversary block (except $b$ when it is an adversary block) is revealed until the attack is successful.
\end{enumerate}

We now proceed with the proof:

\begin{proof}

We first prove part (i) of the theorem, namely the fact that $\pi_{SZ}$ is the worst-attack for preventing persistence with parameter $k$. Consider a sequence of mining times for the honest and adversary blocks such that the persistence of $b$ with parameter $k$ is violated by an adversary following some arbitrary attack strategy $\pi$. Let $\tau_b$ be the mining time of block $b$. Define $t > \tau_b$ as the first time block $b$ disappears from the public longest chain $\C$ after it becomes $k$ deep within $\C$ at some previous time. We will prove this part of the theorem by showing that $\pi_{SZ}$ also succeeds in removing $b$ from $\C$ after it becomes $k$ deep, for the same sequence of block mining times.

Let $\T$ be the blocktree built under $\pi$, and, observe that the public longest chain, $\C(t)$, contains block $b$ at time $t$. By our assumption, we know that at time $t$, there exists a parallel chain $\C'$ with depth greater than or equal to $L(t)$, depth of $\C$ at time $t$, and, $\C'$ does not include $b$. Hence, it also does not include any of the blocks that came to $\C$ after $b$. See Figure \ref{fig:lemma5_1} for a visual example of the chains $\C$ and $\C'$. Let $h_i$ be the last honest block in $\C'$ that is also on $\C$. Such a block $h_i$ must exist; otherwise, these chains could not have grown from the same genesis block. Then, $h_i$ has depth smaller than the depth of $b$. In this context, let $d^h_i$ and $d_b$, $d^h_i<d_b$, denote the depths of $h_i$ and $b$ respectively. Define $H$ as the number of honest blocks mined in the time interval $(\tau^h_i,t]$, and, observe that all of these honest blocks lay in the depth interval $(d^h_i,L(t)]$ of the blocktree $\T(t)$ as there cannot be honest blocks at depths larger than $L(t)$ at time $t$.

Next, consider the portion of $\T(t)$ deeper than $d^h_i$. Let $d:=L(t)-d^h_i$, and, define $A$ as the number of adversary blocks mined in the time interval $(\tau^h_i,t]$. Note that since $\C$ and $\C'$ both include $h_i$, the adversary blocks that are within these chains and have depths greater than $d^h_i$, should have been mined after time $\tau^h_i$. Now, as there can be at most one honest block at every depth due to $\Delta=0$; $H \leq d$. Moreover, at each depth after $d^h_i$, either $\C$ and $\C'$ have two distinct blocks, or, they share the same block, which by definition is an adversary block. Hence, the number of the adversary blocks that are within these chains and have depths greater than $d^h_i$ is at least $d$, which implies $A \geq d$. Hence,
$$A \geq d \geq H.$$
Finally, we know from the definition of persistence that block $b$ has been at least $k$ deep in $\C$ before time $t$, and, there are $d_b-d^h_i$ blocks of distinct depths from $h_i$ to $b$. Consequently, $d \geq (k-1)+(d_b-d^h_i)$. Figure \ref{fig:lemma5_1} displays the interplay between these parameters in the context of an example attack. 

\begin{figure}
    \centering
    \includegraphics[width=\linewidth]{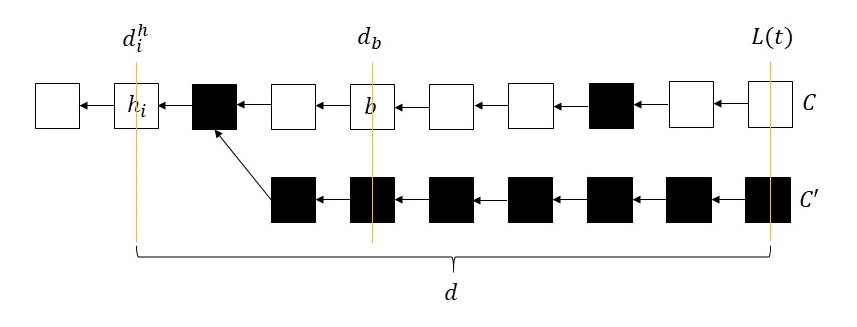}
    \caption{Chains $\C$ and $\C'$ for an arbitrary attack $\pi$. In this example, $k=6$, $H=6$, and, $A=9$. $d_b-d^h_i=3$, and, the attack succeeds at time $t$, at which $b$ is exactly $6$ blocks deep in the chain $\C$. Hence, in this example, $d$ is exactly equal to $(k-1)+(d_b-d^h_i)=5+3=8$}
    \label{fig:lemma5_1}
\end{figure}

\begin{figure}
    \centering
    \includegraphics[width=\linewidth]{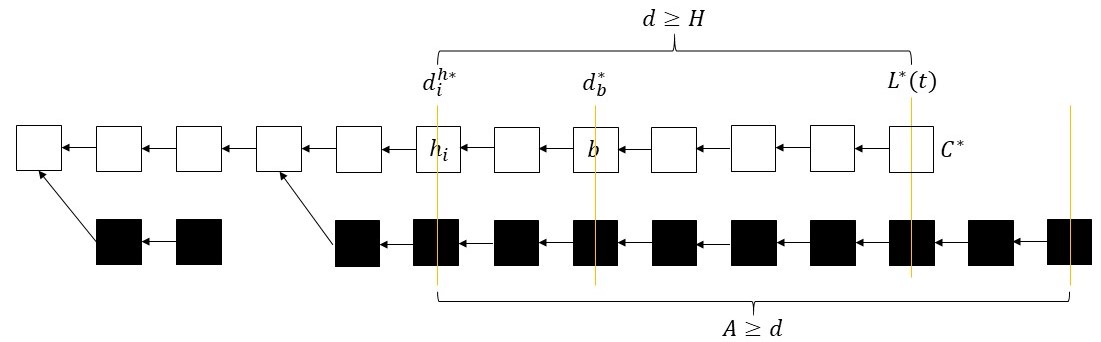}
    \caption{$\C^*$ and the private adversary chain under $\pi_{SZ}$ imposed on the same mining times as in Figure \ref{fig:lemma5_1}. Again, $k=6$, $H=6$, and, $A=9$. Adversary has a private chain at depth $A+d^{h*}_i = 9+d^{h*}_i > (k-1)+d^{*}_b = 5+d^{*}_b$ at time $t$. Note that at time $t$, $b$ is not $k=6$ blocks deep yet. However, the attack will succeed after $b$ is $6$ blocks deep in the chain $\C^*$ since the adversary already has a chain that is at depth greater than $d^{*}_b+(k-1)=d^{*}_b+5$.}
    \label{fig:lemma5_2}
\end{figure}

We now consider an adversary that follows strategy $\pi_{SZ}$. Again, let $\T^*$ be the blocktree built under $\pi_{SZ}$, and, define $d^{h*}_i$ and $d^{*}_b$ as the depths of the blocks $h_i$ and $b$ within $\T^*$. Let $\C^*$ denote the public longest chain under strategy $\pi_{SZ}$. See Figure \ref{fig:lemma5_2} for a visual example of the $\pi_{SZ}$ attack.
We next make the following observations using the properties of $\pi_{SZ}$: Via property (2) of $\pi_{SZ}$, every adversary block mined after time $\tau^h_i$ is placed at a depth higher than $d^{h*}_i$. Via property (1), every one of these adversary blocks mined after time $\tau^h_i$ is placed at a distinct depth. Hence, at time $t$, the deepest adversary block has depth at least $d^{h*}_i+A$. Via property (3), the path from this deepest adversary block to the genesis does not include $b$. Consequently, at time $t$, the adversary following $\pi_{SZ}$, has a private chain that does not include $b$ and is at depth at least $d^{h*}_i+A$. 

Finally, we observe via property (4) of $\pi_{SZ}$ that $\C^*$ contains no adversary blocks (except $b$ when it is an adversary block). Then, at time $t>\tau_b$, $\C^*$ contains $b$, and, it is exactly at depth $L^*(t)=d^{h*}_i+H$ as $\Delta=0$. Finally, to prove that the adversary succeeds under $\pi_{SZ}$, we consider the following two cases:
\begin{itemize}
    \item $b$ is at least $k$-deep in $\C^*$ at time $t$, i.e $L^*(t) \geq (k-1)+d^*_b$. However, since the adversary has a private chain that does not include $b$ and has depth at least $d^{h*}_i+A \geq d^{h*}_i+H = L^*(t)$, the attack is successful.
    \item $b$ is not $k$-deep yet, i.e $L^*(t) < (k-1)+d^{*}_b$. (Figure \ref{fig:lemma5_2} corresponds to this case.) However, the adversary has a private chain that does not include $b$ and is at depth at least, $$d^{h*}_i+A \geq d^{h*}_i+d \geq (k-1)+ d_b + d^{h*}_i - d^h_i.$$
    Moreover, as $\C^*$ does not contain any adversary blocks under $\pi_{SZ}$ (except $b$), $d_b-d^{h}_i \geq d^{*}_b-d^{h*}_i$. Hence, $d^{h*}_i+A \geq (k-1) + d^{*}_b$, implying that the adversary would eventually succeed once $b$ becomes $k$-deep in $\C^*$.
\end{itemize}

This concludes the proof of part (i) of the theorem.
\newline

Second, we prove part (ii) of the theorem, namely the fact that $\pi_{SZ}$ is the worst-attack for preventing liveness with parameter $k$. Consider a sequence of mining times for the honest and adversary blocks such that the liveness of the $k$ consecutive honest blocks starting with $b$ is violated by an adversary following some arbitrary attack strategy $\pi$. Since $b$ is an honest block by assumption, let $b=h_j$ without loss of generality. For each of the $k$ consecutive honest blocks $h_m$, $m=j,..,j+k-1$; define $t_m \geq \tau^h_m$ as the first time block $h_m$ disappeared from the public longest chain $\C$. Let $t^*$ denote the maximum of $t_m$, $m=j,..,j+k-1$. We will prove this part of the theorem by showing that $\pi_{SZ}$ also succeeds in removing each $h_m$, $m=j,..,j+k-1$ from $\C$ by time $t^*$, for the same sequence of block mining times.

Let $\T$ be the blocktree built under $\pi$, and, observe that at time $t_m$, (i) $\C(t_m)$ contains the block $h_m$, (ii) there exists a parallel chain $\C_m$ with depth greater than or equal to $L(t_m)$, depth of $\C$ at time $t_m$, and, $\C_m$ does not include $h_m$. See Figure \ref{fig:lemma5_3} for a visual example of the attack $\pi$. Let $e(m)$ be the index of the last honest block in $\C_m$ that is also on $\C$. Such a block must exist for each $m$; otherwise, the chains $\C_m$ could not have grown from the same genesis block. Let $d^*$ denote the minimum depth of the honest blocks $h_{e(m)}$:
$$d^* = \min_{m=j,..,j+k-1} (d^h_{e(m)}) < d^h_j$$
Let $e^*$ denote the index of the honest block at depth $d^*$. Define $H$ as the number of honest blocks mined in the time interval $(\tau^h_{e^*},t^*]$, and, observe that all of these honest blocks lay in the depth interval $(d^*,L(t^*)]$ as there cannot be honest blocks at depths larger than $L(t^*)$ at time $t^*$.

Next, consider the portion of $\T(t^*)$ deeper than $d^*$. Let $d:=L(t^*)-d^*$, and, define $A$ as the number of adversary blocks mined in the time interval $(\tau^h_{e^*},t^*]$. Note that all of the adversary blocks within the chains $\C_m(t^*)$, $m=j,..,j+k-1$, and $\C(t^*)$ at time $t^*$ that lay in the depth interval $(d^*,L(t^*)]$, should have been mined after time $\tau^h_{e^*}$. Hence, these adversary blocks constitute a subset of the adversary blocks mined in the time interval $(\tau^h_{e^*},t^*]$. As there can be at most one honest block at every depth as $\Delta=0$, $d \geq H$. Moreover, at each depth after $d^*=d^h_{e^*}$, for any given $m$, either $\C$ and $\C_m$ have two distinct blocks, or, they share the same block, which by definition is an adversary block. Hence, the number of the adversary blocks within the chains $\C_m(t^*)$, $m=j,..,j+k-1$, and $\C(t^*)$ at time $t^*$  that lay in the depth interval $(d^*,L(t^*)]$, is at least $d$, implying that $A \geq d$. Hence,
$$A \geq d \geq H.$$
Figure \ref{fig:lemma5_3} displays the interplay between these parameters in the context of an example attack.

\begin{figure}
    \centering
    \includegraphics[width=\linewidth]{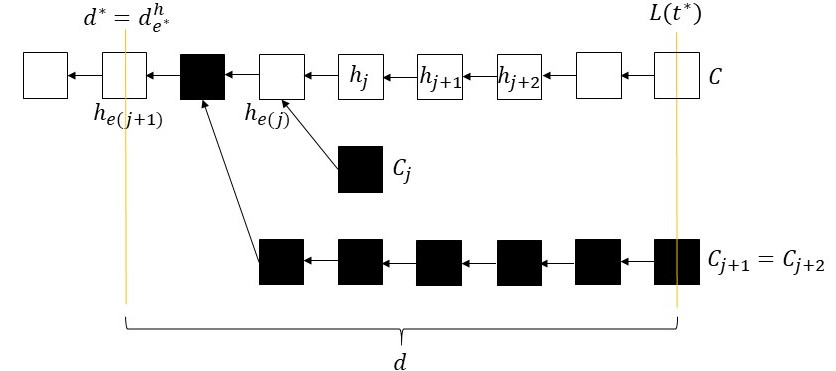}
    \caption{Chains $\C$, $\C_j$, $\C_{j+1}$, and, $\C_{j+2}$ for a sample attack $\pi$. In this example, $k=3$, $d=7$, $H=6$, and $A=8$. Chains $\C_{j+1}$ and $\C_{j+2}$ are the same, thus, $h_{e(j+1)}$ is the same honest block as $h_{e(j+2)}$, and $t_{j+1}=t_{j+2}$. Note that $t^*=t_{j+1}=t_{j+2}$ since $\C_{j+1}=\C_{j+2}$ is the last chain to catch up with $\C$. Similarly, $e^*=e(j+1)=e(j+2)$, and, $d^*=d^h_{e(j+1)}=d^h_{e(j+2)}$, as $h_{e(j+1)}=h_{e(j+2)}$ has depth smaller than $h_{e(j)}$.}
    \label{fig:lemma5_3}
\end{figure}

We now consider an adversary that follows strategy $\pi_{SZ}$. Again, let $\T^*$ be the blocktree built under $\pi_{SZ}$, and, define $d^{h*}_{e^*}$ as the depth of the block $h_{e^*}$ within $\T^*$. See Figure \ref{fig:lemma5_4} for a visual example of the $\pi_{SZ}$ attack. We next make the following observations using the properties of $\pi_{SZ}$: Via property (4) of $\pi_{SZ}$, $\C^*$ contains no adversary blocks at time $t^*$. Hence, at time $t^* \geq \tau^h_{j+k-1}$, $\C^*$ contains $h_m$, $m=j,..,j+k-1$ in a consecutive order, and, its depth is $L^*(t^*)=d^{h*}_{e^*}+H$. Via property (2), every adversary block mined after time $\tau^h_{e^*}$ is placed at a depth higher than $d^{h*}_{e^*}$. Via property (1), every adversary block mined after time $\tau^h_{e^*}$ is placed at a distinct depth. Hence, at time $t^*$, the deepest adversary block has depth at least $d^{h*}_{e^*}+A$. Via property (3), the path from this deepest adversary block to the genesis does not include $h_j$. Hence, it does not include any of the honest blocks $h_m$, $m=j,..,j+k-1$, that builds on $h_j$. Consequently, by time $t^*$, the adversary following $\pi_{SZ}$, has a private chain that does not include any of the blocks $h_m$, $m=j,..,j+k-1$, and, is at depth at least $d^{h*}_{e^*}+A$.

\begin{figure}
    \centering
    \includegraphics[width=\linewidth]{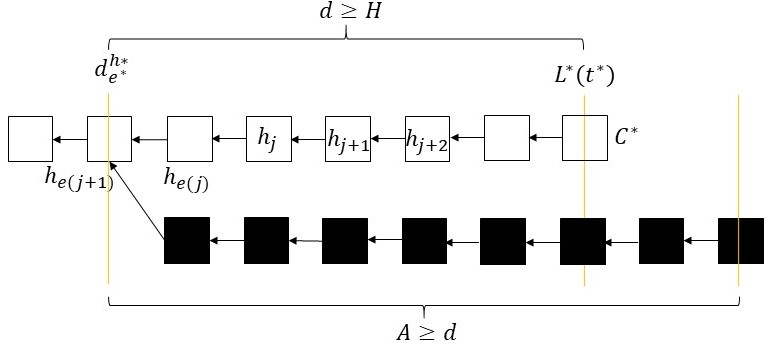}
    \caption{$\C^*$ and the private adversary chain under $\pi_{SZ}$ imposed on the same mining times as in Figure \ref{fig:lemma5_3}. Again, $k=3$, $d=7$, $H=6$, and, $A=8$. Adversary has a private chain at depth $A+d^{h*}_{e^*} = 8+d^{h*}_{e^*} > L^*(t^*) = H+d^{h*}_{e^*} = 6+d^{h*}_{e^*}$ at time $t^*$, and, the public longest chain $\C^*$ contains all of the attacked blocks $h_j$, $h_{j+1}$ and $h_{j+2}$ at time $t^*$.}
    \label{fig:lemma5_4}
\end{figure}
 
Finally, we have seen above that at time $t^*$, the public longest chain $C^*(t^*)$ contains all of the blocks $h_m$, $m=j,..,j+k-1$ and has depth $L^*(t^*)=d^{h*}_{e^*}+H$, whereas there exists a private adversary chain that does not include the blocks $h_m$, $m=j,..,j+k-1$, and, is at depth at least 
$$d^{h*}_{e^*}+A \geq d^{h*}_{e^*}+H = L^*(t^*).$$
Consequently, by broadcasting this private chain at time $t^*$, the adversary can prevent liveness for the $k$ consecutive honest blocks $h_j$ to $h_{j+k-1}$. This concludes the proof of part (ii) of the theorem.

\end{proof}

\subsection{Discussion on $\Delta > 0$}
\label{sec:lem:worst-attack-positive}

Theorem \ref{lem:worst-attack} shows that when $\Delta=0$, there exists an attack strategy, $\pi_{SZ}$, such that if any attack $\pi$ succeeds in preventing persistence for a block $b$ in the PoW model, this strategy also succeeds. Does such an attack strategy exist when $\Delta>0$ in the PoW model? Is private attack still the worst attack for every sequence of mining times when $\Delta>0$? Unfortunately, the answer is no: When $\Delta>0$, there does not exist a sample path worst attack. This is shown by the following lemma:

\begin{lemma}
\label{no-worst-attack}
Consider attacks for preventing the persistence, with some parameter $k$, of some block $h_j$, and, define the worst attack as the strategy $\pi^*$ satsifying the following condition: If some strategy $\pi \neq \pi^*$ succeeds under a sequence of mining times, then $\pi^*$ also succeeds under the same sequence except on a measure-zero set of sequences.
Then, when $\Delta>0$, and, $\lambda_a < \lambda_h/(1+\Delta \lambda_h)$, there does not exist a worst attack.
\end{lemma}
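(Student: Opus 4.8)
The plan is to argue by contradiction. Assume a worst attack $\pi^*$ exists for preventing persistence, with parameter $k$, of a fixed honest block $b=h_j$, and then exhibit two attack strategies $\pi_1,\pi_2$ together with two events $S_1,S_2$ of positive probability (in the law of the honest and adversary mining times) such that $\pi_1$ prevents persistence of $b$ on $S_1$, $\pi_2$ prevents it on $S_2$, while the \emph{only} way a strategy can succeed on $S_1$ is to take, during an initial window $[0,t_1]$ that $S_1$ and $S_2$ share, an action incompatible with what is needed to succeed on $S_2$. Since $\pi^*$ must dominate both $\pi_1$ and $\pi_2$ up to a null set, it would be forced to take two incompatible actions on $[0,t_1]$, which is absurd.

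First I would build the common setup, exploiting precisely the power that appears when $\Delta>0$: splitting the honest nodes into two forks and keeping both alive. On $[0,t_1]$, with positive probability, the first two honest arrivals $b$ and $b'$ occur within $\Delta$ of one another and are delivered in opposite orders to the two halves of the honest set, so that group $A$ mines on the fork headed by $b$ and group $B$ on the fork headed by $b'$, while the adversary holds withheld extensions of both forks (reserves $r_A,r_B$) and mines one further block at time $t_1$. Maintaining the balance forces the adversary to publish one block from $r_B$ each time an honest block lands on fork $A$ (and symmetrically), so the total reserve $r_A+r_B$ is consumed at a rate exceeding $\lambda_h/(1+\lambda_h\Delta)$ and replenished only at rate $\lambda_a$, hence has negative drift: the balance attack collapses in finite time, but it violates persistence of $b$ as soon as $b$ becomes $k$-deep within fork $A$ while the attack is still alive. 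The one genuinely free --- and, once the block has been attached to a parent in the mother tree $\T(t_1)$, irrevocable --- decision is whether the $t_1$-block is added to $r_A$ or to $r_B$. I would then design the continuations: on $S_1$ the honest arrivals after $t_1$ burst onto fork $A$ in a pattern that fork $A$ survives to depth $k$ if and only if $r_B$ carries the extra block, and on $S_2$ the burst is onto fork $B$ with $r_A,r_B$ interchanged. The hypothesis $\lambda_a<\lambda_h/(1+\lambda_h\Delta)$ is used to place these patterns exactly on the knife's edge (the adversary has just barely enough blocks, so a single misplaced one is fatal); it is in fact necessary, since for larger $\lambda_a$ the plain private attack aimed at $b$ succeeds on all but a null set of mining sequences and is thus itself a worst attack.

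Given $S_1,S_2,\pi_1,\pi_2$, the contradiction closes quickly. Let $E\supseteq S_1\cup S_2$ be the event describing the common setup on $[0,t_1]$. Any strategy that succeeds on a positive-probability subset of $S_1$ must, on $E$, perform the fork setup and add the $t_1$-block to $r_B$, and success on $S_2$ forces adding it to $r_A$; since on $E$ the action at $t_1$ is a function of the shared history, a deterministic $\pi^*$ cannot do both, hence fails on a positive-probability subset of $S_1$ or of $S_2$ while $\pi_1$ (resp.\ $\pi_2$) succeeds there. A randomized $\pi^*$ is handled identically: conditionally on $E$, its probability of adding the $t_1$-block to $r_B$ is either $<1$, so it fails on a positive-probability slice of $S_1$, or $=1$, so it fails almost surely on $S_2$. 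Either way $\pi^*$ does not dominate $\pi_1$ or $\pi_2$, contradicting its being a worst attack. The liveness version --- no worst attack for preventing persistence of $k$ consecutive honest blocks --- then follows from the same construction by taking $b$ to be the first block of the run, exactly as in the proof of Theorem~\ref{lem:worst-attack}.

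The hard part will be the explicit construction of $S_1$ and $S_2$: I would need a balanced fork configuration that is reached with positive probability, and post-$t_1$ burst patterns chosen precisely enough that adding the $t_1$-block to $r_B$ is simultaneously \emph{sufficient} and \emph{necessary} for fork $A$ to reach depth $k$ before collapse on $S_1$ (and symmetrically on $S_2$), together with the quantitative check --- comparing the honest growth rate $\lambda_h/(1+\lambda_h\Delta)$ with the smaller adversary rate $\lambda_a$ --- that every event invoked has positive probability while the reserve really is exhausted under the wrong routing. The remaining steps are bookkeeping about which blocks lie on which chain, of the same flavour as in \S \ref{sec:lem:worst-attack-zero}.
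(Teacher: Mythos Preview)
Your high-level strategy is the same as the paper's: argue by contradiction, isolate a moment at which the adversary must make an irrevocable placement decision for a single block, and exhibit two positive-probability futures that demand incompatible choices. Where you differ is in the construction. You propose an elaborate balance attack with reserves $r_A,r_B$ and a routing decision at time $t_1$; the paper uses a much leaner scenario: it first conditions on a set $S_1$ of histories before $h_j$ in which $h_{j-1}$ satisfies the Nakamoto-block precondition (so no adversary tree rooted earlier than $h_{j-1}$ can ever be useful), then looks at a \emph{single} adversary block $b$ arriving in $(\tau^h_j,\tau^h_j+\Delta)$ and asks whether $b$ should be attached to $h_j$ (Action~1) or to $h_{j-1}$ (Action~2). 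The two futures are simply $\tau^h_{j+1}<\tau^h_j+\Delta$ versus $\tau^h_{j+1}>\tau^h_j+\Delta$: under the first, Action~1 strictly dominates; under the second, Action~2 strictly dominates; and in each case there is a positive-probability continuation on which only the dominant action succeeds.

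The advantage of the paper's route is that the precondition $S_1$ collapses the adversary's relevant options to exactly two, so the dominance and necessity claims are essentially immediate. Your approach, by contrast, rests on the assertion that ``any strategy that succeeds on a positive-probability subset of $S_1$ must perform the fork setup and add the $t_1$-block to $r_B$'' --- a genuine uniqueness statement that is much harder to establish in a balance-attack setup with multiple reserve blocks and many degrees of freedom. You flag this as the hard part and leave it undone; the paper avoids the difficulty entirely by making the decision space two-valued. If you want to carry your construction through, you would need an analogous precondition that rules out every attack except the balance attack with the specific routing, and that is where most of the work would lie.
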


\begin{proof}
Proof is by contradiction. First, let $S_1$ be the set of mining time sequences for the blocks preceding $h_j$ such that $h_{j-1}$ is a loner, no adversary block is mined during the time interval $[\tau^h_{j-1},\tau^h_j]$, and, for any $i$, $0 \leq i<j-1$, $D_h(\tau^h_{j-1}-\Delta)-D_h(\tau^h_i+\Delta)$ is greater than the number of adversary arrivals during the time period $[\tau^h_i,\tau^h_{j-1}]$. ($D_h$ was defined previously in section 3.2.) Note that since this is a necessary condition for $h_{j-1}$ to be a Nakamoto block, and, $h_{j-1}$ is a Nakamoto block with positive probability when $\lambda_a < \lambda_h/(1+\Delta \lambda_h)$, there exists a constant $c>0$ such that $P(S_1)\geq c$ for all $j$. 

Second, consider the following set of mining times for the next three blocks that arrive after $h_j$:
\begin{itemize}
    \item Let $b$, $h_{j+1}$, and $b'$ denote these blocks in order of their mining times.
    \item $b$ is an adversary block and $h_{j+1}$ is an honest block.
    \item Mining time of $b$ satisfies the following equation: $$\tau^h_j < \tau_b < \tau^h_j+\Delta.$$ \item $b'$ is mined after time $\tau^h_{j+1}+\Delta$.
\end{itemize}
Now, depending on the mining time of $\tau^h_{j+1}$, we have two different sets of mining time sequences, $S_2$ and $S'_2$. The condition on $h_{j+1}$ which differentiates these two sets is given below:
\begin{itemize}
    \item $S_2$: $\tau_b < \tau^h_{j+1} < \tau^h_j + \Delta$
    \item $S'_2$: $\tau^h_j+\Delta < \tau^h_{j+1}$ 
\end{itemize}
We next consider the sets $S_1 \text{x} S_2$ and $S_1 \text{x} S'_2$. For the sake of simplicity, let's call any arbitrary sequence from $S_1 \text{x} S_2$, \textbf{sequence 1}, and, any arbitrary sequence from $S_1 \text{x} S'_2$, \textbf{sequence 2}. 

Now, for the sake of contradiction, assume that there exists a worst attack $\pi^*$ that aims to prevent the persistence of block $h_j$. Consider an arbitrary sequence of mining times from the set $S_1 \text{x} S_2 \cup S_1 \text{x} S'_2$. Via the definition of the set $S_1$, no matter what $\pi^*$ does, the deepest adversary block at time $\tau^h_{j-1}$ has depth smaller than $d^h_{j-1}$. Then, to prevent the persistence of $h_j$, $\pi^*$ builds two parallel chains starting at block $h_{j-1}$, only one of which contains $h_j$. Let $\C$ be the chain containing $h_j$ and let $\C'$ be the other parallel chain. It also delays the broadcast of block $h_j$ by $\Delta$ so that if $h_{j+1}$ is mined within $\Delta$ time of $h_j$, it is placed within the chain $\C'$, at the same depth as $h_j$. However, when block $b$ is mined, there are two distinct actions that $\pi^*$ might follow: 
\begin{enumerate}
    \item \textbf{Action 1:} Choose $h_j$ as $b$'s parent. Keep $b$ private until at least time $\tau^h_{j+1}+\Delta$.
    \item \textbf{Action 2:} Choose $h_{j-1}$ as $b$'s parent.
\end{enumerate}
(Note that a worst attack will not mine $b$ on a block preceding $h_{j-1}$.) See Figure \ref{fig:lemma5_5} for the effects of these actions on the blocktree under the sequences 1 and 2.

\begin{figure}
    \centering
    \includegraphics[width=\linewidth]{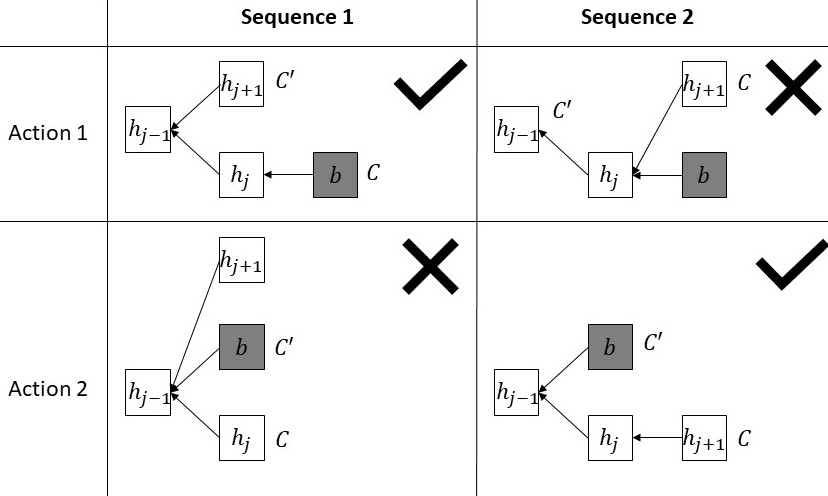}
    \caption{Blocktree for actions 1 and 2 under the sequences 1 and 2. Optimal actions for each sequence are marked with a tick.}
    \label{fig:lemma5_5}
\end{figure}

Now, assume that $\pi^*$ follows action 1. Then, under sequence 1, the optimal behavior for $\pi^*$ is to broadcast $h_{j+1}$ before $h_j$ becomes public at time $\tau^h_j+\Delta$, and, to prompt the honest miners to keep mining on $h_{j+1}$. Then, $\C'$ becomes the public longest chain, and, the adversary can balance the chains $\C$ and $\C'$ in the future using the private block $b$. However, if $\pi^*$ follows action 2, then, under sequence 1, $\C$ would not be leading $\C'$ via the private block, thus, making it harder for the adversary to maintain a balance between these two chains in the future. Hence, under sequence 1, for any sequence of mining times for the blocks after $h_{j+1}$, if $\pi^*$ following action 2 prevents the persistence of block $h_j$, so does $\pi^*$ following action 1. On the other hand, there exists a set $S_3$ of mining time sequences for the blocks after $h_{j+1}$ such that $P(S_3)>0$, and, under the sequences in $S_3$, following action 1 prevents the persistence of block $h_j$ whereas following action 2 does not. Since $P(S_1)\geq c>0$ for all $j$, the set $S_1 \text{x} S_2 \text{x} S_3$ has positive probability. Consequently, the worst attack $\pi^*$ does not follow action 2, implying that it follows action 1.

Next, observe that under sequence 2, $h_{j+1}$ comes to a higher depth than $h_j$. Hence, the optimal action for $\pi^*$ under sequence 2 is to follow action 2 as it enables the adversary to extend $\C'$ by one block using $b$. Action 1, on the other hand, does not help the adversary in its endeavor to maintain two parallel chains from block $h_{j-1}$ as demonstrated by Figure \ref{fig:lemma5_5}. Then, under sequence 2, for any sequence of mining times for the blocks after $h_{j+1}$, if $\pi^*$ following action 1 prevents the persistence of block $h_j$, so does $\pi^*$ following action 2. On the other hand, there exists a set $S'_3$ of mining time sequences for the blocks after $h_{j+1}$ such that $P(S'_3)>0$, and, under the sequences in $S'_3$, following action 2 prevents the persistence of block $h_j$ whereas following action 1 does not. Since $P(S_1)\geq c>0$ for all $j$, the set $S_1 \text{x} S'_2 \text{x} S'_3$ has positive probability. Consequently, the worst attack $\pi^*$ does not follow action 1, implying that it follows action 2. However, this is a contradiction as the worst attack $\pi^*$ can choose only one of the actions 1 and 2. Hence, there does not exist a worst attack $\pi^*$. 

\end{proof}

Finally, via the lemma \ref{no-worst-attack}, we observe that, for any given attack strategy $\pi$, there exists a set of mining time sequences with positive probability (which can be very small) under which $\pi$ is dominated by some other attack strategy. However, it is important to note that if we fix $\Delta$ to be some finite value and $\pi$ to be the private attack, probability of such atypical sets of mining time sequences go to zero as the parameter for persistence, $k$, goes to infinity. This is because, as we have seen in the previous sections, the private attack is the worst attack in terms of achieving the security threshold.

\end{document}